\documentclass[a4paper,11pt]{article}

\usepackage{graphicx}
\usepackage{epsfig}
\usepackage{natbib}

\usepackage{url}
\usepackage{amsmath,amsfonts,bm,mathrsfs,amssymb,amsthm}
\usepackage{subfig}
\usepackage{multirow}
\usepackage{array}
\usepackage{multicol}
\usepackage{color}
\usepackage{slashbox}
\usepackage[normalem]{ulem}
\usepackage{caption}

\newtheorem{theorem}{Theorem}

\newtheorem{lemma}{Lemma}

\newtheorem{definition}{Definition}

\newtheorem{remark}{Remark}
\newtheorem{example}{\bf Example}

\usepackage{bm}
\usepackage{pgfplots}
\usepackage{tikz}
\usepackage[framemethod=tikz]{mdframed}
\usepackage{lipsum}
\usepackage[symbol]{footmisc}
\usetikzlibrary{arrows,calc,shapes, snakes, intersections}

\usepgfplotslibrary{fillbetween}
\usetikzlibrary{patterns}

\begin{document}

\title{On Deterministic Optimal Mechanisms in a Two-Item Setting for Distributions with Nondecreasing Density}
\author{D.~Thirumulanathan\\ Department of Economic Sciences,\\ Indian Institute of Technology, Kanpur, India}

\maketitle

\begin{abstract}
Consider the problem of designing a revenue-optimal auction mechanism when two heterogeneous items are sold to a single buyer having independent valuations over the items. The distributions of the buyer's valuation for the items are assumed to have densities that are positive, nondecreasing, and continuously differentiable on their support sets $[c_i,c_i+b_i]$ in the positive axis. I prove that the optimal mechanism is deterministic if at least one of the minimum valuations (i.e., either $c_1$ or $c_2$) is sufficiently high. I provide a method to calculate the threshold of $(c_1,c_2)$ beyond which the optimal mechanism is deterministic. I also provide a sufficient condition on the distributions of buyer's valuations for which the individual sale mechanism is optimal.

I show that when $c_1$ is low and $c_2$ is high, it is optimal for the seller to sell item $2$ at the minimum valuation $c_2$, thus effectively reducing the problem to finding the optimal mechanism in the one-dimensional setting only for item $1$. I conjecture with promising preliminary results that this result can be extended to the three-item setting. Specifically, I conjecture that when $c_1$ and $c_2$ are low but $c_3$ is high, it is optimal for the seller to sell item $3$ at the minimum valuation $c_3$, thus effectively reducing the problem to finding the optimal mechanism in the two-dimensional setting for items $1$ and $2$.
\end{abstract}

\section{Introduction}\label{sec:intro}
Consider the problem of designing the revenue-optimal mechanism for selling two heterogeneous items to a single buyer. The buyer's valuation for the items, $z$, is his private information, but the distribution from which he picks his valuation, $f$, is assumed to be common knowledge. The seller aims to design a mechanism that maximizes his expected revenue, where the expectation is taken over the distribution $f$.

The general solution to this problem in the one-item setting was computed by \citet{Myerson81}. He showed that the revenue-optimal mechanism in this setting has a simple structure; it is just a {\em take-it-or-leave-it offer} for a reserve price that depends on $f$. However, finding a general solution in the two-item setting has been a notoriously hard problem. Furthermore, the revenue-optimal mechanism in this setting has shown a variety of structures. For example, the optimal mechanism for certain distributions has infinite menus (\cite{DDT13, DDT17}). Also, the optimal mechanism for the same distribution with different support sets can have many different structures (\cite{Thiru16, Thiru19}).

In this paper, I consider a particular class of distributions and show that the optimal mechanisms are deterministic when the minimum valuation of at least one of the items is high. More specifically, I assume that the distribution of buyer's valuation for item $i$ has a strictly positive, nondecreasing, and continuously differentiable density $f_i$ with a support set $[c_i,c_i+b_i]$. I analyze how the optimal mechanism varies as the support set $[c_1,c_1+b_1]\times[c_2,c_2+b_2]$ varies. The main result, Theorem \ref{thm:individual-sale}, shows that it is optimal for the auctioneer to sell the items individually using Myerson's revenue-optimal mechanism, either when $c_1$ is low and $c_2$ is high, or when $c_2$ is low and $c_1$ is high; and to sell the items as a bundle when both $c_1$ and $c_2$ are high. See Figure \ref{fig:optimal-illust}.

\begin{figure}[h!]
\centering
\begin{tikzpicture}[scale=0.4,font=\small,axis/.style={very thick, ->, >=stealth'}]
\draw [axis,thick,->] (0,-1)--(0,12);
\node [above] at (0,12) {$c_2$};
\draw [axis,thick,->] (-1,0)--(12,0);
\node [right] at (12,0) {$c_1$};
\node at (-0.3,-0.4) {$0$};
\draw [axis,thick,->] (2,3.5)--(2,6.5);
\node [rotate=90] at (1.5,5) {\tiny\bf Individual};
\draw [axis,thick,-] (0,7)--(4,7);
\draw [axis,thick,-] (0,7)--(0,11);
\draw [axis,thick,-] (0,11)--(4,11);
\draw [axis,thick,-] (4,11)--(4,7);
\draw [axis,thick,-] (2,7)--(2,11);
\node at (1,9) {\scriptsize$(0,1)$};
\node at (3,9) {\scriptsize$(1,1)$};
\draw [axis,thick,->] (3.5,2)--(6.5,2);
\node at (5,1.5) {\tiny\bf Individual};
\draw [axis,thick,-] (7,0)--(7,4);
\draw [axis,thick,-] (7,0)--(11,0);
\draw [axis,thick,-] (11,0)--(11,4);
\draw [axis,thick,-] (11,4)--(7,4);
\draw [axis,thick,-] (7,2)--(11,2);
\node at (9,1) {\scriptsize$(1,0)$};
\node at (9,3) {\scriptsize$(1,1)$};
\draw [axis,thick,->] (3.5,3.5)--(6.5,6.5);
\node [rotate=45] at (4.75,5.25) {\tiny\bf Bundle};
\draw [axis,thick,-] (7,7)--(7,11);
\draw [axis,thick,-] (7,11)--(11,11);
\draw [axis,thick,-] (11,11)--(11,7);
\draw [axis,thick,-] (11,7)--(7,7);
\draw [axis,thick,-] (7,10)--(10,7);
\node at (7.9,7.9) {\scriptsize$(0,0)$};
\node at (9.5,9.5) {\scriptsize$(1,1)$};
\end{tikzpicture}
\caption{A heuristic illustration of optimal mechanism for various values of $(c_1,c_2)$. It is optimal to sell the items individually, either when $c_1$ is low and $c_2$ is high, or when $c_2$ is low and $c _1$ is high. It is optimal to sell the items as a bundle when both $c_1$ and $c_2$ are high.}\label{fig:optimal-illust}
\end{figure}

The main result in the paper can be interpreted as follows. Starting from $(c_1,c_2)=(0,0)$, assume that the support set of the distribution goes to infinity either vertically or horizontally or both. I show that the optimal mechanism for all distributions under consideration ends up either as a bundle sale or as an individual sale, based on whether both $c_1$ and $c_2$ are high, or only one of them is high. The optimal mechanism thus ends up being deterministic either way. The interpretation is illustrated in Figure \ref{fig:optimal-illust}.

Furthermore, I show that for every $c_1$, there exists a $c_2^*(c_1)$ such that the optimal mechanism is deterministic for all $c_2\geq c_2^*(c_1)$. I also provide a method to calculate the threshold $c_2^*(c_1)$ for any given $c_1$. A similar assertion holds for every $c_2$. In addition, I also derive sufficient conditions on the densities $f_i$ for which the individual sale mechanism is optimal. The conditions are largely based on the virtual valuation functions and the hazard rate condition and are thus easily verifiable for any $f_i$.

Another interpretation of the main result (Theorem \ref{thm:individual-sale}) is as follows. When $c_1$ is low and $c_2$ is high, the seller finds it optimal to sell the item $2$ at its minimum valuation $c_2$. So the problem effectively reduces to finding the optimal mechanism in the one-dimensional setting only for item $1$. I conjecture that this property can be extended to the three-item setting. Specifically, I consider the example where the valuations for the items are distributed independently according to Unif$[0,1]\times[0,1]\times[c_3,c_3+1]$ with $c_3\geq3$, and show that the optimal mechanism in this example is to sell the item $3$ at its minimum valuation $c_3$, and items $1$ and $2$ according to the optimal mechanism in the two-dimensional setting. Using this example, I conjecture that when $c_1$ and $c_2$ are low but $c_3$ is high, the problem effectively reduces to finding the optimal mechanism in the two-dimensional setting for items $1$ and $2$.

\subsection{Method}
My method begins from the optimal menu theorem (\cite{DDT17}). The theorem provides a necessary and sufficient condition for the optimality of any mechanism with a finite number of constant allocation regions. The necessary and sufficient conditions are based on second-order stochastic dominance of measures defined on a two-dimensional space. These are conditions that are hard to verify in practice. In this paper, I consider the individual sale mechanism as a special case, and use optimal menu theorem to derive sufficient conditions on the support set parameters $(c_1,c_2)$ to certify the optimality of this mechanism.

More specifically, I restrict attention to distributions whose densities are positive, nondecreasing and continuously differentiable on their domains, and find that the individual sale mechanism is optimal either when $c_1$ is low and $c_2$ is high, or when $c_2$ is low and $c_1$ is high. Furthermore, I observe that the problem of finding the optimal mechanism in two-dimensional setting reduces to a problem in the one-dimensional setting. The technical contribution of this paper is in showing the second-order stochastic dominance of measures defined on a two-dimensional space for a wide class of distributions.

I then consider the bundle sale mechanism and calculate the values of $(c_1,c_2)$ for which the mechanism is optimal. The values are calculated in a straightforward manner using a result from \cite{Menicucci15} that provides sufficient conditions for the optimality of the bundle sale mechanism in particular. I show that bundle sale mechanism is optimal when both $c_1$ and $c_2$ are sufficiently high. Furthermore, I also provide a method to calculate the threshold values of $(c_1,c_2)$ from which either the individual sale or the bundle sale mechanism is optimal.

I finally consider the three-dimensional setting in order to check if analogous results can be obtained. I first show using optimal menu theorem that the bundle sale mechanism is optimal when all of $c_1$, $c_2$, and $c_3$ are high. I then consider a specific example when the valuations are uniformly distributed, and show using optimal menu theorem that when $(c_1,c_2)=(0,0)$ but $c_3\geq3$, the three-dimensional problem reduces to a two-dimensional problem. I thus show the second-order stochastic dominance of a measure defined on a {\em three-dimensional space}, but only for a specific example.

\subsection{Prior Work}
Design of multi-dimensional optimal mechanisms has been an area of interest in the literature for over four decades since Myerson's characterization of optimal mechanism in the one-dimensional setting in 1981 (\cite{Myerson81}). See \citet{RS03} for a comprehensive survey on this literature. In this subsection, I restrict attention to works that either derive the exact solutions in the multi-dimensional setting, or the works that derive qualitative results using virtual valuations or the hazard rate condition.

In the two-item setting, \citet{DDT13} and \citet{GK15} have computed the optimal mechanisms explicitly for a large class of distributions. In particular, \citet{DDT13} designed an algorithm to compute the optimal mechanism when the distribution gives rise to a {\em well-formed canonical partition}. However, they considered distributions whose support set is of the form $[0,b_1]\times[0,b_2]$. Optimal mechanism for distributions having a general support set, $[c_1,c_1+b_1]\times[c_2,c_2+b_2]$, has been derived in the literature only for a very few distributions. In this paper, I analyze the variation in the optimal mechanism when the support set of the distribution varies.

Regarding distributions with general support set, \citet{DDT17} prove the optimal menu theorem and provide explicit solutions for some example distributions with support set that are not bordered by the coordinate axes. \citet{Thiru16, Thiru19} restricted attention to uniform distribution, and derived the exact optimal mechanism for all possible support sets $[c_1,c_1+b_1]\times[c_2,c_2+b_2]$. In the $n$-item setting, \citet{GK14} have provided the exact solutions (for $n\leq 6$) when the distribution of buyer's valuation is uniform in $[0,1]^n$. \citet{BNR18} showed that the optimal deterministic mechanism in the two-item setting for any distribution of the buyer's valuations is submodular and monotone, but may be supermodular and non-monotone when the number of items is more than two. \citet{KKR26} considered a unit-demand setting in $n$-items and showed that selling each item at an identical price is optimal when the distribution of buyer's valuation is scale-monotone. \citet{YWJY19} considered a multi-item multi-buyer setting, and showed that for every incentive compatible stochastic mechanism, there exists an equivalent deterministic mechanism that delivers the same social surplus.

Several papers in the literature have obtained qualitative results on optimal mechanism using virtual valuation function and the hazard rate condition. Papers by \citet{Menicucci15}, \citet{Pavlov11}, and \citet{HH15, HH21} show optimality of certain mechanisms using virtual valuations of the buyer. Papers by \citet{Pav10}, \citet{TW17}, \cite{WT14}, and \citet{MV06} consider distributions satisfying the hazard rate condition, and show qualitative results on optimal mechanism in the two-item setting. In my paper, both virtual valuations and the hazard rate condition are used to establish the optimality of deterministic mechanisms.

There is also a vast literature on computing the approximations of the optimal mechanism. See \cite{BGN17}, \cite{BILW14}, \cite{CDW16, CZ17, CMS15, CM16}, \cite{HN12}, \cite{Yao14} for relevant literature on approximate solutions. Some recent literature also deal with computation of optimal mechanisms using deep learning methods. See \cite{Dutting19, Dutting21, Dutting24}, \cite{You23}, for example. My paper, however, focuses on deriving the exact solutions analytically.

The rest of the paper is organized as follows. Section \ref{sec:prelim} describes the two-item one-buyer setting considered in the paper along with the definitions of the terms that will be used in the rest of the paper. The main results regarding the optimal deterministic mechanisms for distributions having nondecreasing densities are stated and proved in Section \ref{sec:main-results}. The extensions to the three-item setting are described in Section \ref{sec:ext}. The paper is concluded in Section \ref{sec:conclusion}.

\section{Preliminaries}\label{sec:prelim}
Consider the problem of designing the revenue-optimal mechanisms in an $n$-item one-buyer setting when the exact valuation of the buyer for the items, $z=(z_1,\ldots,z_n)$, is his private information, but the distribution from which the valuation is picked is common knowledge. I assume that $z_i$, $i=1,2,\ldots,n$, are absolutely continuous random variables that are independent of each other, and that $z_i\sim f_i$, and $(z_1,\ldots,z_n)\sim f(=\times_{i=1}^n f_i)$. The cumulative distribution functions are denoted by $F_i$. The support set of $f_i$ is assumed to be $[c_i,c_i+b_i]$ for some $c_i, b_i\geq 0$. Let $D=\times_{i=1}^n[c_i,c_i+b_i]$.

By revelation principle, we restrict attention to direct mechanisms where the design of an auction mechanism is equivalent to the design of an allocation function $q:D\rightarrow[0,1]^n$ and a payment function $t:D\rightarrow\mathbb{R}_+$. The seller designs the mechanism $(q,t)$ to maximize his expected revenue. Observe that the mechanism is not restricted to be deterministic; the buyer can be allocated item $i$ with some probability $q_i$ based on his valuations. Given that the exact valuation of the buyer is not known to the seller, he designs $(q,t)$ based on the report of the buyer $\hat{z}=(\hat{z}_1,\ldots,\hat{z}_n)$. The utility of the buyer, $u:D^2\rightarrow\mathbb{R}$, is assumed to be of the form
$$
  u(\hat{z}_1,\ldots,\hat{z}_n;z_1,\ldots,z_n)=z_1q_1(\hat{z})+\cdots+z_nq_n(\hat{z})-t(\hat{z}).
$$

The problem of optimal mechanism design is where the seller wishes to design $(q,t)$ so that his expected revenue is maximized, subject to two constraints: {\em incentive compatibility} and {\em individual rationality}. The constraints are defined as follows.
\begin{definition}
A mechanism $(q,t)$ is {\em Incentive Compatible} (IC) if $u(\hat{z};z)\leq u(z;z)$ for all $\hat{z},z\in D$. In other words, the buyer cannot improve his utility by misreporting his valuation.
\end{definition}
\begin{definition}
A mechanism $(q,t)$ is {\em Individually Rational} (IR) if $u(z;z)\geq 0$ for all $z\in D$. In other words, the buyer gets a nonnegative utility when he reports his true valuation.
\end{definition}

In this paper, I show that the optimal mechanism $(q,t)$ is deterministic in the two-item setting when (i) the distribution of the buyer's valuations has a density that is positive, nondecreasing and continuously differentiable, and (ii) the support sets of the densities satisfy certain conditions. The proof of optimality crucially uses the {\em optimal menu theorem} from \cite{DDT17}. I first introduce the related measure-theoretic preliminaries before stating the theorem.

Define a signed measure $\bar{\mu}$ on $D$. The density of $\bar{\mu}$ has the following three components:
\begin{align}
  &\mbox{($n$-dimensional density) }\mu(z)=-(n+1)f(z)-z\cdot\nabla f(z),\,\forall z\in D, \label{eqn:mu} \\
  &\mbox{($(n-1)$-dimensional density) }\mu_s(z)=(z\cdot\hat{n}(z))f(z),\,\forall z\in\partial D, \label{eqn:mu-s} \\
  &\mbox{(Point density) }\mu_p(c_1,c_2)=1, \label{eqn:mu-p}
\end{align}
where $\partial D$ refers to the boundary of the rectangle $D$, and $\hat{n}(z)$ refers to the outer unit normal vector at the boundary point $z\in\partial D$. I now recall the definitions of stochastic orders used in the paper.
\begin{definition}
Consider a signed measure $\alpha$ defined on $D$. We say that $\alpha$ dominates zero in the first-order ($\alpha\succeq_10$) if $\int_Dh\,d\alpha\geq0$ holds for all increasing functions $h:D\rightarrow\mathbb{R}$.
\end{definition}
\begin{definition}
Consider a signed measure $\alpha$ defined on $D$. We say that $\alpha$ dominates zero in $cvx(\vec{v})$-order ($\alpha\succeq_{cvx(\vec{v})}0)$ for some $\vec{v}=(v_1,\ldots,v_n)\in\{-1,0,1\}^n$ if $\int_Dh\,d\alpha\geq0$ for all $h:D\rightarrow\mathbb{R}$ that is (i) convex, (ii) nondecreasing in coordinate $i$ if $v_i=1$, and (iii) nonincreasing in coordinate $i$ if $v_i=-1$.
\end{definition}

Define a menu of a mechanism as the set of allocation choices available in the mechanism. A formal definition of a menu $M$ is as follows.
$$
  M=\{(\hat{q},\hat{t}):\exists z\in D, (\hat{q},\hat{t})=(q(z),t(z))\}.
$$

Define constant allocation region $R_{\hat{q},\hat{t}}=\{z:(q(z),t(z))=(\hat{q},\hat{t})\}$. It represents the set of valuations for which the buyer has a constant allocation $\hat{q}$ and a constant payment $\hat{t}$. Let $\bar{\mu}|_{R_{\hat{q},\hat{t}}}$ refer to the $\bar{\mu}$-measure restricted to a particular constant allocation region $R_{\hat{q},\hat{t}}$. The optimal menu theorem is as follows.
\begin{theorem} \cite[Thm.~3]{DDT17}\label{thm:opt_menu}
Consider a mechanism with $|M|<\infty$. For a constant allocation region $R_{\hat{q},\hat{t}}$, define $\vec{v}(\hat{q},\hat{t})=(v_1(\hat{q},\hat{t}),\ldots,v_n(\hat{q},\hat{t}))$ as
$$
  v_i(\hat{q},\hat{t})=\begin{cases}1&\mbox{if }\hat{q}_i=0\\-1&\mbox{if }\hat{q}_i=1\\0&\mbox{otherwise.}\end{cases}
$$
Then, the mechanism is optimal if and only if $\bar{\mu}|_{R_{\hat{q},\hat{t}}}\preceq_{cvx(\vec{v}(\hat{q},\hat{t}))}0$ for every constant allocation region $R_{\hat{q},\hat{t}}$.
\end{theorem}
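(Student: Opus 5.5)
The plan is to reduce optimality to a linear program over utility functions. By the standard envelope argument, I would write the revenue as a linear functional $\int_D u\,d\bar\mu$ of the buyer's utility $u$. Then I would verify the dominance condition region by region, which gives sufficiency. Necessity would follow from a local menu perturbation.

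\textbf{Step 1 (revenue as $\int u\,d\bar\mu$).} Any IC mechanism corresponds to a utility $u(z)=u(z;z)$ that is convex, with $\nabla u=q\in[0,1]^n$ a.e. and $t=z\cdot\nabla u-u$. IR means $u\ge 0$, and at the optimum we may take $\min u=u(c)=0$, where $c=(c_1,\ldots,c_n)$. The revenue is
\[
\int_D (z\cdot\nabla u-u)f\,dz .
\]
I would apply the divergence theorem to $\int_D u\,z f\cdot\nabla\,dz$, using $\nabla\cdot(zf)=nf+z\cdot\nabla f$. This gives the interior density $-(n+1)f-z\cdot\nabla f$ and the boundary density $(z\cdot\hat n)f$. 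The term $-u(c)$ is taken care of by the unit point mass. Since $\bar\mu(D)=0$, adding a constant to $u$ does not change the objective, so the IR normalization causes no difficulty. Hence optimality of a mechanism with utility $u^*$ is equivalent to $\int_D (u-u^*)\,d\bar\mu\le 0$ for every feasible $u$.

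\textbf{Step 2 (sufficiency).} Since $|M|<\infty$, the regions $R_{\hat q,\hat t}$ partition $D$ (up to null sets), and on each region $u^*(z)=\hat q\cdot z-\hat t$ is affine. Fix a feasible $u$ and put $h=u-u^*$ on $R=R_{\hat q,\hat t}$; I would extend $h$ by the same formula to all of $D$ when testing against the restricted measure.
\begin{itemize}
\item $h$ is convex, because $u$ is convex and $u^*$ is affine on $R$.
\item If $\hat q_i=0$, then $\partial_i h=\partial_i u\ge 0$.
\item If $\hat q_i=1$, then $\partial_i h=\partial_i u-1\le 0$.
\end{itemize}
So $h$ is exactly a test function for $cvx(\vec v(\hat q,\hat t))$. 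The hypothesis then gives $\int_R h\,d\bar\mu\le 0$, and summing over the finitely many regions gives $\int_D(u-u^*)\,d\bar\mu\le 0$. The point mass at $c$ lies in exactly one region, so it is counted once.

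\textbf{Step 3 (necessity, the main obstacle).} Suppose the condition fails for some region $R$. Then there is a convex $h$ with the prescribed monotonicity and $\int_R h\,d\bar\mu>0$. The difficulty is to turn $h$ into a globally feasible perturbation $u^*+\varepsilon\tilde h$. Such a perturbation must stay convex on all of $D$, keep its gradients in $[0,1]^n$, and leave the other regions essentially unaffected.

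I would first try the following construction.
\begin{enumerate}
\item Approximate $h$ uniformly on $R$ by a finite maximum of affine functions with the same monotonicity properties. This preserves the strict inequality.
\item Replace the menu item $(\hat q,\hat t)$ by the finitely many items $(\hat q+\varepsilon a_k,\ \hat t-\varepsilon b_k)$ coming from those affine pieces. The sign constraints on $a_k$ keep $\hat q+\varepsilon a_k\in[0,1]^n$; for coordinates with $0<\hat q_i<1$ this holds for small $\varepsilon$.
\item Define the new utility as the upper envelope of all menu items, which is automatically convex and IC.
\end{enumerate}
Because the original menu is finite, the other items are strictly preferred away from the region boundaries. Hence for small $\varepsilon$ the new utility differs from $u^*+\varepsilon h\mathbf 1_R$ only on an $O(\varepsilon)$-neighbourhood of $\partial R$, with differences of size $O(\varepsilon)$ there. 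The resulting change in revenue is therefore $\varepsilon\int_R h\,d\bar\mu+o(\varepsilon)>0$, contradicting optimality.

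The technical care lies in this $o(\varepsilon)$ estimate near region boundaries, including lower-dimensional boundary mass of $\bar\mu$. It also requires re-normalizing so that $u(c)=0$ still holds, which is harmless since $\bar\mu(D)=0$.
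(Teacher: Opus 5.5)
\textbf{Overall.} The paper does not prove this theorem; it imports it from \cite{DDT17}, where it is derived from their strong-duality theory. So your direct primal route is necessarily different from anything in the paper.

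\textbf{Sufficiency (Step~2).} This is a correct, elementary proof, with two small provisos. First, the regions must partition $D$ exactly, not ``up to null sets'', because $\bar{\mu}$ has an atom at $c=(c_1,\ldots,c_n)$ and mass on $\partial D$. Second, what you actually prove is $\int_D u\,d\bar{\mu}\le\int_D u^*\,d\bar{\mu}$. That yields revenue-optimality only if $u^*(c)=0$: lowering every price by the same constant leaves every region and every condition unchanged but loses revenue.

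\textbf{The gap in Step~3.} The problem is the claim that the revenue change is $\varepsilon\int_R h\,d\bar{\mu}+o(\varepsilon)$. Write $u^*_R=\hat{q}\cdot z-\hat{t}$, let $u^*_{-R}$ be the envelope of the other items, and set $K=\max_D|H|$. Then $u_\varepsilon-u^*=\varepsilon H\mathbf{1}_R$ off the strip $S_\varepsilon=\{|u^*_R-u^*_{-R}|\le\varepsilon K\}$, and $|u_\varepsilon-u^*|\le\varepsilon K$ on it. So the error is at most $2\varepsilon K|\bar{\mu}|(S_\varepsilon)$, and $|\bar{\mu}|(S_\varepsilon)\downarrow|\bar{\mu}|(S_0)$, where $S_0$ is the tie set of $R$'s item against the rest of the menu. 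The error is $o(\varepsilon)$ only if $|\bar{\mu}|(S_0)=0$. Since $\bar{\mu}$ is not absolutely continuous, this fails whenever $S_0$ contains the atom $c$ or a piece of $\partial D$ of positive surface measure. That is exactly what happens when some item is priced at its minimum valuation.

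\textbf{This is not a removable technicality.} In that situation the ``only if'' direction is false as literally stated. Take the mechanism of Theorem~\ref{thm:individual-sale}(a) with $p_1=\tilde{\phi}_1^{-1}(0)$, written as the menu $\{(0,0;0),(1,0;p_1),(0,1;c_2),(1,1;p_1+c_2)\}$, and break the ties on $\{z_2=c_2\}$ against item~2.
\begin{itemize}
\item The utility, and hence the revenue, is the optimal one.
\item The region of the item $(0,0;0)$ is $R=[c_1,p_1)\times\{c_2\}$, with $\bar{\mu}(R)=1-c_2f_2(c_2)F_1(p_1)=1-c_2/c_2^{th_1}(c_1)<0$ for $c_2>c_2^*(c_1)$.
\item So the admissible test function $h\equiv-1$ violates the condition.
\item Correspondingly, your perturbation (raising the price of $(0,0)$ to $\varepsilon$) changes nothing, because those types switch to $(0,1;c_2)$ at no cost.
\end{itemize}

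\textbf{What you need instead.} You must do one of two things.
\begin{itemize}
\item Add an explicit hypothesis that each region's tie set is $|\bar{\mu}|$-null. Under it your argument closes. This is how the paper applies the theorem: its two regions $A$ and $W$ are separated by the $|\bar{\mu}|$-null line $z_1=p_1$.
\item Prove that \emph{some} assignment of tie points to regions satisfies the conditions. Perturbing one item at a time does not give this.
\end{itemize}

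\textbf{A smaller point.} The uniform approximation in Step~3(1) requires $h$ to be continuous. The paper's test class also contains convex functions that jump upward on $\partial D$, where $\bar{\mu}$ carries mass. An example is $\mathbf{1}\{z_2=c_2\}$, which is convex and nonincreasing in $z_2$. Either restrict to continuous test functions, or approximate boundedly and pointwise by steep pieces such as $\max\bigl(0,1-(z_2-c_2)/\delta\bigr)$ and use dominated convergence.
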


I use this theorem to show the optimality of individual sale mechanisms in the two-item setting considered in this paper. The individual sale mechanism involves selling each of the two items separately using the optimal mechanism in the one-item setting. Defining the {\em modified virtual valuation function} $\tilde{\phi}_i:[c_i,c_i+b_i]\rightarrow\mathbb{R}$ as $\tilde{\phi}_i(z_i)=z_if_i(z_i)-(1-F_i(z_i))$, we have the following result from \cite{Myerson81} regarding the one-item setting.\footnote{\citet{Myerson81} defines $\phi_i:[c_i,c_i+b_i]\rightarrow\mathbb{R}$ as $\phi_i(z_i)=z_i-\frac{1-F_i(z_i)}{f_i(z_i)}$ and calls it the virtual valuation function. We consider a modification of this function $\tilde{\phi}_i$ as the {\em modified virtual valuation} function. The modified function is more relevant in our analysis and is used throughout the paper.}
\begin{theorem}\label{thm:myerson}
The optimal mechanism in the one-item setting is a take-it-or-leave-it offer for a reserve price $p=\arg\max_{z_1}(z_1(1-F_1(z_1))$. If $\tilde{\phi}_1(\cdot)$ is strictly increasing and $\tilde{\phi}_1^{-1}(0)$ exists, then $p=\arg\max_{z_1}(z_1(1-F_1(z_1))=\tilde{\phi}_1^{-1}(0)$.
\end{theorem}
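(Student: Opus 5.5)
The statement to prove is Theorem \ref{thm:myerson}: in the one-item case the optimal mechanism is a posted price $p=\arg\max_{z_1} z_1(1-F_1(z_1))$, and when $\tilde{\phi}_1$ is strictly increasing with a zero, $p=\tilde{\phi}_1^{-1}(0)$. I will proceed in two steps. First I reduce the problem to a revenue formula in terms of the allocation $q$. Second I optimize that formula pointwise.

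\textbf{Step 1: envelope reduction.} Write $u(z)=u(z;z)$. IC makes $u$ a maximum of affine functions of $z$, so $u$ is convex and nondecreasing with $u'(z)=q(z)$ almost everywhere, and $q$ is nondecreasing. Since $t=zq(z)-u(z)$ and $u(z)=u(c_1)+\int_{c_1}^{z}q(s)\,ds$, integrating by parts against $f_1$ gives
$$
E[t]=\int_{c_1}^{c_1+b_1}\big(zf_1(z)-(1-F_1(z))\big)q(z)\,dz-u(c_1)=\int \tilde{\phi}_1(z)q(z)\,dz-u(c_1).
$$
IR forces $u(c_1)\ge 0$, and at the optimum we take $u(c_1)=0$. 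Conversely, any nondecreasing $q$ with $u(c_1)=0$ yields an IC and IR mechanism.

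\textbf{Step 2: optimize over monotone $q$.} The objective is linear in $q$. The feasible set of nondecreasing $q:[c_1,c_1+b_1]\to[0,1]$ is convex, and its extreme points are the step functions $\mathbf{1}[z\ge p]$. This is because any such $q$ can be written as $q(z)=\int_0^1\mathbf{1}[q(z)\ge s]\,ds$, a mixture of thresholds. Hence some threshold mechanism, i.e. a posted price $p$, is optimal. For a threshold $p$ the revenue is
$$
\int_p^{c_1+b_1}\tilde{\phi}_1=\Big[-z(1-F_1(z))\Big]_p^{c_1+b_1}=p(1-F_1(p)).
$$
So the optimal price maximizes $z_1(1-F_1(z_1))$.

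\textbf{Second claim.} The derivative of $z(1-F_1(z))$ is $-\tilde{\phi}_1(z)$. If $\tilde{\phi}_1$ is strictly increasing and vanishes at $p^*$, then the revenue function increases on $[c_1,p^*]$ and decreases afterwards, so $p^*$ is the unique maximizer.

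\textbf{Main obstacle.} The delicate point is the envelope and integration-by-parts step for a general (possibly nondifferentiable) $q$. This requires absolute continuity of the convex $u$ and a Fubini argument, $\int_{c_1}^{c_1+b_1}\int_{c_1}^{z} q\,f_1=\int_{c_1}^{c_1+b_1} q(s)(1-F_1(s))\,ds$, which is routine once written out. The mixture argument in Step 2 then avoids having to handle the monotonicity constraint separately.
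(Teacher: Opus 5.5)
Your proof is correct. The paper gives no proof of this statement; it quotes it from \citet{Myerson81}. The only piece the paper re-derives is the identity $\frac{d}{dz_1}z_1(1-F_1(z_1))=-\tilde{\phi}_1(z_1)$, in the proof of Lemma~\ref{prop:c2-th-3}, which is exactly your second claim.

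Relative to Myerson's own argument, your Step~1 is the same envelope formula $E[t]=\int\tilde{\phi}_1q\,dz-u(c_1)$, but your Step~2 is genuinely different. Myerson maximizes the virtual surplus pointwise and enforces monotonicity of $q$ by ironing. That is trivial here when $\tilde{\phi}_1$ is strictly increasing, and gives $q=\mathbf{1}[z_1\geq\tilde{\phi}_1^{-1}(0)]$ at once. You instead decompose $q=\int_0^1\mathbf{1}[q\geq s]\,ds$ into thresholds, i.e., you show any monotone allocation is a lottery over posted prices, in the style of Riley and Zeckhauser.

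What each route buys:
\begin{itemize}
\item Your route proves the first sentence of the theorem for an arbitrary $f_1$, with no regularity and no ironing. It uses monotonicity of $\tilde{\phi}_1$ only to locate the maximizing price.
\item Myerson's route gives the characterization through $\tilde{\phi}_1$ directly and extends to many bidders.
\end{itemize}

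Two small remarks. First, the extreme-point language is unnecessary. Each level set $\{q\geq s\}$ is an up-interval with left endpoint $p_s\in[c_1,c_1+b_1]$; an empty set corresponds to $p_s=c_1+b_1$. The averaging identity $\int\tilde{\phi}_1q\,dz=\int_0^1R(p_s)\,ds\leq\max_pR(p)$, with $R(p)=p(1-F_1(p))$, already does all the work. Second, locating the maximizer does not require differentiating $R$. Since $R(p)=\int_p^{c_1+b_1}\tilde{\phi}_1$, you have $R(p^*)-R(p)=-\int_p^{p^*}\tilde{\phi}_1>0$ for $p<p^*$, and $R(p^*)-R(p)=\int_{p^*}^{p}\tilde{\phi}_1>0$ for $p>p^*$.
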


I use the following result from \cite{Menicucci15} to show the optimality of bundle sale mechanism in the setting considered in this paper.
\begin{theorem}\cite[Prop.~3]{Menicucci15}\label{thm:menicucci}
Consider that the distributions $f_1$ and $f_2$ are positive, differentiable, have bounded derivatives, and are such that (i) $\mu(z)\leq0$ for all $z\in D$, and (ii) $\tilde{\phi}_i(z_i)\geq 0$ for all $z_i\in[c_i,c_i+b_i]$, $i=1,2$. Then, the optimal mechanism is to sell the items as a bundle.
\end{theorem}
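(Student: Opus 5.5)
This result is quoted from \cite{Menicucci15}, but I would re-derive it with the optimal menu theorem (Theorem \ref{thm:opt_menu}). The bundle sale mechanism at price $p$ has a two-element menu, so the theorem applies directly. The two regions are
\[
Z=\{z\in D: z_1+z_2<p\},\qquad W=\{z\in D: z_1+z_2\geq p\}.
\]
On $Z$ the allocation is $(0,0)$ and on $W$ it is $(1,1)$, and $p$ is chosen to maximize $p\,\Pr(z_1+z_2\geq p)$. By Theorem \ref{thm:opt_menu} it suffices to prove two things:
\begin{itemize}
\item $\bar\mu|_W\preceq_{cvx(-1,-1)}0$;
\item $\bar\mu|_Z\preceq_{cvx(1,1)}0$.
\end{itemize}
The case $Z=\emptyset$, i.e.\ $p=c_1+c_2$, is possible and makes the second condition vacuous.

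The first step is bookkeeping. Integrating $\mu$ by parts on $D$ shows that $\bar\mu(D)=0$. The first-order condition for $p$ then gives $\bar\mu(W)=0$, and hence $\bar\mu(Z)=0$. This is the zeroth-moment condition that any $cvx$-type dominance requires, and it is the only place where the choice of $p$ enters. Next, I would compute the measure of each strip explicitly. Fix $z_1$, take the vertical segment $\{z_1\}\times[t,c_2+b_2]$, and add the outward boundary density on the top edge. The total $\bar\mu$-mass is
\[
f_1(z_1)\,\tilde\phi_2(t)+(\text{a term in }\partial_{z_1}(z_1f_1)\text{ integrated over the strip}).
\]
The analogous formula holds along horizontal segments. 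Hypothesis (ii), $\tilde\phi_i\geq0$, is exactly what makes these upper tails of $\bar\mu$ nonnegative. Hypothesis (i), $\mu\leq0$, guarantees that all positive mass lies on the top and right edges, where $z\cdot\hat n>0$. The negative mass lies in the interior and, for the complementary tails, on the left and bottom edges together with the corner atom.

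On $W$, I would show the stronger statement that $\bar\mu|_W$ is first-order dominated by zero in the decreasing direction. That is, $\int h\,d\bar\mu|_W\leq0$ for every $h$ that is nonincreasing in both coordinates, and this immediately implies the $cvx(-1,-1)$ condition. The certificate is a transport plan that moves the positive top- and right-edge mass down or left onto the interior negative mass inside $W$. For a nonincreasing $h$ each such move does not decrease $h$, so the total cannot increase. On $Z$ I would argue symmetrically. The corner atom at $(c_1,c_2)$, together with the positive mass carried in along the diagonal, is pushed upward and rightward to cover the negative interior and bottom/left-edge mass. Here I would use convexity only through the balance $\bar\mu(Z)=0$, and the monotone transport handles the rest.

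The main obstacle is the diagonal cut $z_1+z_2=p$. Vertical or horizontal transport that starts on the top edge may cross into $Z$ before it has absorbed enough negative mass, so the strip-by-strip tail positivity from (ii) does not by itself stay inside $W$. My first attempt would be to transport along the lines $z_1=\text{const}$ only, restricted to $W$. On each such line I would check that the cumulative mass from the top down to the diagonal is nonnegative at every level; this again reduces to $\tilde\phi_2\geq0$ and $\mu\leq0$, because truncating at the diagonal removes only negative interior mass. Any leftover imbalance I would then route horizontally along the diagonal boundary. The global balance $\bar\mu(W)=0$ guarantees that this leftover cancels in aggregate. Verifying that this routing is monotone is the step I would expect to need the most care.
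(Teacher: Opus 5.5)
The paper does not prove this statement; it is quoted from Menicucci et al. Its proofs of Lemma \ref{prop:W-region} and Theorem \ref{thm:three-high} do, however, run the same optimal-menu argument you set up: regions $Z,W$ of the bundle sale, $\bar\mu(Z)=\bar\mu(W)=0$ from the choice of $p$, first-order dominance on $W$, and the corner atom on $Z$. So your overall architecture is right.

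The gap is the $W$ step. You leave it open, and the route you propose for it fails. You claim that along each line $z_1=\mathrm{const}$ the cumulative mass from the top edge down to level $t$ is nonnegative ``by $\tilde\phi_2\ge0$ and $\mu\le0$''. Your own strip formula gives this cumulative mass as $f_1(z_1)\tilde\phi_2(t)-(z_1f_1)'(z_1)\,(1-F_2(t))$. The second term is negative and is controlled by neither hypothesis; (i) only says interior mass is nonpositive, which is the wrong direction here.

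Concretely, take the uniform density on $[1,2]^2$. It satisfies (i) and (ii), and its optimal bundle price is $p=2+\frac{\sqrt{10}-2}{3}\approx2.387$. For every $z_1\ge p-1$ the vertical segment of $W$ runs down to the bottom edge. Along it the cumulative mass density is $3t-4$, which is negative for $t<\frac43$. So the deficit is a bulk effect on essentially every vertical line, and it has to be supplied by right-edge mass moving leftward. It is not a leftover localized at the diagonal, so ``vertical transport plus a correction along the diagonal'' cannot work as described.

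The missing idea is to skip the explicit transport and check the upper-set criterion, exactly as in Lemma \ref{prop:W-region}. Since $\bar\mu(W)=0$, it suffices to show $\bar\mu|_W(X)\ge0$ for every increasing $X\subseteq W$. For $a_i>c_i$ one has
\[
\bar\mu\bigl([a_1,c_1+b_1]\times[a_2,c_2+b_2]\bigr)=\tilde\phi_1(a_1)\,(1-F_2(a_2))+a_2f_2(a_2)\,(1-F_1(a_1)).
\]
This reduces to $\tilde\phi_1(a_1)$ when $a_2=c_2$ and to $\tilde\phi_2(a_2)$ when $a_1=c_1$, and all of these are $\ge0$ by (ii). For a rectangle other than $D$, restricting to $W$ removes only interior or left/bottom-edge mass, which is $\le0$. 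For $D$ itself, $\bar\mu|_W(D)=\bar\mu(W)=0$. Passing from the bounding rectangle of an increasing $X$ to $X$ itself likewise removes only such nonpositive mass.

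On $Z$ you must also rule out that the diagonal reaches the top or right edge; otherwise $Z$ carries positive edge mass and your monotone argument breaks. Let $R(p)=p\Pr(z_1+z_2\ge p)$ and let $g$ be the density of $z_1+z_2$. Hypothesis (ii) gives $R'(p)=\Pr(z_1+z_2\ge p)-p\,g(p)<0$ on $[c_1+c_2+\min(b_1,b_2),\,c_1+c_2+b_1+b_2)$, so this cannot happen. After that, the atom is the only positive mass in $Z$; there is no mass ``along the diagonal''. Hence $\int_Z h\,d\bar\mu\le h(c_1,c_2)\,\bar\mu(Z)=0$ for every nondecreasing $h$, with no convexity needed. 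Finally, $Z=\emptyset$ never occurs, because $g(c_1+c_2)=0$ gives $R'(c_1+c_2)=1>0$.
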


\section{Main Results}\label{sec:main-results}
I begin with the setting where two heterogeneous items are sold to one buyer. I consider that the distributions of the valuations of the buyer satisfy some sufficient conditions, and show that the optimal mechanism is deterministic when the minimum valuation of at least one of the items is sufficiently high.
\subsection{Optimal Mechanisms in the Two-Item Setting}\label{sec:two-item}
Consider two distributions whose densities are given by $g_i:[0,b_i]\rightarrow\mathbb{R}_+$, $i=1,2$, each of which is strictly positive, nondecreasing and continuously differentiable in its domain. The distribution of buyer's valuation for item $i$ is given by
$$
  f_i(z_i)=\begin{cases}g_i(z_i-c_i)&\mbox{if }z_i\in[c_i,c_i+b_i],\\0&\mbox{else},\end{cases}
$$
for some $c_i\geq0$. A simple example of such a setting is when $z_i\sim\mbox{Unif}[c_i,c_i+b_i]$, where we have $g_i(z_i)=\frac{1}{b_i}$ when $z_i\in[0,b_i]$ and $0$ otherwise. Observe that $g_i$'s are strictly positive, nondecreasing and continuously differentiable in their domains.

I analyze how the optimal mechanism varies with $(c_1,c_2)$. In other words, I fix a base distribution $g(\cdot)$, and analyze how the optimal mechanism varies when the base distribution remains the same but the support set of the distribution varies. I begin by proving certain properties of the $n$-dimensional density $\mu$ (defined in \eqref{eqn:mu}) and the modified virtual valuation function $\tilde{\phi}_i(\cdot)$ for all distributions under consideration.
\begin{lemma}\label{lem:mu-phi-i}
Consider the density functions $g_1$ and $g_2$ to be strictly positive, nondecreasing, and continuously differentiable. Let $f_i:[c_i,c_i+b_i]\rightarrow\mathbb{R}_+$ be $f_i(z_i)=g_i(z_i-c_i)$. Then,
\begin{enumerate}
 \item[(a)] For every $c_1,c_2\geq0$, we have $\mu(z)<0,\,\forall z\in D$.
 \item[(b)] $\tilde{\phi}_i(\cdot)$ is a strictly increasing function for every $c_i\geq0$.
 \item[(c)] For every $c_i<\frac{1}{g_i(0)}$, $\tilde{\phi}_i^{-1}(0)$ exists and is unique. Furthermore, $\tilde{\phi}_i^{-1}(0)\in(c_i,c_i+b_i)$.
\end{enumerate}
\end{lemma}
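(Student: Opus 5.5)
All three parts follow by direct computation from the product form $f(z)=f_1(z_1)f_2(z_2)$ with $f_i(z_i)=g_i(z_i-c_i)$, using only positivity and monotonicity of $g_i$ and $z_i\geq c_i\geq0$ on $D$.

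For part (a), take $n=2$ in \eqref{eqn:mu}:
$$\mu(z)=-3f_1(z_1)f_2(z_2)-z_1f_1'(z_1)f_2(z_2)-z_2f_1(z_1)f_2'(z_2).$$
Since $g_i$ is nondecreasing and continuously differentiable, $f_i'(z_i)=g_i'(z_i-c_i)\geq0$. On $D$ we have $z_i\geq c_i\geq0$ and $f_i>0$, so the last two terms are nonpositive. The first term is strictly negative. Hence $\mu(z)<0$ on all of $D$, for every $c_1,c_2\geq0$.

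For part (b), differentiate $\tilde{\phi}_i(z_i)=z_if_i(z_i)-(1-F_i(z_i))$:
$$\tilde{\phi}_i'(z_i)=f_i(z_i)+z_if_i'(z_i)+f_i(z_i)=2f_i(z_i)+z_if_i'(z_i).$$
This is at least $2f_i(z_i)>0$ on $[c_i,c_i+b_i]$, by the same sign facts. So $\tilde{\phi}_i$ is strictly increasing for every $c_i\geq0$.

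For part (c), evaluate $\tilde{\phi}_i$ at the endpoints of $[c_i,c_i+b_i]$:
\begin{itemize}
\item at the left endpoint, $\tilde{\phi}_i(c_i)=c_ig_i(0)-1$, which is strictly negative exactly when $c_i<1/g_i(0)$;
\item at the right endpoint, $\tilde{\phi}_i(c_i+b_i)=(c_i+b_i)g_i(b_i)-0$, which is strictly positive because $b_i>0$ (needed for $g_i$ to be a density) and $g_i(b_i)>0$.
\end{itemize}
$\tilde{\phi}_i$ is continuous, since $f_i$ and $F_i$ are continuous. The intermediate value theorem then gives a zero in the open interval $(c_i,c_i+b_i)$. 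Strict monotonicity from (b) makes $\tilde{\phi}_i$ injective, so the zero is unique and $\tilde{\phi}_i^{-1}(0)$ is well defined.

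The argument is elementary, and the only point needing care is the endpoint of the support. Differentiability and positivity there should be read as one-sided, since $g_i$ is given on the closed interval $[0,b_i]$. Once that is accepted, the endpoint evaluations in (c) are legitimate.
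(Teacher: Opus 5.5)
Your proposal is correct and matches the paper's proof essentially line by line: the same sign argument on $\mu(z)=-3f_1f_2-z_1f_1'f_2-z_2f_1f_2'$ for (a), the same derivative $\tilde{\phi}_i'(z_i)=2f_i(z_i)+z_if_i'(z_i)>0$ for (b), and for (c) the same endpoint evaluations combined with continuity and the strict monotonicity from (b). Your explicit note that $b_i>0$ is what makes $\tilde{\phi}_i(c_i+b_i)>0$ even when $c_i=0$ is a detail the paper leaves implicit.
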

{\bf Proof:} See Appendix A. \qed

\begin{remark}
The condition $\mu(z)<0$ for all $z\in D$, as shown in Lemma \ref{lem:mu-phi-i}(a), is a standard regularity condition used in the optimal multi-dimensional mechanism design literature, but has been denoted by different terms. For example, it is denoted as the {\em hazard rate condition} in \cite{Pav10}, as {\em Mhr1 condition} in \cite{MV06} and as a condition based on {\em power rate of a distribution} in \cite{WT14} and \cite{TW17}. Lemma \ref{lem:mu-phi-i}(a) shows that the condition holds for all $c_i\geq0$ when $g_i$'s are nondecreasing.
\end{remark}

The results shown in Lemma \ref{lem:mu-phi-i} hold true for uniform distributions given that $g_i$ satisfies all the conditions given in the lemma. Note that for uniform distributions, we have
\begin{enumerate}
 \item[(a)] $\mu(z)=-z\cdot\nabla f(z)-3f(z)=-3,\,\forall z\in D$ for every $c_1,c_2\geq0$,
 \item[(b)] $F_i(z_i)=\frac{z_i-c_i}{b_i}$ when $z_i\in[c_i,c_i+b_i]$, and so we have $\tilde{\phi}_i(z_i)=\frac{z_i}{b_i}-1+\frac{z_i-c_i}{b_i}=\frac{2z_i-(c_i+b_i)}{b_i}$. The function is strictly increasing for every $c_i\geq0$.
 \item[(c)] $\tilde{\phi}_i^{-1}(0)=\frac{c_i+b_i}{2}\in(c_i,c_i+b_i)$ whenever $c_i<\frac{1}{g_i(0)}=b_i$.
\end{enumerate}

I now proceed to characterize the threshold for which the minimum valuations are termed ``sufficiently high'' for the optimal mechanism to be deterministic. I define three functions $c_2^{th_k}:\left[0,\frac{1}{g_1(0)}\right)\rightarrow\mathbb{R}_+$, $k=1,2,3$, as follows:
\begin{itemize}
 \item $c_2^{th_1}(c_1)=\frac{1}{g_2(0)F_1(\tilde{\phi}_1^{-1}(0))}$. The function is well-defined since (i) $g_2(0)>0$ and (ii) $\tilde{\phi}_1^{-1}(0)>c_1$ and $f_1>0$ imply that $F_1(\tilde{\phi}_1^{-1}(0))>0$.
 \begin{itemize}
  \item For uniform distributions, we have $F_1(\tilde{\phi}_1^{-1}(0))=\frac{b_1-c_1}{2b_1}$. So we have $c_2^{th_1}(c_1)=2b_2\frac{b_1}{b_1-c_1}$.
 \end{itemize}
 \item $c_2^{th_2}(c_1)=\frac{1}{g_2(0)}\left(2+\max_{z_1\in[c_1,\tilde{\phi}_1^{-1}(0)]}\frac{z_1f_1'(z_1)}{f_1(z_1)}\right)$. The function is well-defined since $f_i(z_i)>0$, $f_i'(z_i)$ is continuous, and $[c_1,\tilde{\phi}_1^{-1}(0)]$ is a compact set.
 \begin{itemize}
  \item For uniform distributions, we have $\frac{z_1f_1'(z_1)}{f_1(z_1)}=0$ for all $z_1$. So we have $c_2^{th_2}(c_1)=2b_2$.
 \end{itemize}
 \item Define $t:\left[\frac{1}{g_2(0)},\infty\right)\rightarrow(0,b_1]$, $t(c_2)=\{t:c_2g_2(0)F_1(c_1+t)=1\}$. Now I define
$$
  c_2^{th_3}(c_1)=\left\{c_2:\frac{\int_{c_1}^{c_1+t(c_2)}z_1f_1(z_1)\,dz_1}{F_1(c_1+t(c_2))}=\max_{z_1\in[c_1,c_1+b_1]}(z_1(1-F_1(z_1)))\right\}.
$$
\end{itemize}

The function $c_2^{th_3}(c_1)$ can be interpreted as follows. Consider $\hat{f}_1$ as the density function formed by the truncation of $f_1$ in the interval $[c_1,c_1+t]$. In other words, let $\hat{f}_1:[c_1,c_1+t]\rightarrow\mathbb{R}_+$, $\hat{f}_1(z_1)=\frac{f_1(z_1)}{F_1(c_1+t)}$. Then the equation
$$
  \int_{c_1}^{c_1+t}z_1\left(\frac{f_1(z_1)}{F_1(c_1+t)}\right)\,dz_1=\max_{z_1\in[c_1,c_1+b_1]}z_1(1-F_1(z_1))
$$
finds the point of truncation $t$, at which the expectation of $z_1$ (when $z_1\sim\hat{f}_1$) equals the maximum expected revenue generated by selling item $1$.

The following lemma shows that  $c_2^{th_3}(\cdot)$ is well-defined.
\begin{lemma}\label{prop:c2-th-3}
Let $t:\left[\frac{1}{g_2(0)},\infty\right)\rightarrow(0,b_1]$, $t(c_2)=\{t:c_2g_2(0)F_1(c_1+t)=1\}$. Then the term $\frac{\int_{c_1}^{c_1+t(c_2)}z_1f_1(z_1)\,dz_1}{F_1(c_1+t(c_2))}$ decreases with $c_2$. Furthermore, for every $c_1<\frac{1}{g_1(0)}$, there exists a unique $c_2\in\left[\frac{1}{g_2(0)},\infty\right)$ such that
$$
  \frac{\int_{c_1}^{c_1+t(c_2)}z_1f_1(z_1)\,dz_1}{F_1(c_1+t(c_2))}=\max_{z_1\in[c_1,c_1+b_1]}(z_1(1-F_1(z_1))).
$$
\end{lemma}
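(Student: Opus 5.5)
The plan is to reduce everything to a one-variable monotonicity argument. Define the conditional mean
$$
  m(s)=\frac{\int_{c_1}^{c_1+s}z_1f_1(z_1)\,dz_1}{F_1(c_1+s)},\qquad s\in(0,b_1],
$$
so that the quantity in the statement is $m(t(c_2))$. The claim then follows from three facts: $t(\cdot)$ is a strictly decreasing continuous function of $c_2$, $m(\cdot)$ is strictly increasing and continuous in $s$, and the intermediate value theorem applied at the two endpoints of the range of $c_2$.

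\textbf{Step 1: properties of $t(c_2)$.} For $c_2\geq \frac{1}{g_2(0)}$ we have $\frac{1}{c_2g_2(0)}\in(0,1]$. The map $s\mapsto F_1(c_1+s)$ is continuous and strictly increasing on $[0,b_1]$, since $f_1>0$, and it runs from $0$ to $1$. Hence there is a unique $t(c_2)\in(0,b_1]$ with $F_1(c_1+t(c_2))=\frac{1}{c_2g_2(0)}$. This $t(c_2)$ is continuous and strictly decreasing in $c_2$. Moreover $t(1/g_2(0))=b_1$, and $t(c_2)\to0$ as $c_2\to\infty$.

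\textbf{Step 2: $m$ is strictly increasing.} Differentiating,
$$
  m'(s)=\frac{f_1(c_1+s)}{F_1(c_1+s)^2}\int_{c_1}^{c_1+s}\bigl((c_1+s)-z_1\bigr)f_1(z_1)\,dz_1 .
$$
The integrand is strictly positive on $[c_1,c_1+s)$, so $m'(s)>0$. Composing with Step 1 shows that $m(t(c_2))$ is continuous and strictly decreasing in $c_2$, which is the first assertion. Strict monotonicity also gives uniqueness of the solution, once existence is established.

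\textbf{Step 3: endpoint comparisons and existence.} Let $R=\max_{z_1}z_1(1-F_1(z_1))$ and $p=\tilde\phi_1^{-1}(0)$.
\begin{itemize}
 \item At $c_2=\frac{1}{g_2(0)}$ the quantity equals $m(b_1)=E[z_1]$. Since $E[z_1]\geq E[z_11\{z_1\geq p\}]\geq p(1-F_1(p))=R$, we get $E[z_1]\geq R$.
 \item For strictness, use that $p\in(c_1,c_1+b_1)$ by Lemma \ref{lem:mu-phi-i}(c) when $c_1<\frac{1}{g_1(0)}$. The set $\{z_1\in(0,p)\}$ then has positive probability, and $z_1>p$ with positive probability, so $E[z_1]>R$. (If $c_1=0$, use the second source of strictness.)
 \item As $c_2\to\infty$, $m(t(c_2))\to c_1$. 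Since $\frac{d}{dz_1}\bigl(z_1(1-F_1(z_1))\bigr)=-\tilde\phi_1(z_1)$, Lemma \ref{lem:mu-phi-i}(b,c) shows this revenue function is strictly increasing on $[c_1,p]$. Hence $R>c_1\cdot(1-F_1(c_1))=c_1$.
\end{itemize}
So the continuous, strictly decreasing function $m(t(c_2))$ starts above $R$ and eventually falls below $R$. The intermediate value theorem gives a unique crossing $c_2\in\bigl[\frac{1}{g_2(0)},\infty\bigr)$.

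The main obstacle I expect is the strict inequalities at the two endpoints. They are what make the crossing exist and lie in the stated range. Both rest on $p$ being interior, which is exactly why the hypothesis $c_1<\frac{1}{g_1(0)}$ and Lemma \ref{lem:mu-phi-i}(c) are needed.
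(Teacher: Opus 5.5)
Your proposal is correct and is essentially the paper's proof: you use the same derivative computation showing the truncated conditional mean increases in $s$, the same monotonicity of $t(c_2)$ with endpoint values $\mathbb{E}[z_1]$ at $c_2=\frac{1}{g_2(0)}$ and $c_1$ as $c_2\to\infty$, and the same placement of $\max_{z_1}z_1(1-F_1(z_1))$ in $(c_1,\mathbb{E}[z_1]]$. The differences are minor: you bound the maximal revenue by $\mathbb{E}[z_1]$ via $\mathbb{E}[z_1]\geq\mathbb{E}[z_1\mathbf{1}\{z_1\geq p\}]$ rather than the paper's $z_1(1-F_1(z_1))\leq\int_0^{z_1}(1-F_1(z))\,dz$, and the strict inequality $\mathbb{E}[z_1]>R$ is not needed, since $c_2=\frac{1}{g_2(0)}$ already lies in the stated range.
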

{\bf Proof:} See Appendix A.\qed

For uniform distributions, we have $t(c_2)=\{t:\frac{c_2}{b_2}\cdot\frac{t}{b_1}=1\}=\frac{b_1b_2}{c_2}$. Also, $F_1(c_1+t)=\frac{b_2}{c_2}$. Furthermore, $z_1(1-F_1(z_1))$ is maximized at $z_1^*=\tilde{\phi}_1^{-1}(0)=\frac{c_1+b_1}{2}$. So we need to solve for $c_2$ for which $\frac{c_2}{b_2}\int_{c_1}^{c_1+\frac{b_1b_2}{c_2}}\frac{z_1}{b_1}\,dz_1=\frac{(c_1+b_1)^2}{4b_1}$.
\begin{multline*}
  \frac{c_2}{2b_1b_2}\left(\left(c_1+\frac{b_1b_2}{c_2}\right)^2-c_1^2\right)=\frac{(c_1+b_1)^2}{4b_1} \\
  \Rightarrow\frac{c_2}{2b_1b_2}\left(\left(\frac{b_1b_2}{c_2}\right)^2+\frac{2c_1b_1b_2}{c_2}\right)=\frac{(c_1+b_1)^2}{4b_1} \\
  \Rightarrow\frac{b_1b_2}{2c_2}=\frac{(c_1+b_1)^2}{4b_1}-c_1\Rightarrow c_2^{th_3}(c_1)=2b_2\frac{b_1^2}{(b_1-c_1)^2}.
\end{multline*} 

Observe that the term $\frac{\int_{c_1}^{c_1+t(c_2)}z_1f_1(z_1)\,dz_1}{F_1(c_1+t(c_2))}=\frac{b_1b_2}{2c_2}-c_1$ decreases with $c_2$, and that there exists a unique $c_2^{th_3}(c_1)\geq b_2$ for every $c_1<b_1$. The uniform distribution thus satisfies Lemma \ref{prop:c2-th-3}.

I further define the threshold $c_2^*:\left[0,\frac{1}{g_1(0)}\right)\rightarrow\mathbb{R}_+$, as $c_2^*(c_1)=\max_{k=1,2,3}(c_2^{th_k}(c_1))$. It is easy to see that $c_2^*(c_1)$ for uniform distributions is
$$
  c_2^*(c_1)=2b_2\max\left(\frac{b_1}{b_1-c_1},1,\frac{b_1^2}{(b_1-c_1)^2}\right)=2b_2\frac{b_1^2}{(b_1-c_1)^2}.
$$

The functions $c_1^{th_k}(c_2)$ for $k=1,2,3$ can be defined similarly, and the threshold function $c_1^*:\left[0,\frac{1}{g_2(0)}\right)\rightarrow\mathbb{R}_+$ can also be defined in a similar manner as $c_1^*(c_2)=\max_{k=1,2,3}(c_1^{th_k}(c_2))$.

\begin{figure}
\centering
\begin{tabular}{cc}
\subfloat[]{\label{fig:e1}\begin{tikzpicture}[scale=0.3,font=\footnotesize,axis/.style={very thick, -}]
\node at (0,-1) {\tiny$(c_1,c_2)$};
\draw [axis,thick,-] (0,0)--(12,0);
\node at (5,-1.5) {\tiny$\left(\tilde{\phi}_1^{-1}(0),c_2\right)$};
\node at (11,-1) {\tiny$(c_1+b_1,c_2)$};
\draw [axis,thick,-] (0,0)--(0,12);
\node [above] at (1,12) {\tiny$(c_1,c_2+b_2)$};
\draw [axis,thick,-] (0,12)--(12,12);
\draw [axis,thick,-] (12,0)--(12,12);
\draw [axis,thick,-] (6,0)--(6,12);
\node at (3,6) {$(0,1)$};
\node at (9,6) {$(1,1)$};
\end{tikzpicture}}&
\subfloat[]{\label{fig:h1}\begin{tikzpicture}[scale=0.3,font=\footnotesize,axis/.style={very thick, -}]
\node at (1.5,-1) {\tiny$(c_1,c_2)$};
\draw [axis,thick,-] (0,0)--(12,0);
\node at (11,-1) {\tiny$(c_1+b_1,c_2)$};
\draw [axis,thick,-] (0,0)--(0,12);
\node [above] at (3,12) {\tiny$(c_1,c_2+b_2)$};
\draw [axis,thick,-] (0,12)--(12,12);
\draw [axis,thick,-] (12,0)--(12,12);
\draw [axis,thick,-] (0,6)--(12,6);
\node [rotate=90] at (13.5,6) {\tiny$\left(c_1+b_1,\tilde{\phi}_2^{-1}(0)\right)$};
\node at (6,3) {$(1,0)$};
\node at (6,9) {$(1,1)$};
\end{tikzpicture}}
\end{tabular}
\caption{Structure of the optimal mechanism when (a) $c_1<\frac{1}{g_1(0)}$ and $c_2$ is high, (b) $c_2<\frac{1}{g_2(0)}$ and $c_1$ is high.}\label{fig:structure-low-high}
\end{figure}

I am now ready to state the main result of the paper.
\begin{theorem}\label{thm:individual-sale}
Consider two distributions whose density functions, $g_i:[0,b_i]\rightarrow\mathbb{R}_+$, $i=1,2$, are strictly positive, nondecreasing and continuously differentiable. Let the support set of the distribution of the buyer's valuation be $[c_1,c_1+b_1]\times[c_2,c_2+b_2]$, and the density of the distribution be given by $f(z_1,z_2)=f_1(z_1)f_2(z_2)=g_1(z_1-c_1)g_2(z_2-c_2)$. Then, the following holds:
\begin{enumerate}
 \item[(a)] If $c_1<\frac{1}{g_1(0)}$ and $c_2\geq c_2^*(c_1)$, then the optimal mechanism is to sell the items individually at the prices $\tilde{\phi}_1^{-1}(0)$ and $c_2$ respectively. The structure of the optimal mechanism is as in Figure \ref{fig:e1}.
 \item[(b)] If $c_2<\frac{1}{g_2(0)}$ and $c_1\geq c_1^*(c_2)$, then the optimal mechanism is to sell the items individually at the prices $c_1$ and $\tilde{\phi}_2^{-1}(0)$ respectively. The structure of the optimal mechanism is as in Figure \ref{fig:h1}.
 \item[(c)] If $c_1\geq\frac{1}{g_1(0)}$ and $c_2\geq\frac{1}{g_2(0)}$, then the optimal mechanism is to sell the items as a bundle. The structure of the optimal mechanism is as in Figure \ref{fig:c1}.
\end{enumerate}
\end{theorem}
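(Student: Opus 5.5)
Part (c) follows directly from Theorem~\ref{thm:menicucci}, and part (b) is symmetric to part (a), so the work lies in (a). For (c), Lemma~\ref{lem:mu-phi-i}(a) gives $\mu(z)<0$ on $D$. Also $\tilde{\phi}_i$ is strictly increasing by Lemma~\ref{lem:mu-phi-i}(b). At the left endpoint, $\tilde{\phi}_i(c_i)=c_ig_i(0)-1\ge 0$ exactly when $c_i\ge 1/g_i(0)$, so $\tilde{\phi}_i\ge0$ on the whole support. The densities are positive and $C^1$ on compact intervals, hence have bounded derivatives. Theorem~\ref{thm:menicucci} then yields the bundle sale.

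For (a) I apply Theorem~\ref{thm:opt_menu} to the two-region menu of Figure~\ref{fig:e1}. Set $p_1=\tilde{\phi}_1^{-1}(0)$, which lies in $(c_1,c_1+b_1)$ by Lemma~\ref{lem:mu-phi-i}(c). The regions are:
\begin{itemize}
\item[] $R_A=[c_1,p_1]\times[c_2,c_2+b_2]$, with allocation $(0,1)$ and $\vec v=(1,-1)$;
\item[] $R_B=[p_1,c_1+b_1]\times[c_2,c_2+b_2]$, with allocation $(1,1)$ and $\vec v=(-1,-1)$.
\end{itemize}
I must show $\bar\mu|_{R}\preceq_{cvx(\vec v)}0$ on each region. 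The plan is to make each restricted measure balanced (total mass zero), and then to show that the negative parts can be transported to the positive parts through maps that respect the monotonicity directions and move mass along convex-compatible directions (a mean-preserving spread). Equivalently, I decompose $-\bar\mu|_R$ into a measure that dominates zero in the appropriate first-order sense plus a mean-preserving spread.

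\textbf{Region $R_B$.} Integrating $\mu$ over $z_2$ against the boundary terms on the top and bottom edges reduces everything to a one-dimensional measure in $z_1$. Because $f_2$ integrates out, the net $z_2$-mass on each vertical line is $-\tfrac{d}{dz_1}(z_1f_1)$ times the appropriate constant, plus the right-edge term $(c_1+b_1)f_1(c_1+b_1)$ and the left cut at $p_1$. The relevant quantity is $\int_{z_1}^{c_1+b_1}\dots = -\tilde\phi_1$-type expressions. Since $\tilde{\phi}_1\ge0$ on $[p_1,c_1+b_1]$, the one-dimensional marginal is first-order dominated in the decreasing direction. Within each vertical fibre, $\mu<0$ in the interior while the top edge carries positive mass $(c_2+b_2)f$ and the bottom edge carries negative mass $-c_2f$. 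I move the negative mass upward and leftward (the directions allowed by $\vec v=(-1,-1)$) to cancel the positive mass, and I expect this step to be routine.

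\textbf{Region $R_A$, the main obstacle.} This region contains the point mass $+1$ at $(c_1,c_2)$ and the large negative bottom-edge mass $-c_2f_1(z_1)f_2(c_2)=-c_2g_2(0)f_1(z_1)$, so it must handle an atom of mass $1$ sitting at the corner. Its total mass vanishes by the choice of $p_1$, since $p_1(1-F_1(p_1))$ is the one-item revenue. Here $\vec v=(1,-1)$, so I may move mass in the increasing-$z_1$ direction and the decreasing-$z_2$ direction. The construction I would try is as follows.
\begin{itemize}
\item[] First, spread the point mass $1$ at $(c_1,c_2)$ along the bottom edge over $[c_1,c_1+t]$ with density proportional to $f_1$. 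The parameter $t=t(c_2)$ is chosen so that the spread exactly cancels the bottom-edge density $c_2g_2(0)f_1$ there. This is legitimate as a convex-order spread only if the barycenter of the spread mass does not fall to the left of where cancellation is needed in a way that would violate the order. The required condition $\int_{c_1}^{c_1+t} z_1 f_1/F_1(c_1+t)\le \max z_1(1-F_1)$ is precisely the definition of $c_2^{th_3}$, and it holds for $c_2\ge c_2^{th_3}$ by the monotonicity in Lemma~\ref{prop:c2-th-3}.
\item[] Second, the condition $c_2\ge c_2^{th_1}$ guarantees $t(c_2)\le p_1-c_1$, so the spread stays inside $R_A$.
\item[] Third, the remaining bottom-edge negative mass on $[c_1+t,p_1]$ together with the interior negative density must be dominated by the positive top-edge mass and the left-boundary term, moving mass downward and rightward. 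I would verify this fibre by fibre in $z_1$. The condition $c_2\ge c_2^{th_2}$ ensures that on each vertical line $c_2g_2(0)$ exceeds $2+z_1f_1'/f_1$, which makes the bottom-edge mass large enough relative to the interior term $-(3f+z\cdot\nabla f)$ integrated over $z_2$. This produces a one-dimensional first-order domination in $z_1$ that can be checked via $\tilde\phi_1\le0$ on $[c_1,p_1]$.
\end{itemize}
Getting the accounting in the first step exactly right is where I expect the difficulty to lie.
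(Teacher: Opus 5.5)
\textbf{The gap is in region $R_A$, and it comes from putting the $c_2^{th_3}$ condition in the wrong place.}

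Your Step 1 needs no condition at all. Every admissible test function is nondecreasing in $z_1$. So moving the unit atom at $(c_1,c_2)$ rightward onto the bottom-edge mass $c_2g_2(0)f_1$ on $[c_1,c_1+t]$ already gives $\int_B u\,d\bar\mu\le u(c_1,c_2)\bar\mu(B)=0$ for $B=[c_1,c_1+t]\times\{c_2\}$, whatever the barycenter is.

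The real difficulty is Step 3, and there your claimed one-dimensional first-order domination is false whenever $c_1>0$. After collapsing each vertical fibre onto the top edge (using $\mu\le0$ and monotonicity in $z_2$), what remains on $[c_1,p_1]$ is
\[
\nu=-c_1f_1(c_1)\,\delta_{c_1}+\bigl(c_2g_2(0)f_1-z_1f_1'-2f_1\bigr)\mathbf{1}_{[c_1,c_1+t]}\,dz_1-\bigl(z_1f_1'+2f_1\bigr)\mathbf{1}_{[c_1+t,p_1]}\,dz_1 .
\]
The atom comes from the left edge. Its outward normal is $(-1,0)$, so it carries \emph{negative} mass $-c_1f_1(c_1)f_2$; it cannot help dominate anything, contrary to your description. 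The measure $\nu$ has total mass zero. However, for $a\in(c_1,c_1+t]$ its upper tail is $\nu((a,p_1])=af_1(a)-(c_2g_2(0)-1)F_1(a)$, which tends to $c_1f_1(c_1)>0$ as $a\downarrow c_1$. So a negative atom sits to the left of positive mass, and no rightward transport can cancel it. When $c_1=0$ the atom vanishes and your first-order argument does close, using condition (iii) on $[c_1,c_1+t]$. That is why the flaw is easy to miss.

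\textbf{What closes the argument is convexity in $z_1$, and this is exactly where $c_2^{th_3}$ enters.}

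The density of $\nu$ is $\ge0$ on $[c_1,c_1+t]$ (by $c_2\ge c_2^{th_2}$) and $\le0$ on $[c_1+t,p_1]$. Set $G(a)=\int(z_1-a)^+\,d\nu$. Then $G$ is convex on $[c_1,c_1+t]$. On $[c_1+t,p_1]$ we have $G'(a)=-\tilde\phi_1(a)\ge 0$, so $G$ increases up to $G(p_1)=0$ there. Hence $\int\tilde u\,d\nu\le0$ for every convex nondecreasing $\tilde u$ as soon as $G(c_1)=\int(z_1-c_1)\,d\nu\le0$.

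An integration by parts gives $G(c_1)=c_2g_2(0)\int_{c_1}^{c_1+t}z_1f_1\,dz_1-p_1^2f_1(p_1)$. This is precisely the quantity that Lemma~\ref{prop:c2-th-3} makes nonpositive for $c_2\ge c_2^{th_3}(c_1)$. The paper implements the same idea with a chord: it normalizes $\tilde u(c_1)=0$ and $\tilde u(c_1+t)=t$. Convexity then forces $\tilde u\le z_1-c_1$ on $[c_1,c_1+t]$ and $\tilde u\ge z_1-c_1$ beyond, which pairs with the sign pattern of $\nu$ and the first-moment inequality.

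\textbf{The rest of your proposal is sound.} This covers part (c), the reduction of (b) to (a), and the $R_B$ decomposition into a $z_1$-marginal with upper tails $\tilde\phi_1\ge0$ plus fibrewise upward moves. One correction in $R_B$: each fibre carries net mass $-\tilde\phi_1'(z_1)<0$, which must travel \emph{rightward} to the edge $z_1=c_1+b_1$, not leftward.
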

The theorem asserts that for each $c_1$, there exists a threshold value of $c_2$ beyond which the optimal mechanism is deterministic. Furthermore, the optimal mechanism is to sell the items individually when $c_1<\frac{1}{g_1(0)}$ (and $c_2$ is beyond the corresponding threshold), and to sell the items as a bundle when $c_1\geq\frac{1}{g_1(0)}$ (and $c_2$ is beyond the corresponding threshold). A similar assertion holds for $c_2$.

It is important to note that the theorem does not assert the tightness of the thresholds $c_2^*(c_1)$ and $c_1^*(c_2)$. I provide an example of densities $f_1$ and $f_2$ later in Section \ref{sec:discussion} where $c_1=0$ and $c_2<c_2^*(0)$ but the optimal mechanism still is to sell the items individually at the prices $\tilde{\phi}_1^{-1}(0)$ and $c_2$ respectively.

\begin{figure}
\centering
\begin{tikzpicture}[scale=0.3,font=\footnotesize,axis/.style={very thick, -}]
\node at (0,-1) {\tiny$(c_1,c_2)$};
\draw [axis,thick,-] (0,0)--(12,0);
\node at (12,-1) {\tiny$(c_1+b_1,c_2)$};
\draw [axis,thick,-] (0,0)--(0,12);
\node [above] at (0,12) {\tiny$(c_1,c_2+b_2)$};
\draw [axis,thick,-] (0,12)--(12,12);
\draw [axis,thick,-] (12,0)--(12,12);
\draw [axis,thick,-] (0,8)--(8,0);
\node at (2.5,2.5) {$(0,0)$};
\node at (8,8) {$(1,1)$};
\end{tikzpicture}
\caption{Structure of the optimal mechanism when $c_1\geq\frac{1}{g_1(0)}$ and $c_2\geq\frac{1}{g_2(0)}$.}\label{fig:c1}
\end{figure}

Part (a) of Theorem \ref{thm:individual-sale} indicates in the asymmetric setting where the minimum valuation of item $1$ is low and the minimum valuation of item $2$ is high, the optimal mechanism reduces to finding the optimal mechanism in the one-item setting, with item $2$ being sold at the minimum valuation. Recall from Theorem \ref{thm:myerson} that in the one-item setting, it is optimal to sell the item at the price $\tilde{\phi}_1^{-1}(0)$ if it exists and if $\tilde{\phi}_1(\cdot)$ is strictly increasing. The seller thus extracts maximum surplus from item $2$ by selling it at the minimum valuation, but conducts an optimal auction for item $1$, selling it at the price $\tilde{\phi}_1^{-1}(0)$. A symmetric argument holds when the minimum valuation of item $2$ is low and the minimum valuation of item $1$ is high.

When we consider uniform distribution, the theorem says that it is optimal for the seller to sell the items
\begin{enumerate}
 \item[(a)] at the price $\frac{c_1+b_1}{2}$ and $c_2$ respectively, when $\{c_1<b_1,c_2\geq2b_2\frac{b_1^2}{(b_1-c_1)^2}\}$,
 \item[(b)] at the price $c_1$ and $\frac{c_2+b_2}{2}$ respectively, when $\{c_1\geq2b_1\frac{b_2^2}{(b_2-c_2)^2},c_2<b_2\}$,
 \item[(c)] as a bundle when $\{c_1\geq b_1,c_2\geq b_2\}$.
\end{enumerate}
Observe that the result is exactly the same as stated in Theorems 5, 7, and 8 in \citet{Thiru19}.

I now proceed to prove Theorem \ref{thm:individual-sale}. The statements of part (a) and part (b) are symmetric, and thus it suffices to prove one of the parts. Without loss of generality, I prove the statement in part (a). The seller posts the prices of the items as $\tilde{\phi}_1^{-1}(0)$ and $c_2$ respectively. The buyer then gets only item $2$ when $z_1<\tilde{\phi}_1^{-1}(0)$, and gets both the items when $z_1\geq\tilde{\phi}_1^{-1}(0)$. He gets item $2$ anyways because the price of the item equals his minimum valuation. So the mechanism has only two constant-allocation regions:
 \begin{itemize}
  \item $\left[\tilde{\phi}_1^{-1}(0),c_1+b_1\right]\times[c_2,c_2+b_2]$ where $q=(1,1)$, and
  \item $\left[c_1,\tilde{\phi}_1^{-1}(0)\right)\times[c_2,c_2+b_2]$ where $q=(0,1)$.
 \end{itemize}

I call the set of valuations at which the buyer gets the bundle of items as $W$, and the set of valuations where the buyer gets only item $2$ as $A$. I then use the following steps to prove part (a) of the theorem.
\begin{itemize}
 \item According to Theorem \ref{thm:opt_menu}, the optimal mechanism has the following allocation function
$$
  q(z_1,z_2)=\begin{cases}(1,1)&\mbox{if }z\in W,\\(0,1)&\mbox{if }z\in A,\end{cases}
$$
if (i) $\bar{\mu}|_W\preceq_{cvx(\overrightarrow{-1,-1})}0$, and (ii) $\bar{\mu}|_A\preceq_{cvx(\overrightarrow{1,-1})}0$. So I derive the conditions on $c_i$ and $f_i(\cdot)$ that satisfies both the conditions. See Lemmas \ref{prop:W-region} and \ref{prop:A-region}.
 \item I then consider the distributions (i) whose density is strictly positive, nondecreasing and continuously differentiable, and (ii) whose support set is such that $c_1<\frac{1}{f_1(c_1)}$ and $c_2\geq c_2^*(c_1)$. I show that the conditions in Lemmas \ref{prop:W-region} and \ref{prop:A-region} are satisfied for any such distribution.
\end{itemize}

To prove part (c) of the theorem, I consider the distributions whose density (i) is strictly positive, nondecreasing and continuously differentiable, and (ii) is such that $c_i\geq\frac{1}{f_i(c_i)}$, $i=1,2$, and then refer to Theorem \ref{thm:menicucci} to show that for any such distribution, the optimal mechanism is to sell the items as a bundle.

I now derive the conditions on $c_i$ and $f_i(\cdot)$ so that (i) $\bar{\mu}|_W\preceq_{cvx(\overrightarrow{-1,-1})}0$ and (ii) $\bar{\mu}|_A\preceq_{cvx(\overrightarrow{1,-1})}0$ hold.
\begin{lemma}\label{prop:W-region}
Consider that $f_1$ and $f_2$ are positive and continuously differentiable in their domains. Consider that in addition they satisfy
\begin{itemize}
 \item $\mu(z)\leq0$ for all $z\in D$,
 \item $\tilde{\phi}_1(\cdot)$ strictly increasing, $\tilde{\phi}_1(c_1)<0$.
\end{itemize}
Let the set $W$ be given by $W:=\left[\tilde{\phi}_1^{-1}(0),c_1+b_1\right]\times[c_2,c_2+b_2]$. Then, $\bar{\mu}|_W\succeq_10$, and hence $\bar{\mu}|_W\preceq_{cvx(\overrightarrow{-1,-1})}0$.
\end{lemma}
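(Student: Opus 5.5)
The plan is to split $\bar{\mu}|_W$ into two product-type measures and show that each one first-order dominates zero by reducing to a one-dimensional statement. Write $p=\tilde{\phi}_1^{-1}(0)$ and $e=c_1+b_1$. Since the mechanism has $\tilde{\phi}_1$ strictly increasing with $\tilde{\phi}_1(c_1)<0$, I will assume $p\in(c_1,e]$ exists, as in Lemma \ref{lem:mu-phi-i}(c).

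\emph{Components of the restricted measure.} The left edge $\{z_1=p\}$ is not part of $\partial D$, so it carries no boundary density. Hence $\bar{\mu}|_W$ consists of three pieces:
\begin{itemize}
\item the interior density $\mu(z)=-3f_1f_2-z_1f_1'f_2-z_2f_1f_2'$;
\item the boundary densities $-c_2f_1(z_1)f_2(c_2)$ on the bottom edge, $(c_2+b_2)f_1(z_1)f_2(c_2+b_2)$ on the top edge, and $e\,f_1(e)f_2(z_2)$ on the right edge;
\item no point mass, since $(c_1,c_2)\notin W$ because $p>c_1$.
\end{itemize}

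\emph{Decomposition.} I will write $\bar{\mu}|_W=\alpha+\beta$ as follows.
\begin{itemize}
\item $\alpha=\rho_1\otimes f_2(z_2)\,dz_2$, where $\rho_1$ is the measure on $[p,e]$ given by $\rho_1=(-2f_1-z_1f_1')\,dz_1+e f_1(e)\,\delta_e$.
\item $\beta=f_1(z_1)\,dz_1\otimes\rho_2$, where $\rho_2$ is the measure on $[c_2,c_2+b_2]$ given by $\rho_2=(-f_2-z_2f_2')\,dz_2-c_2f_2(c_2)\,\delta_{c_2}+(c_2+b_2)f_2(c_2+b_2)\,\delta_{c_2+b_2}$.
\end{itemize}
The interior densities add up to $(-2f_1-z_1f_1')f_2+(-f_2-z_2f_2')f_1=\mu$. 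The $\delta_e$ term of $\alpha$ reproduces the right edge, and the two atoms of $\rho_2$ reproduce the bottom and top edges. So the decomposition is exact.

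\emph{One-dimensional dominance.} For an increasing $h$ and fixed $z_2$, the map $h(\cdot,z_2)$ is increasing. Since $f_2\ge 0$, it suffices to show $\rho_1\succeq_1 0$ on $[p,e]$, and symmetrically $\rho_2\succeq_1 0$. By the layer-cake representation, a one-dimensional measure dominates zero in first order exactly when its total mass is $0$ and every upper tail has nonnegative mass.
\begin{itemize}
\item For $\rho_1$, integration by parts gives, for $s\in[p,e]$,
$$\rho_1([s,e])=-(F_1(e)-F_1(s))-[z_1f_1]_s^e+e f_1(e)=s f_1(s)-(1-F_1(s))=\tilde{\phi}_1(s).$$
This is $\ge 0$ for $s\ge p$ because $\tilde{\phi}_1$ is increasing, and it equals $0$ at $s=p$.
\item For $\rho_2$, the tail mass for $t>c_2$ is $-[z_2f_2]_t^{c_2+b_2}+(c_2+b_2)f_2(c_2+b_2)=t f_2(t)\ge 0$. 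Including the bottom atom, the total mass is $c_2f_2(c_2)-c_2f_2(c_2)=0$.
\end{itemize}
Therefore $\int h\,d\alpha=\int f_2(z_2)\bigl(\int h(z_1,z_2)\,d\rho_1(z_1)\bigr)dz_2\ge 0$, similarly $\int h\,d\beta\ge0$, and so $\bar{\mu}|_W\succeq_1 0$.

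\emph{Conclusion and expected obstacle.} A convex function that is nonincreasing in both coordinates is the negative of an increasing function. First-order dominance then gives $\int h\,d\bar{\mu}|_W\le 0$ for every such $h$, which is $\bar{\mu}|_W\preceq_{cvx(\overrightarrow{-1,-1})}0$. The main obstacle is finding the right split: a naive vertical-then-horizontal pushing argument stalls on the positive right-edge mass, which does not sit at the top of its vertical line. The product decomposition avoids this and, notably, does not seem to need the hypothesis $\mu\le 0$. I would double-check that this is not hiding an error, for instance in the boundary-density convention at the left edge $z_1=p$.
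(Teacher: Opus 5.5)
Your proof is correct, and it takes a genuinely different route from the paper's.

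The paper verifies $\bar{\mu}|_W\succeq_1 0$ through the criterion ``zero total mass and nonnegative mass on every increasing set''. It proceeds in three steps:
\begin{enumerate}
\item It computes the mass of each increasing rectangle $[c_1+t_1,c_1+b_1]\times[c_2+t_2,c_2+b_2]$ as $\tilde{\phi}_1(c_1+t_1)(1-F_2(c_2+t_2))+(c_2+t_2)f_2(c_2+t_2)(1-F_1(c_1+t_1))$.
\item It treats $t_2=0$ separately.
\item It passes to a general increasing set $X$ by enclosing it in its smallest such rectangle and discarding the difference, whose mass is nonpositive because $\mu\le 0$.
\end{enumerate}

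Your tensor decomposition $\bar{\mu}|_W=\rho_1\otimes f_2\,dz_2+f_1\,dz_1\otimes\rho_2$ reproduces exactly that rectangle formula. The first summand is the $\alpha$-mass of the rectangle and the second is its $\beta$-mass. This also settles your worry about the internal edge $z_1=\tilde{\phi}_1^{-1}(0)$: neither proof puts mass there.

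The gain is that you apply one-dimensional dominance section by section. Every increasing function is then handled at once, with no reduction to rectangles. As a result, the hypothesis $\mu\le 0$ genuinely drops out of this lemma. You use only positivity of the densities, $c_2\ge 0$, and the monotonicity and sign change of $\tilde{\phi}_1$. This does not by itself allow dropping condition (i) from Theorem~\ref{thm:suff-individual}, since $\mu\le 0$ is still used in step (a) of the proof of Lemma~\ref{prop:A-region}. It is nonetheless a cleaner and slightly stronger argument.

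One small fix: justify the existence of $\tilde{\phi}_1^{-1}(0)\in(c_1,c_1+b_1)$ from this lemma's own hypotheses, as the paper does. Use the intermediate value theorem with $\tilde{\phi}_1(c_1)<0<(c_1+b_1)f_1(c_1+b_1)=\tilde{\phi}_1(c_1+b_1)$. Citing Lemma~\ref{lem:mu-phi-i}(c) is not enough, because that lemma assumes nondecreasing densities.
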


\begin{lemma}\label{prop:A-region}
Consider that $f_1$ and $f_2$ are positive and continuously differentiable in their domains. Consider that in addition they satisfy
\begin{itemize}
 \item $\mu(z)\leq0$ for all $z\in D$,
 \item $\tilde{\phi}_1(\cdot)$ strictly increasing, $\tilde{\phi}_1(c_1)<0$.
 \item $c_2f_2(c_2)f_1(z_1)\geq z_1f_1'(z_1)+2f_1(z_1),\,\forall z_1\in[c_1,\tilde{\phi}_1^{-1}(0)]$.
 \item There exists a $t\leq\tilde{\phi}_1^{-1}(0)-c_1$ such that (a) $c_2f_2(c_2)F_1(c_1+t)=1$ and (b) $c_2f_2(c_2)\left(\int_{c_1}^{c_1+t}z_1f_1(z_1)\,dz_1\right)\leq(\tilde{\phi}_1^{-1}(0))^2f_1(\tilde{\phi}_1^{-1}(0))$ hold.
\end{itemize}
Define $A=\left[c_1,\tilde{\phi}_1^{-1}(0)\right)\times[c_2,c_2+b_2]$. Then $\bar{\mu}|_A\preceq_{cvx\overrightarrow{(1,-1)}}0$.
\end{lemma}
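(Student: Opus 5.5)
The plan is to reduce the two-dimensional dominance statement to a one-dimensional one on the bottom edge $z_2=c_2$, and then to finish with a convex-order (Jensen/majorization) argument in $z_1$. Write $p=\tilde{\phi}_1^{-1}(0)$, and fix a test function $h$ that is convex, nondecreasing in $z_1$ and nonincreasing in $z_2$. On $A$ the measure $\bar{\mu}|_A$ has the following pieces:
\begin{itemize}
 \item the interior density $\mu(z)=-3f-z\cdot\nabla f\leq0$;
 \item a positive density $(c_2+b_2)f_1(z_1)f_2(c_2+b_2)$ on the top edge;
 \item a negative density $-c_2f_2(c_2)f_1(z_1)$ on the bottom edge;
 \item a negative density $-c_1f_1(c_1)f_2(z_2)$ on the left edge;
 \item the unit point mass at $(c_1,c_2)$.
\end{itemize}
The right side $z_1=p$ carries no boundary mass.

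\textbf{Step 1 (collapse in $z_2$).} Fix $z_1\in[c_1,p)$ and let $\nu_{z_1}$ be the measure on the vertical line consisting of the interior density plus the top-edge mass, excluding the bottom edge. I will show that every upper tail $T(s)=\nu_{z_1}([s,c_2+b_2])$ is nonnegative.
\begin{itemize}
 \item Since $T'(s)=-\mu(z_1,s)\geq0$ by the hypothesis $\mu\leq0$, the tail $T$ is nondecreasing in $s$.
 \item Hence it suffices to check $s=c_2$. Integrating $-z_2f_2'$ by parts gives
 $$T(c_2)=c_2f_2(c_2)f_1(z_1)-2f_1(z_1)-z_1f_1'(z_1),$$
 which is $\geq0$ exactly by the third hypothesis.
 \item Writing $h(z_1,z_2)=h(z_1,c_2)+\int_{c_2}^{z_2}dh$ with $dh\leq0$ then gives $\int h\,d\nu_{z_1}\leq T(c_2)\,h(z_1,c_2)$.
\end{itemize}
Adding back the bottom edge, the whole of $A$ minus the left edge and the point mass is dominated by the one-dimensional measure on $[c_1,p)$ with density
$$-2f_1(z_1)-z_1f_1'(z_1)=-\frac{(z_1^2f_1(z_1))'}{z_1}.$$

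\textbf{Step 2 (left edge).} The left-edge mass is negative, so it cannot simply be pushed down with a nonincreasing $h$. Here I would use convexity of $h$ along the line $z_1=c_1$ together with the positive tails produced in Step 1. First I would try to pair the left edge with the positive part of $\nu_{c_1}$, i.e. use $\int h(c_1,\cdot)\,d(\nu_{c_1}-c_1f_1(c_1)f_2)\le h(c_1,c_2)\cdot(\text{total})$, and then verify that the tails of this combined measure still stay nonnegative. The net effect should be a point mass $1-c_1f_1(c_1)$ at $z_1=c_1$ on the bottom edge. This is the step I expect to be the main obstacle, since the signs work against a naive transport and the argument has to extract positivity from the interior and top contributions at $z_1=c_1$.

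\textbf{Step 3 (one-dimensional convex order).} It then remains to show $\int g\,d\rho\leq0$ for every convex nondecreasing $g$ on $[c_1,p]$, where $\rho$ is the reduced measure on $[c_1,p]$. The proof splits into three checks.
\begin{itemize}
 \item \emph{Total mass.} The total mass of $\rho$ equals, up to sign, the quantity $p f_1(p)-(1-F_1(p))=\tilde{\phi}_1(p)=0$. So $\rho$ has zero mass, and it suffices to compare first moments and tails.
 \item \emph{Pairing via condition 4.} The number $t$ satisfies $c_2f_2(c_2)F_1(c_1+t)=1$, so it splits the bottom-edge negative mass so that the part on $[c_1,c_1+t]$ exactly balances the unit point mass at $c_1$.
 \item \emph{First moments.} Using the identity $\int_{c_1}^p(2z_1f_1+z_1^2f_1')\,dz_1+c_1^2f_1(c_1)=p^2f_1(p)$, condition (b) $c_2f_2(c_2)\int_{c_1}^{c_1+t}z_1f_1\leq p^2f_1(p)$ becomes precisely the first-moment inequality needed.
\end{itemize}
I would then verify $\int_x^p\rho([y,p])\,dy\leq0$ for all $x$, i.e. integrated tails of the correct sign, using monotonicity of $\tilde{\phi}_1$ and $\tilde{\phi}_1(c_1)<0$. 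A standard integration-by-parts argument for convex nondecreasing $g$ then gives $\int g\,d\rho\leq0$, which yields $\bar{\mu}|_A\preceq_{cvx(\overrightarrow{1,-1})}0$.
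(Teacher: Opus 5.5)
\textbf{Gap: Step 2 cannot work, because Step 1 collapses in the wrong direction.} Step 1 is a valid inequality. But once it has been applied on every fiber, the bound is already positive for some admissible test functions whenever $c_1>0$, so no treatment of the left edge can rescue it.

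Take $h(z_1,z_2)=\max(0,c_2+\epsilon-z_2)$, which is convex, constant in $z_1$ and nonincreasing in $z_2$. For everything except the left edge, your Step 1 bound is
$\epsilon\bigl(1-\int_{c_1}^{p}(2f_1+z_1f_1')\,dz_1\bigr)=\epsilon\bigl(1-pf_1(p)-F_1(p)+c_1f_1(c_1)\bigr)=\epsilon\,c_1f_1(c_1)$.
The left edge, however, contributes only $-c_1f_1(c_1)\int_{c_2}^{c_2+\epsilon}(c_2+\epsilon-z_2)f_2(z_2)\,dz_2=O(\epsilon^2)$.

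Your hoped-for ``net atom $1-c_1f_1(c_1)$ at $(c_1,c_2)$'' would need the left edge to be worth $-\epsilon\,c_1f_1(c_1)$, and it is not. Moving negative mass toward $z_2=c_2$, where a $z_2$-nonincreasing $h$ is largest, goes the wrong way. Also, the fiber $\nu_{c_1}$ has no mass on the line $z_1=c_1$ to pair with a line mass.

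This failure occurs even when the lemma holds. For the uniform density on $[\tfrac12,\tfrac32]\times[8,9]$, all four hypotheses hold with $t=\tfrac18$. The true value of $\int_A h\,d\bar\mu$ is $-\epsilon-\tfrac58\epsilon^2$, but Step 1 plus the exact left-edge term gives only the bound $\tfrac12\epsilon-\tfrac14\epsilon^2>0$.

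A structural red flag confirms the problem. Steps 1--2 never use condition (iv). The reduced $\rho$ (positive atom at $c_1$, nonpositive density elsewhere, zero mass) is trivially $\le 0$ against every nondecreasing $g$, so your argument would prove the lemma without (iv). That is false. The uniform density on $[\tfrac12,\tfrac32]\times[3,4]$ satisfies $\mu\le0$, (ii) and (iii) (since $c_2f_2(c_2)=3\ge2$). Yet the same $h$ gives $\int_A h\,d\bar\mu=\tfrac14\epsilon-\tfrac58\epsilon^2>0$, because $c_2f_2(c_2)F_1(p)=\tfrac34<1$.

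Step 3 is correspondingly incoherent. The bottom-edge mass that (iv) is supposed to pair with the unit atom was already absorbed in Step 1.

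\textbf{The missing idea: isolate the atom's partner, then collapse \emph{upward}.} The paper proceeds in three moves.
\begin{enumerate}
\item It sets $B=[c_1,c_1+t]\times\{c_2\}$. By (iv)(a), $\bar\mu(B)=0$, and $B$'s only positive mass is the atom at its left end. So $\int_B h\,d\bar\mu\le h(c_1,c_2)\,\bar\mu(B)=0$ by monotonicity in $z_1$.
\item On $A\setminus B$, every piece except the top edge is nonpositive: the interior, the left edge, and the bottom edge over $[c_1+t,p)$. So replacing $h(z_1,z_2)$ by $h(z_1,c_2+b_2)$ there is a valid upper bound.
\item This leaves a one-dimensional measure on $[c_1,p)$ with zero total mass, first moment $\le0$ by (iv)(b), and the following sign pattern:
\begin{itemize}
\item an atom $-c_1f_1(c_1)$ at $c_1$;
\item density $c_2f_2(c_2)f_1-z_1f_1'-2f_1\ge0$ on $[c_1,c_1+t]$ (this is where (iii) enters);
\item density $-(z_1f_1'+2f_1)\le0$ on $[c_1+t,p)$.
\end{itemize}
Subtract from $g=h(\cdot,c_2+b_2)$ its chord through $c_1$ and $c_1+t$; the chord has slope $\ge0$ because $g$ is nondecreasing. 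The difference $g-\text{chord}$ vanishes at $c_1$, is $\le0$ on $[c_1,c_1+t]$ and $\ge0$ beyond, so it integrates to $\le0$. The chord itself integrates to its slope times the first moment, which is also $\le0$.
\end{enumerate}

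Your Step 1 alone does give a complete, even convexity-free, proof in the special case $c_1=0$, where the left edge vanishes.
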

{\bf Proof:} See Appendix A.

We are now ready to prove Theorem \ref{thm:individual-sale}.

{\bf Proof of Theorem \ref{thm:individual-sale}:} The proof uses Theorem \ref{thm:opt_menu}, Lemma \ref{prop:W-region} and Lemma \ref{prop:A-region}, and is relegated to Appendix A.

The following result provides a characterization of the values of the threshold functions $c_2^*(c_1)$ and $c_1^*(c_2)$.

\begin{theorem}\label{thm:threshold}
\begin{enumerate}
 \item[(a)] The threshold $c_2^*(c_1)>\frac{1}{g_2(0)}$ for all $c_1<\frac{1}{g_1(0)}$. Furthermore, $c_2^*(c_1)\rightarrow\infty$ as $c_1\rightarrow\frac{1}{g_1(0)}$.
 \item[(b)] The threshold $c_1^*(c_2)>\frac{1}{g_1(0)}$ for all $c_2<\frac{1}{g_2(0)}$. Furthermore, $c_1^*(c_2)\rightarrow\infty$ as $c_2\rightarrow\frac{1}{g_2(0)}$.
\end{enumerate}
\end{theorem}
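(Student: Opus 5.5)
By symmetry it suffices to prove part (a); part (b) follows by exchanging the roles of items $1$ and $2$. Since $c_2^*(c_1)=\max_k c_2^{th_k}(c_1)$, it is enough to show that one of the three functions exceeds $\frac{1}{g_2(0)}$, and that one of them diverges as $c_1\to\frac{1}{g_1(0)}$. I will use $c_2^{th_1}$ for both claims, with $c_2^{th_3}$ as a backup for the first.

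\textbf{Strict inequality.} Write $p=\tilde{\phi}_1^{-1}(0)$. By Lemma~\ref{lem:mu-phi-i}(c), $p\in(c_1,c_1+b_1)$ whenever $c_1<\frac{1}{g_1(0)}$. Since $f_1>0$ on the support, this gives $0<F_1(p)<1$. Hence
\[
c_2^{th_1}(c_1)=\frac{1}{g_2(0)F_1(p)}>\frac{1}{g_2(0)},
\]
and therefore $c_2^*(c_1)\geq c_2^{th_1}(c_1)>\frac{1}{g_2(0)}$. As a consistency check, Lemma~\ref{prop:c2-th-3} already gives $c_2^{th_3}(c_1)\geq\frac{1}{g_2(0)}$. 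Strictness there would need $t(c_2)<b_1$, i.e.\ that the truncated mean at $t=b_1$ (which is $E[z_1]$) exceeds $\max z_1(1-F_1(z_1))$. That holds because $z_1(1-F_1(z_1))<E[z_1\mathbf{1}_{z_1\geq p}]\leq E[z_1]$ when $F_1(p)>0$. This backup is not needed, however.

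\textbf{Divergence.} I will show $F_1(p)\to0$ as $c_1\uparrow\frac{1}{g_1(0)}$, which forces $c_2^{th_1}(c_1)\to\infty$ and hence $c_2^*(c_1)\to\infty$. The point $p$ is defined by
\[
pf_1(p)=1-F_1(p).
\]
Set $s=p-c_1\in(0,b_1)$ and $G_1(s)=\int_0^s g_1$. The defining equation becomes
\[
(c_1+s)\,g_1(s)+G_1(s)=1.
\]
Define $h(s)=(c_1+s)g_1(s)+G_1(s)$. Because $g_1$ is positive and nondecreasing, $h$ is continuous and strictly increasing in $s$; this is Lemma~\ref{lem:mu-phi-i}(b). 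Also $h(s)\geq c_1g_1(0)+s\,g_1(0)$, since $g_1(s)\geq g_1(0)$ and $G_1(s)\geq0$. Combining this with $h(s)=1$ gives
\[
s\leq\frac{1-c_1g_1(0)}{g_1(0)}\longrightarrow0 \quad\text{as } c_1\uparrow\frac{1}{g_1(0)}.
\]
Then $F_1(p)=G_1(s)\leq s\,g_1(b_1)\to0$. Note that $G_1$ does not depend on $c_1$, so this bound is uniform.

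The only delicate step is this divergence argument. Its key ingredient is the lower bound $h(s)\geq g_1(0)(c_1+s)$, which relies on monotonicity of $g_1$. Once that bound is in place, everything else follows from the definitions and Lemma~\ref{lem:mu-phi-i}.
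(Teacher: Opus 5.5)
Your proof is correct and follows the paper's argument. Both use $c_2^*(c_1)\geq c_2^{th_1}(c_1)=\frac{1}{g_2(0)F_1(\tilde{\phi}_1^{-1}(0))}$, get strictness from $F_1(\tilde{\phi}_1^{-1}(0))<1$, and get divergence from $F_1(\tilde{\phi}_1^{-1}(0))\to0$. Your explicit bound $\tilde{\phi}_1^{-1}(0)-c_1\leq\frac{1}{g_1(0)}-c_1$, together with $F_1(\tilde{\phi}_1^{-1}(0))\leq g_1(b_1)\,(\tilde{\phi}_1^{-1}(0)-c_1)$, makes rigorous the step where the paper only asserts that $\tilde{\phi}_1(c_1)\to0$ forces $\tilde{\phi}_1^{-1}(0)\to c_1$.
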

{\bf Proof:} See Appendix A.

The values of $(c_1,c_2)$ for which the optimal mechanisms are deterministic is illustrated in Figure \ref{fig:deterministic-illust}. Observe that the threshold $c_2^*(c_1)>\frac{1}{g_2(0)}$ for all $c_1<\frac{1}{g_1(0)}$, and tends to infinity as $c_1$ approaches $\frac{1}{g_1(0)}$. So $c_2^*(c_1)$ is asymptotic to the line $c_1=\frac{1}{g_1(0)}$. Similar observation holds for the threshold $c_1^*(c_2)$.\footnote{The curve $c_2^*(c_1)$ is shown to be increasing with $c_1$ in Figure \ref{fig:deterministic-illust} for the purpose of illustration only. Whether $c_2^*$ is an increasing function of $c_1$ or not is still an open problem.}

\begin{figure}[h!]
\centering
\begin{tikzpicture}[scale=0.4,font=\small,axis/.style={very thick, ->, >=stealth'}]
\draw [axis,thick,->] (0,-1)--(0,10.5);
\node [above] at (0,10.5) {$c_2$};
\draw [axis,thick,->] (-1,0)--(10.5,0);
\node [right] at (10.5,0) {$c_1$};
\node at (-0.3,-0.4) {$0$};
\draw [thin,-] (-0.25,2) -- (0.25,2);
\node [left] at (-0.25,2) {$\frac{1}{g_2(0)}$};
\draw [thin,-] (-0.25,3) -- (0.25,3);
\node [left] at (-0.25,3) {$c_2^*(0)$};
\draw [thin,-] (2,-0.25) -- (2,0.25);
\node [below] at (2,-0.25) {$\frac{1}{g_1(0)}$};
\draw [thin,-] (3,-0.25) -- (3,0.25);
\node [below] at (4,-0.25) {$c_1^*(0)$};
\draw [thick,-] (2,2)--(2,10);
\node [right] at (2,8) {$c_1=\frac{1}{g_1(0)}$};
\draw [thick,-] (2,2)--(10,2);
\node [above] at (5,2) {$c_2=\frac{1}{g_2(0)}$};
\draw [thick,-] (3,0) to[out=15,in=-175] (10,0.8);
\node [right] at (0.6,7) {\rotatebox{90}{asymptotic}};
\draw [thick,-] (0,3) to[out=75,in=-95] (0.8,10);
\node [above] at (7,0.6) {asymptotic};

\draw [thick,-] (11,7)--(12,7)--(12,6)--(11,6)--(11,7);
\path[fill=blue!50!] (11,7)--(12,7)--(12,6)--(11,6)--(11,7);
\node [right] at (12,6.5) {Figure \ref{fig:e1}};
\draw [thick,-] (11,5)--(12,5)--(12,4)--(11,4)--(11,5);
\path[fill=red!50!] (11,5)--(12,5)--(12,4)--(11,4)--(11,5);
\node [right] at (12,4.5) {Figure \ref{fig:h1}};
\draw [thick,-] (11,3)--(12,3)--(12,2)--(11,2)--(11,3);
\path[fill=green!50,opacity=.5] (11,3)--(12,3)--(12,2)--(11,2)--(11,3);
\node [right] at (12,2.5) {Figure \ref{fig:c1}};

\path[fill=blue!50!] (0,3) to[out=75,in=-95] (0.8,10)--(0,10)--(0,4);
\path[fill=red!50!] (3,0) to[out=15,in=-175] (10,0.8)--(10,0)--(4,0);
\path[fill=green!50,opacity=.5] (2,10)--(2,2)--(10,2)--(10,10)--(2,10);
\end{tikzpicture}
\caption{An illustration of the values of $(c_1,c_2)$ for which the optimal mechanisms are known to be deterministic.}\label{fig:deterministic-illust}
\end{figure}

\subsection{Examples}\label{sec:examples}
I now look at two examples of distributions whose densities $f_1$ and $f_2$ are positive, nondecreasing and continuously differentiable.
\begin{example}\label{eg:unif}
Consider the buyer's valuation to be uniformly distributed in any arbitrary rectangle $D=[c_1,c_1+b_1]\times[c_2,c_2+b_2]$. So we have $g_i(z_i)=\frac{1}{b_i}$ for all $z_i\in[0,b_i]$, and $f_i(z_i)=g_i(z_i-c_i)=\frac{1}{b_i}$ for all $z_i\in[c_i,c_i+b_i]$. The threshold function $c_2^*(c_1)$ was derived in the beginning of Section \ref{sec:two-item} as $c_2^*(c_1)=2b_2\frac{b_1^2}{(b_1-c_1)^2}$. The derivation of $c_1^*(c_2)=2b_1\frac{b_2^2}{(b_2-c_2)^2}$ is similar. Now according to Theorem \ref{thm:individual-sale}, it is optimal for the seller to sell the items
\begin{itemize}
 \item at the price $\frac{c_1+b_1}{2}$ and $c_2$ respectively, when $\{c_1<b_1,c_2\geq2b_2\frac{b_1^2}{(b_1-c_1)^2}\}$,
 \item at the price $c_1$ and $\frac{c_2+b_2}{2}$ respectively, when $\{c_1\geq2b_1\frac{b_2^2}{(b_2-c_2)^2},c_2<b_2\}$,
 \item as a bundle when $\{c_1\geq b_1,c_2\geq b_2\}$.
\end{itemize}
The result is illustrated in Figure \ref{fig:examples}(a).
\end{example}

\begin{example}\label{eg:lin}
Consider the distributions of the buyer's valuation to have a linear density. Specifically, consider
$$
  g_1(z_1)=\frac{1}{4}(1+z_1),\,\forall z_1\in[0,2],\quad g_2(z_2)=\frac{2}{5}(2+z_2),\,\forall z_2\in[0,1].
$$
We thus have $f_1(z_1)=\frac{1}{4}(1+z_1-c_1),\,\forall z_1\in[c_1,c_1+2]$, and $f_2(z_2)=\frac{2}{5}(2+z_2-c_2),\,\forall z_2\in[c_2,c_2+1]$. The thresholds $c_2^{th_k}(c_1)$ can be derived as
\begin{align*}
  c_2^{th_1}(c_1)&=\frac{45}{c_1^2-2c_1+10-(c_1-1)\sqrt{c_1^2-2c_1+28}}, \\
  c_2^{th_2}(c_1)&=\begin{cases}\frac{5}{4}\left(2+\frac{\tilde{\phi}_1^{-1}(0)}{\tilde{\phi}_1^{-1}(0)+(1-c_1)}\right)&\mbox{if }c_1\leq 1,\\\frac{5}{4}(2+c_1)&\mbox{if }c_1\in(1,4),\end{cases}
\end{align*}
where $\tilde{\phi}_1^{-1}(0)=\frac{2(c_1-1)+\sqrt{c_1^2-2c_1+28}}{3}$. The value of $c_2^{th_3}(c_1)$ can be computed by solving the following equation for $c_2$.
\begin{multline*}
  \frac{c_2}{30}\left(2\left(c_1+\sqrt{1+\frac{10}{c_2}}-1\right)^3-2c_1^3+3(1-c_1)\left(\left(c_1+\sqrt{1+\frac{10}{c_2}}-1\right)^2-c_1^2\right)\right) \\
  =\frac{1}{108}\left(\left(2(c_1-1)+\sqrt{c_1^2-2c_1+28}\right)^2\left((1-c_1)+\sqrt{c_1^2-2c_1+28}\right)\right).
\end{multline*}

The threshold functions $c_2^{th_k}(c_1)$, $k=1,2,3$, are plotted in Figure \ref{fig:thr-eg-2}(a). Observe that the final threshold $c_2^*(c_1)$ equals $c_2^{th_2}(c_1)$ for low values of $c_1$, and then equals $c_2^{th_3}(c_1)$ thereafter.

The thresholds $c_1^{th_k}(c_2)$ can be derived as
\begin{align*}
  c_1^{th_1}(c_2)&=\frac{180}{2c_2^2-8c_2-1-2(c_2-2)\sqrt{c_2^2-4c_2+31}}, \\
  c_1^{th_2}(c_2)&=4\left(2+\frac{\tilde{\phi}_2^{-1}(0)}{\tilde{\phi}_2^{-1}(0)+(2-c_2)}\right),
\end{align*}
where $\tilde{\phi}_2^{-1}(0)=\frac{2(c_2-2)+\sqrt{c_2^2-4c_2+31}}{3}$. The value of $c_1^{th_3}(c_2)$ can be computed by solving the following equation for $c_1$.
\begin{multline*}
  \frac{c_1}{60}\left(2\left(c_2+\sqrt{4+\frac{20}{c_1}}-2\right)^3-2c_2^3+3(2-c_2)\left(\left(c_2+\sqrt{4+\frac{20}{c_1}}-2\right)^2-c_2^2\right)\right) \\
  =\frac{2}{135}\left(\left(2(c_2-2)+\sqrt{c_2^2-4c_2+31}\right)^2\left((2-c_2)+\sqrt{c_2^2-4c_2+31}\right)\right).
\end{multline*}

The threshold functions $c_1^{th_k}(c_2)$, $k=1,2,3$, are plotted in Figure \ref{fig:thr-eg-2}(b). Observe that the final threshold $c_1^*(c_2)$ equals $c_1^{th_2}(c_2)$ for low values of $c_1$, and then equals $c_1^{th_3}(c_2)$ thereafter.

\begin{figure}
\centering
\begin{tabular}{cc}
\subfloat[]{\includegraphics[scale=0.175]{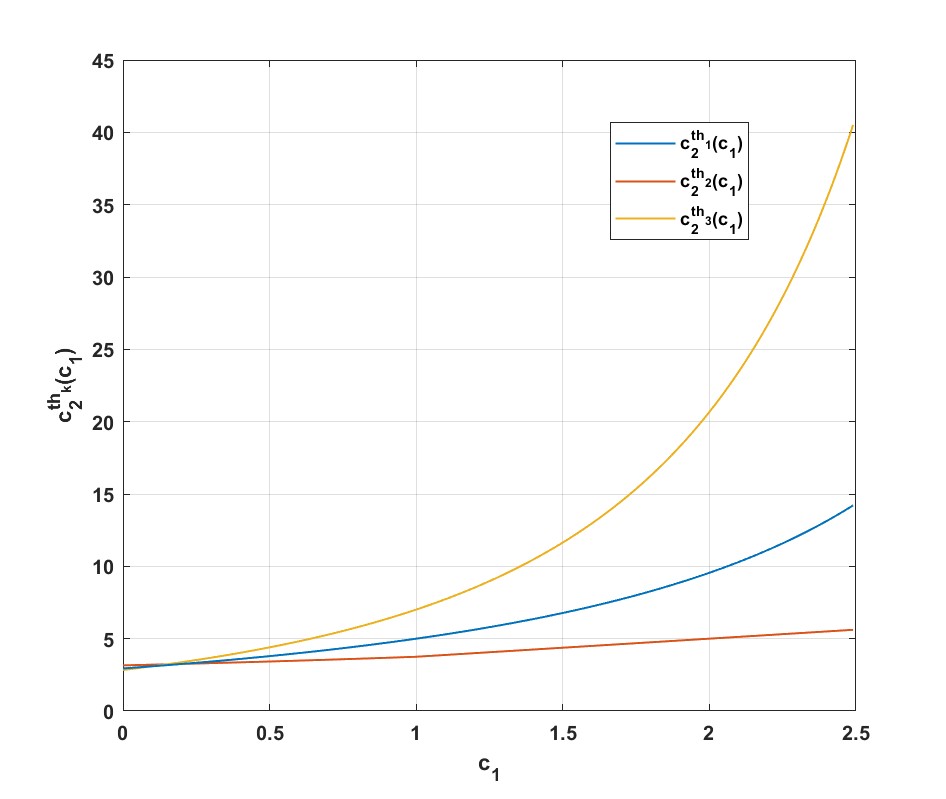}}&
\subfloat[]{\includegraphics[scale=0.175]{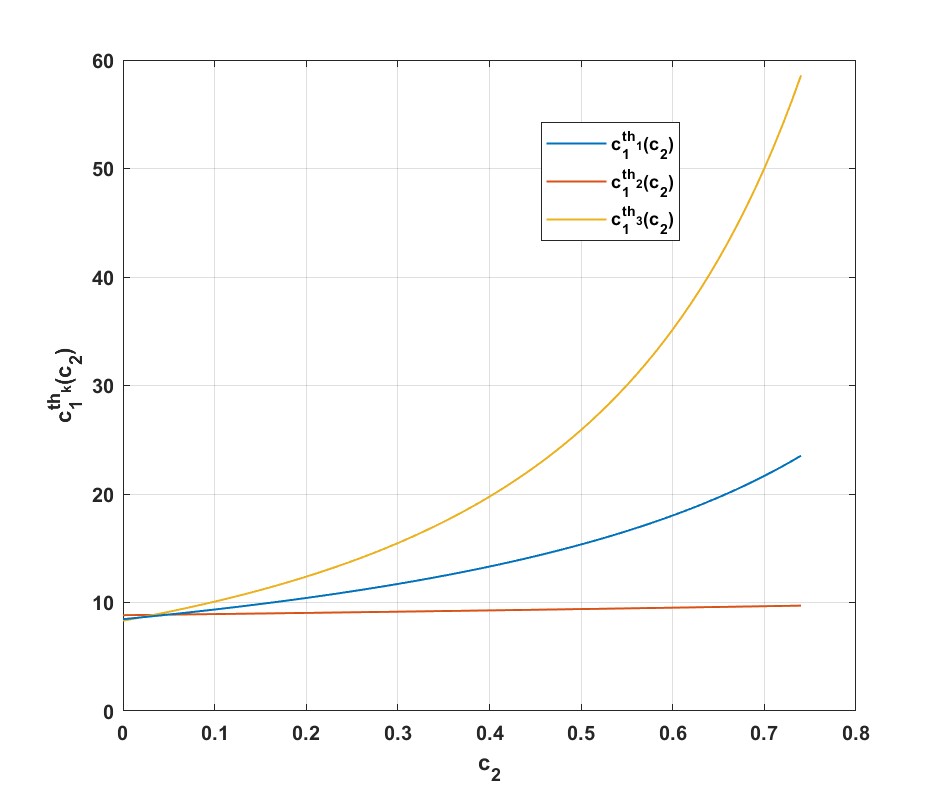}}
\end{tabular}
\caption{The threshold functions $c_2^{th_k}(c_1)$ and $c_1^{th_k}(c_2)$ when $g_1(z_1)=\frac{1}{4}(1+z_1)$ for $z_1\in[0,2]$ and $g_2(z_2)=\frac{2}{5}(2+z_2)$ for $z_2\in[0,1]$ respectively.}\label{fig:thr-eg-2}
\end{figure}

The detailed derivations of the thresholds are relegated to Appendix A. Now according to Theorem \ref{thm:individual-sale}, it is optimal for the seller to sell the items
\begin{itemize}
 \item at the price $\tilde{\phi}_1^{-1}(0)$ and $c_2$ respectively, when $\{c_1<4,c_2\geq c_2^*(c_1)\}$,
 \item at the price $c_1$ and $\tilde{\phi}_2^{-1}(0)$ respectively, when $\{c_1\geq c_1^*(c_2),c_2<\frac{5}{4}\}$,
 \item as a bundle when $\{c_1\geq 4,c_2\geq\frac{5}{4}\}$.
\end{itemize}
The result is illustrated in Figure \ref{fig:examples}(b).

\begin{figure}[h!]
\centering
\begin{tabular}{cc}
\subfloat[]{\includegraphics[scale=0.175]{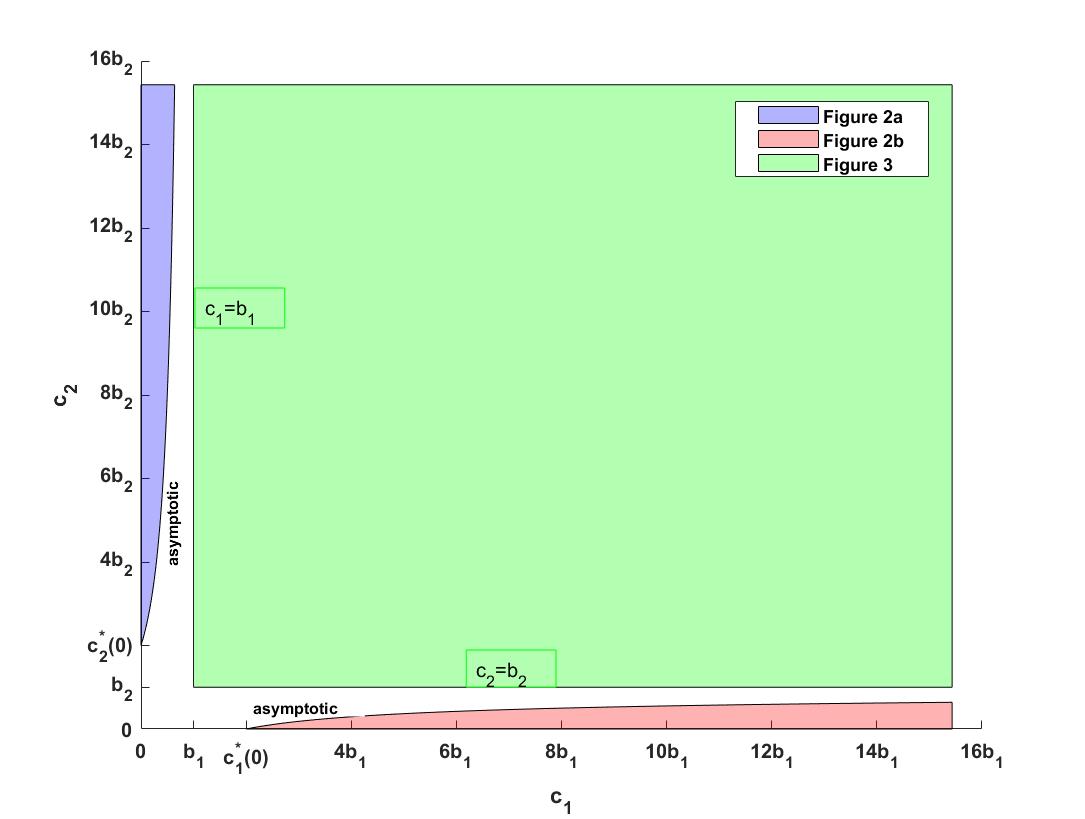}}&
\subfloat[]{\includegraphics[scale=0.175]{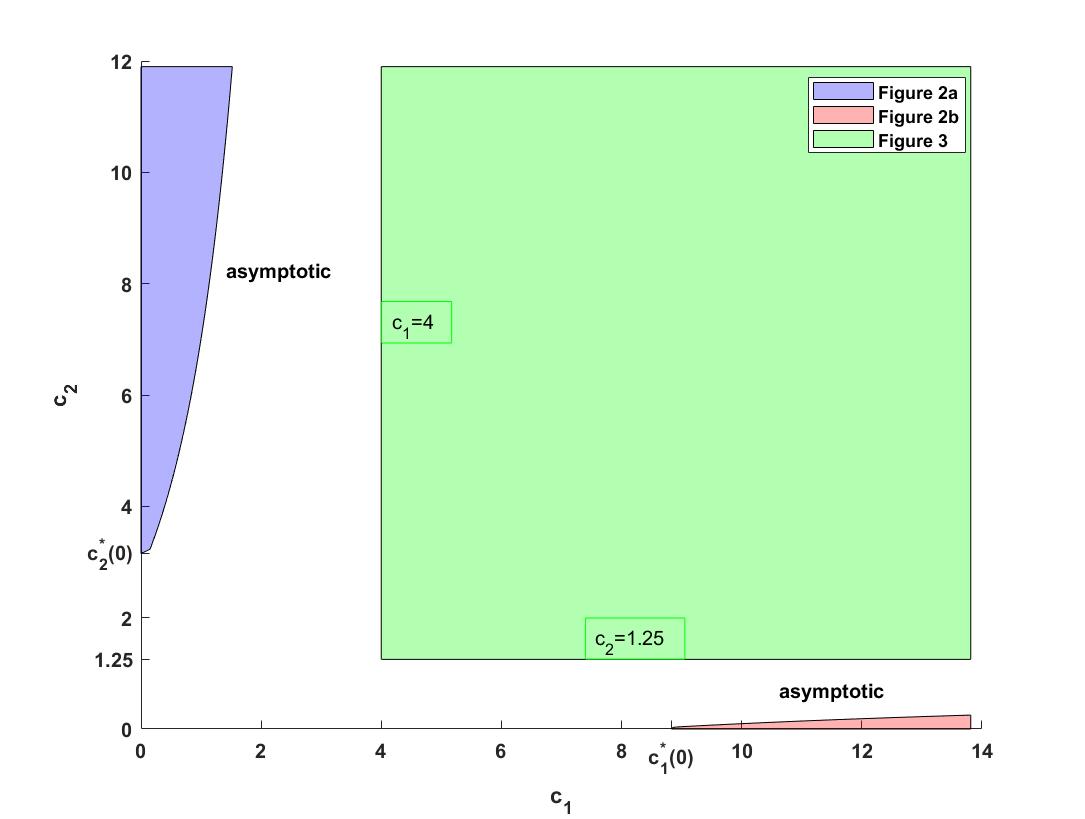}}
\end{tabular}
\caption{Optimal deterministic mechanisms for different values of $(c_1,c_2)$ when (a) $g_i(z_i)=\frac{1}{b_i}$ for $z_i\in[0,b_i]$, (b) $g_1(z_1)=\frac{1+z_1}{4}$ for $z_1\in[0,2]$ and $g_2(z_2)=\frac{2(2+z_2)}{5}$ for $z_2\in[0,1]$.}\label{fig:examples}
\end{figure}
\end{example}

\subsection{Individual Sale as the Optimal Mechanism}\label{sec:ind-sale}
An interesting question is whether it is optimal to sell the items individually for densities that are not nondecreasing. Putting Lemmas \ref{prop:W-region} and \ref{prop:A-region} together, we have the following result.
\begin{theorem}\label{thm:suff-individual}
Consider that the distribution of the buyer's valuation over the two items have densities $f_1$ and $f_2$, and that they are positive, continuously differentiable, and satisfy the following conditions.
\begin{enumerate}
 \item[(i)] $\mu(z)\leq0,\,\forall z\in D$,
 \item[(ii)] $\tilde{\phi}_1(\cdot)$ is strictly increasing, $\tilde{\phi}_1(c_1)<0$,
 \item[(iii)] $c_2f_2(c_2)f_1(z_1)\geq z_1f_1'(z_1)+2f_1(z_1),\,\forall z_1\in[c_1,\tilde{\phi}_1^{-1}(0)]$,
 \item[(iv)] There exists a $t\leq\tilde{\phi}_1^{-1}(0)-c_1$ such that $c_2f_2(c_2)F_1(c_1+t)=1$ and $c_2f_2(c_2)\left(\int_{c_1}^{c_1+t}z_1f_1(z_1)\,dz_1\right)\leq(\tilde{\phi}_1^{-1}(0))^2f_1(\tilde{\phi}_1^{-1}(0))$ hold.
\end{enumerate}
Then the optimal mechanism is to sell the items individually at prices $\tilde{\phi}_1^{-1}(0)$ and $c_2$ respectively. The structure of the optimal mechanism is as in Figure \ref{fig:e1}.
\end{theorem}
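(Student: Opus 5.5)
The proof is a direct assembly of earlier results. The mechanism that posts prices $\tilde{\phi}_1^{-1}(0)$ for item $1$ and $c_2$ for item $2$ has a finite menu. Its two constant allocation regions are
$$
W=\left[\tilde{\phi}_1^{-1}(0),c_1+b_1\right]\times[c_2,c_2+b_2]\ \text{with } q=(1,1),\qquad A=\left[c_1,\tilde{\phi}_1^{-1}(0)\right)\times[c_2,c_2+b_2]\ \text{with } q=(0,1).
$$
Item $2$ is always bought because its price equals the minimum valuation $c_2$. By Theorem \ref{thm:opt_menu}, this mechanism is optimal if and only if $\bar{\mu}|_W\preceq_{cvx(\overrightarrow{-1,-1})}0$ and $\bar{\mu}|_A\preceq_{cvx(\overrightarrow{1,-1})}0$. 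So the plan is to verify the hypotheses of Lemma \ref{prop:W-region} for the first condition and of Lemma \ref{prop:A-region} for the second, and then conclude.

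The first step is to make sure the mechanism is well defined, i.e.\ that $\tilde{\phi}_1^{-1}(0)$ exists and lies in $(c_1,c_1+b_1)$. Hypothesis (ii) gives strict monotonicity and $\tilde{\phi}_1(c_1)<0$. At the top of the support, $\tilde{\phi}_1(c_1+b_1)=(c_1+b_1)f_1(c_1+b_1)>0$ since $F_1(c_1+b_1)=1$ and $f_1>0$. Because $\tilde{\phi}_1$ is continuous, the intermediate value theorem gives a unique zero in the open interval. Hence both $W$ and $A$ are nonempty rectangles, and the price $\tilde{\phi}_1^{-1}(0)$ is also the Myerson price of Theorem \ref{thm:myerson}.

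Next I apply the lemmas. Hypotheses (i) and (ii) are exactly the assumptions of Lemma \ref{prop:W-region}, which yields $\bar{\mu}|_W\succeq_10$. This implies $\bar{\mu}|_W\preceq_{cvx(\overrightarrow{-1,-1})}0$: a convex function nonincreasing in both coordinates has nonincreasing $-h$, and first-order dominance applies to it. Hypotheses (i)--(iv) are verbatim the assumptions of Lemma \ref{prop:A-region}, which gives $\bar{\mu}|_A\preceq_{cvx(\overrightarrow{1,-1})}0$. Theorem \ref{thm:opt_menu} then certifies optimality, and the allocation structure is that of Figure \ref{fig:e1}.

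The only delicate point is a boundary mismatch. Lemma \ref{prop:A-region} uses a half-open region, while $W$ is closed. The boundary line $z_1=\tilde{\phi}_1^{-1}(0)$ carries no mass under the two-dimensional density $\mu$. The parts of it that meet $\partial D$ are only the two endpoints, which are null for the one-dimensional boundary density $\mu_s$. Also, $\mu_p$ sits at $(c_1,c_2)\in A$. So assigning the line to either region changes no integral, and the partition used by Theorem \ref{thm:opt_menu} is consistent. Beyond this check, all the substantive work is in the two lemmas, which are assumed.
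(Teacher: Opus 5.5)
Your proposal is correct and follows the paper's route exactly: the paper obtains this theorem by noting that hypotheses (i)--(ii) are those of Lemma~\ref{prop:W-region} and (i)--(iv) are those of Lemma~\ref{prop:A-region}, and then applying Theorem~\ref{thm:opt_menu} to the two constant-allocation regions $W$ and $A$. One small slip: for $h$ convex and nonincreasing in both coordinates, $-h$ is non\emph{decreasing}, which is what lets $\bar{\mu}|_W\succeq_1 0$ apply; and your boundary-line check is harmless but unnecessary, since the half-open $A$ and the closed $W$ already partition $D$.
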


The following remarks interpret the conditions in the Theorem.
\begin{remark}
Define the power rate function\footnote{The quantity $\frac{zf'(z)}{f(z)}$ has been denoted by different terms in the literature. We use the term {\em power rate} as introduced in \citet{WT14}.} of the density $f_i$, $PR_{f_i}:[c_i,c_i+b_i]\rightarrow\mathbb{R}$, as $PR_{f_i}(z_i)=\frac{z_if_i'(z_i)}{f_i(z_i)}$. Observe that the condition (i) in Theorem \ref{thm:suff-individual} can be rewritten as $PR_{f_1}(z_1)+PR_{f_2}(z_2)+3\geq0$ for all $z\in D$, and the condition (ii), $\tilde{\phi}_1(\cdot)$ increasing or $\tilde{\phi}_1'(\cdot)>0$, can be rewritten as $PR_{f_1}(z_1)+2>0$ for all $z_1\in[c_1,c_1+b_1]$. The power rate of $f_i$ is nonnegative for nondecreasing $f_i$, and thus the two conditions are vacuously satisfied. However, these conditions may also be satisfied by $f_i$'s that are not nondecreasing, given that the conditions only ensure that $f_i$'s do not decrease too quickly.
\end{remark}

\begin{remark}
Consider that the densities $f_1$ and $f_2$, with their support sets $[c_1,c_1+b_1]$ and $[c_2,c_2+b_2]$ respectively, satisfy conditions (iii) and (iv). Then the same densities with support sets $[c_1,c_1+b_1]$ and $[\hat{c}_2,\hat{c}_2+b_2]$ also satisfy conditions (iii) and (iv) whenever $\hat{c}_2\geq c_2$. In other words, the densities $f_1$ and $f_2$ continue to satisfy conditions (iii) and (iv) even when the support set of $f_2$ is shifted to higher values. This is shown as a part of the proof of Theorem \ref{thm:individual-sale} in Appendix A. The proof for conditions (iii) and (iv) does not use the fact that the densities $f_1$ and $f_2$ are nondecreasing, and thus the remark is true for more general densities as well.
\end{remark}

The conditions in Theorem \ref{thm:suff-individual} thus require that
\begin{itemize}
 \item The densities $f_1$ and $f_2$ do not decrease too quickly so as to satisfy conditions (i) and (ii), and
 \item The value of $c_2$, the minimum valuation for item $2$, is sufficiently high so as to satisfy conditions (iii) and (iv).
\end{itemize}
The conditions (i)-(iv) are sufficient but not necessary to conclude that the optimal mechanism is to sell the items individually. I provide an example in Appendix C such that the densities $f_1$ and $f_2$ violate condition (iii) of Theorem \ref{thm:suff-individual} but the optimal mechanism still is to sell the items individually.

I now provide an example where $f_1$ and $f_2$ are {\em decreasing} but satisfy all the conditions in Theorem \ref{thm:suff-individual}.
\begin{example}\label{eg:inv-square}
Consider $f_1(z_1)=\frac{1}{2\sqrt{z_1}}$ when $z_1\in[0,1]$, and $f_2(z_2)=\frac{1}{2\sqrt{z_2}}$ when $z_2\in[9,16]$. So we have
$$
  \mu(z)=-f(z)\left(\frac{z_1f_1'(z_1)}{f_1(z_1)}+\frac{z_2f_2'(z_2)}{f_2(z_2)}+3\right)=-f(z)(-0.5-0.5+3)=-2f(z)\leq0.
$$
Condition (i) is thus satisfied. Now we have $F_1(z_1)=\sqrt{z_1}$, $F_2(z_2)=\sqrt{z_2}-3$ in their respective domains. So,
$$
  \tilde{\phi}_1(z_1)=z_1f_1(z_1)-(1-F_1(z_1))=\frac{3}{2}\sqrt{z_1}-1,
$$
is increasing, $\tilde{\phi}_1(c_1)=-1<0$, and $\tilde{\phi}_1^{-1}(0)=\frac{4}{9}$. Condition (ii) is thus satisfied. Now note that $c_2f_2(c_2)f_1(z_1)=\frac{3}{2}\cdot\frac{1}{2\sqrt{z_1}}$, and $z_1f_1'(z_1)+2f_1(z_1)=-\frac{z_1}{4z_1^{1.5}}+\frac{1}{\sqrt{z_1}}=\frac{3}{4\sqrt{z_1}}$. So we have
$$
  c_2f_2(c_2)f_1(z_1)=z_1f_1'(z_1)+2f_1(z_1),\,\forall z_1\in[c_1,c_1+b_1],
$$
and thus condition (iii) is satisfied.

Now $c_2f_2(c_2)F_1(c_1+t)=1$ when $F_1(t)=\frac{2}{3}$ or when $t=\frac{4}{9}=\tilde{\phi}_1^{-1}(0)-c_1$. Also, $(\tilde{\phi}_1^{-1}(0))^2f_1(\tilde{\phi}_1^{-1}(0))=(\frac{4}{9})^2\cdot\frac{3}{4}=\frac{4}{27}$, and $c_2f_2(c_2)\left(\int_{c_1}^{c_1+t}z_1f_1(z_1)\,dz_1\right)=\frac{3}{2}\int_0^{\frac{4}{9}}\frac{\sqrt{z_1}}{2}\,dz_1=\frac{3}{2}\cdot\frac{8}{27}\cdot\frac{1}{3}=\frac{4}{27}$. So we have
$$
  c_2f_2(c_2)\left(\int_{c_1}^{c_1+t}z_1f_1(z_1)\,dz_1\right)=(\tilde{\phi}_1^{-1}(0))^2f_1(\tilde{\phi}_1^{-1}(0)),
$$
and thus condition (iv) is also satisfied. The optimal mechanism thus is to sell the items at price $\frac{4}{9}$ and $9$ respectively.
\end{example}

We thus have an example where both $f_1$ and $f_2$ are decreasing but the optimal mechanism is to sell the items individually at prices $\tilde{\phi}_1^{-1}(0)$ and $c_2$. From Remark 3, conditions (iii) and (iv) hold true when for a fixed $\hat{c}_2\geq9$, the density $f_2$ is modified as $f_2(z_2)=\frac{1}{2\sqrt{z_2+9-\hat{c}_2}}$ for all $z_2\in[\hat{c}_2,\hat{c}_2+7]$. But condition (i) holds true only when $\hat{c}_2\leq45$. This is because
$$
  \mu(z)=-f(z)\left(-0.5-\frac{z_2}{2(z_2+9-\hat{c}_2)}+3\right)=-f(z)\left(2.5-\frac{z_2}{2(z_2+9-\hat{c}_2)}\right)\leq0
$$
is satisfied for all $z_2\in[\hat{c}_2,\hat{c}_2+7]$ only when $\hat{c}_2\leq45$. We thus know from Theorem \ref{thm:suff-individual} that the optimal mechanism is to sell the items individually when $\hat{c}_2\in[9,45]$. But it is not clear whether the optimal mechanism is deterministic when $\hat{c}_2>45$, given that the condition $\mu(z)\leq0$ is not satisfied for all $z\in D$.

Consider that the density $f_2$ is decreasing in an interval $[c_2+a,c_2+b]\subseteq[c_2,c_2+b_2]$. Then $f_2'(z_2)<0$ for all $z_2\in(c_2+a,c_2+b)$. We can thus choose a sufficiently large $\hat{c}_2$ such that when the support set of $f_2$ is $[\hat{c}_2,\hat{c}_2+b_2]$, we have
$$
  \mu(z)=\frac{z_1f_1'(z_1)}{f_1(z_1)}+\frac{z_2f_2'(z_2)}{f_2(z_2)}+3>0
$$
for some $z_2\in[\hat{c}_2+a,\hat{c}_2+b]$, since $f_2'(z_2)<0$ for all $z_2\in(\hat{c}_2+a,\hat{c}_2+b)$. So the condition $\mu(z)\leq0$ for all $z\in D$ fails to hold for a sufficiently large upward shift of the support set of $f_2$. The above argument thus indicates that the optimality of deterministic mechanisms for more general distributions, even if true, cannot be proved using the techniques employed in this section.

\subsection{Discussion}\label{sec:discussion}
I now discuss the interpretations of the results derived in this section.

{\em Tightness of the threshold $c_2^*(c_1)$:} The threshold $c_2^*(c_1)$ mentioned in Theorem \ref{thm:individual-sale} is not tight in general. In other words, there exist distributions where the threshold beyond which the optimal mechanism is to sell the two heterogeneous items individually are lower than the threshold $c_2^*(c_1)$. I establish this in Appendix C using a counter-example. Specifically, I consider the same densities in Example \ref{eg:lin}, and show that $c_2^*(0)\approx3.154$ is not tight; the optimal mechanism is to sell the items individually even when $c_2\geq3$. Deriving the exact threshold values, even in the case of positive, nondecreasing and continuously differentiable densities, remains an open problem.

{\em Extension to other distributions:} Regarding the optimality of individual sale mechanism, a relevant question is whether the optimality holds for general distributions when $c_1$ or $c_2$ is increased indefinitely. I now provide a counter-example to show that the optimality result need not hold for all distributions.

Let $g_i\sim\mbox{Beta}(3,3)$, i.e., $g_i(z_i)=30z_i^2(1-z_i)^2\mathbf{1}\{z_i\in[0,1]\}$, $i=1,2$. The density decreases when $z_i\geq0.5$, and thus does not satisfy the assumptions in this section. Neither is the density positive since $g_i(0)=g_i(1)=0$. At $c_1=0$, we have $\tilde{\phi}_1^{-1}(0)\approx0.3981$, and thus the individual sale mechanism would allocate both items to the buyer when $z_1\geq0.3981$, and allocate only item $2$ otherwise. Let $A=[0,\tilde{\phi}_1^{-1}(0))\times[c_2,c_2+b_2]$. I now show that $\bar{\mu}|_A\npreceq_{cvx(\overrightarrow{1,-1})}0$ for any $c_2\geq0$. I show this by considering a convex $u$ that is nondecreasing in $z_1$ and nonincreasing in $z_2$, and proving that $\int_Au\,d\bar{\mu}>0$ for any $c_2$. Let $u(z_1,z_2)=\max(0,c_2+1/c_2-z_2)$. Now
$$
  \int_Au\,d\bar{\mu}=c_2^5-(3.142)c_2^4+(4.713)c_2^3-(6.74)c_2^2+(6.77)c_2-(2.571)>0
$$
for all $c_2\geq1.5$. Also, if $u(z_1,z_2)=\max(0,c_2+2/3-z_2)$, we have
$$
  \int_Au\,d\bar{\mu}=(0.3732)-(0.248)c_2>0
$$
for all $c_2\leq1.5$. So at $c_1=0$, the individual sale mechanism is not optimal for any $c_2$.

We thus have a counter-example to show that the optimality result need not hold for all distributions. However, it may hold for more general distributions than those having a positive, nondecreasing and continuously differentiable density, as indicated by Example \ref{eg:inv-square}. Mathematical analysis to show necessary conditions on such distributions are notoriously hard given that it involves verification of second-order stochastic dominance conditions in two-dimensional space.

{\em Extension to the unit-demand setting:} Another interesting question is whether the property of optimal mechanisms being deterministic for sufficiently high minimum valuations holds true in the two-item unit-demand setting as well. I now provide an answer based on the exact optimal mechanisms known in the literature for different distributions.

Consider the setting when the valuations of the buyer, $z_1$ and $z_2$, are independently and identically distributed according to $f$, where $f=$Unif$[c,c+1]$. The result in \citet[Example.~1]{Pavlov11} shows that the optimal mechanism is {\em not deterministic} when $c>1$. Thus the optimal mechanism need not be deterministic for sufficiently high minimum valuations even in the simplest setting of identical distributions.

Now consider a setting when $z_1$ and $z_2$ are independent but not identically distributed. Specifically, let $z_1\sim\mbox{Unif}[c,c+b_1]$ and $z_2\sim\mbox{Unif}[c,c+b_2]$, with $b_1\ne b_2$. The result in \citet[Thm.~12]{Thiru-Unit-19} shows that the optimal mechanism in this setting is {\em deterministic} for sufficiently high minimum valuations.

The contrasting set of results suggest that the property of optimal mechanisms being deterministic for high minimum valuations cannot be extended in a straight-forward manner to the unit-demand setting. A more careful analysis is required to characterize the distributions for which the property holds in the unit-demand setting.

\section{Extension to Three-Item Setting}\label{sec:ext}
I now consider an extension where three heterogeneous items are sold to a single buyer. So I consider three distributions whose densities are given by $g_i:[0,b_i]\rightarrow\mathbb{R}_+$, $i=1,2,3$, each of which is strictly positive, nondecreasing and continuously differentiable in its domain. The distribution of buyer's valuation for item $i$ is given by
$$
  f_i(z_i)=\begin{cases}g_i(z_i-c_i)&\mbox{if }z_i\in[c_i,c_i+b_i],\\0&\mbox{else},\end{cases}
$$
for some $c_i\geq0$. I now analyze how the optimal mechanism varies with $(c_1,c_2,c_3)$. I begin by observing that the properties of $\mu$ and $\tilde{\phi}_i$ proved in Lemma \ref{lem:mu-phi-i} are true in the three-item setting as well. I skip the proof given that it is exactly the same as the proof of Lemma \ref{lem:mu-phi-i}.

I am now ready to state the main results in the three-item setting. The proofs are relegated to Appendix B.
\begin{theorem}\label{thm:three-high}
Consider that the minimum valuations for the items, $c_1,c_2,c_3$, are such that $c_i\geq\frac{1}{g_i(0)}$. Then the optimal mechanism is to sell all the three items as a bundle. The structure of the optimal mechanism is as depicted in Figure \ref{fig:three-item}(a).
\end{theorem}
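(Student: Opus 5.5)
The plan is to apply Theorem \ref{thm:opt_menu} directly in three dimensions. Lemma \ref{lem:mu-phi-i} gives $\mu(z)<0$ on $D$. The hypothesis $c_i\geq\frac{1}{g_i(0)}$ gives $\tilde{\phi}_i(c_i)=c_if_i(c_i)-1\geq 0$, and since $\tilde{\phi}_i$ is increasing, $\tilde{\phi}_i\geq 0$ on the whole support. Let $p$ be the optimal bundle price, i.e. the maximiser of $p\,\Pr(z_1+z_2+z_3\geq p)$. The mechanism has two constant-allocation regions:
\[
Z=\{z\in D:z_1+z_2+z_3<p\},\qquad q=(0,0,0),\ \vec{v}=(1,1,1),
\]
\[
W=\{z\in D:z_1+z_2+z_3\geq p\},\qquad q=(1,1,1),\ \vec{v}=(-1,-1,-1).
\]
It therefore suffices to prove $\bar{\mu}|_Z\preceq_{cvx(\overrightarrow{1,1,1})}0$ and $\bar{\mu}|_W\preceq_{cvx(\overrightarrow{-1,-1,-1})}0$. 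For the latter I will prove the stronger statement $\bar{\mu}|_W\succeq_1 0$, as in Lemma \ref{prop:W-region}.

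\emph{Region $Z$.} On $Z$ the only positive mass of $\bar{\mu}$ is the unit point mass at the minimal corner $(c_1,c_2,c_3)$. The interior density is $\mu<0$, and the lower faces carry $\mu_s=-c_if\leq 0$, at least as long as $Z$ does not meet the upper faces of $D$. The first step is to show $\bar{\mu}(Z)=0$. By the divergence identity $\mu=-\nabla\cdot(zf)-f$ together with the boundary terms, $\bar{\mu}(Z)$ is, up to sign, the derivative of $p\,(1-G(p))$, where $G$ is the CDF of $z_1+z_2+z_3$; it therefore vanishes at the optimal $p$. For any $u$ that is nondecreasing in each coordinate, $u-u(c_1,c_2,c_3)\geq 0$ on $Z$. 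Hence
\[
\int_Z u\,d\bar{\mu}=\int_Z\bigl(u-u(c_1,c_2,c_3)\bigr)\,d\bar{\mu}\leq 0,
\]
because the integrand vanishes at the corner and the measure is nonpositive elsewhere. This gives the $cvx(\overrightarrow{1,1,1})$ dominance even without convexity. The point to check is that $Z$ avoids the upper faces, i.e. that $p\leq\min_i(c_i+b_i)+\sum_{j\neq i}c_j$. I would derive this from $\tilde{\phi}_i\geq0$: the marginal revenue of the bundle should already be nonpositive before the cutting hyperplane reaches an upper face. If this fails, I would instead bound the upper-face contribution by slicing.

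\emph{Region $W$.} Here I would show $\bar{\mu}(U\cap W)\geq 0$ for every upper set $U$, following the two-item argument behind Theorem \ref{thm:menicucci} and Lemma \ref{prop:W-region}. Integrating the interior density along the $z_i$-direction, $-\partial_i(z_if)-f_i(z_i)\prod_{j\neq i}f_j$ combines with the outward flux on the upper face $z_i=c_i+b_i$ to produce terms of the form $\tilde{\phi}_i(\cdot)\prod_{j\neq i}f_j\geq 0$ at the lower boundary of each slice. I would split $-3f-z\cdot\nabla f$ into three one-dimensional pieces of this kind and slice $U\cap W$ along lines parallel to each axis. Each slice contributes $\tilde{\phi}_i$ evaluated at its lower endpoint, which is nonnegative; the hyperplane part of $\partial W$ carries no mass in $\bar{\mu}|_W$, so the slicing is clean.

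I expect the main obstacle to be this $W$ step: distributing $\mu$ among the three coordinate directions so that every slice of an arbitrary three-dimensional upper set, cut by the bundle hyperplane, gets a nonnegative total. The natural symmetric split $\mu=\sum_i\bigl(-\partial_i(z_if)-f\bigr)$ is what I would try first; if some slice then picks up a negative term, I would fall back to proving only $\bar{\mu}|_W\preceq_{cvx(\overrightarrow{-1,-1,-1})}0$ using convexity, via a two-dimensional reduction on the faces $z_3=$ const combined with Theorem \ref{thm:menicucci}.
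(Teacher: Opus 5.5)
\textbf{The gap is the point you flagged in the $Z$ step, and it cannot be closed.} You need the optimal total price to satisfy $p\le c_i+b_i+\sum_{j\ne i}c_j$ for every $i$. This does not follow from $\tilde\phi_i\ge 0$. When it fails, the bundle mechanism is not optimal, so your slicing fallback cannot work either.

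In two dimensions the analogous condition does follow from $c_jf_j(c_j)\ge 1$. By the time the cut reaches the vertex $(c_1+b_1,c_2)$, the lower edge $z_2=c_2$ alone carries mass $-c_2f_2(c_2)\le -1$, so $\bar\mu(Z)$ is already negative. In three dimensions, the lower face $z_2=c_2$ inside the corner simplex carries only $-c_2f_2(c_2)\Pr(\tilde z_1+\tilde z_3\le b_1)$, where $\tilde z_j=z_j-c_j$, and this can be small.

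Here is a counterexample. Take $f$ uniform on $[0.1,0.2]\times[1,2]\times[1,2]$, so $c_i=b_i=1/g_i(0)$ and the hypotheses hold. The densities of $\bar\mu$ are:
\begin{itemize}
\item $\mu=-40$ in the interior;
\item $\mu_s=-1$ on $\{z_1=0.1\}$;
\item $\mu_s=-10$ on $\{z_2=1\}$ and on $\{z_3=1\}$;
\item $\mu_s=+2$ on $\{z_1=0.2\}$.
\end{itemize}
The corner simplex whose cut passes through $(0.2,1,1)$ has $\bar\mu$-mass $1-40/6000-21(0.005)\approx 0.89>0$. Since $\bar\mu(Z)=R'(p)$ for $R(p)=p\Pr(z_1+z_2+z_3\ge p)$, bundle revenue is still increasing at that point. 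The optimal total price is in fact $p\approx 2.52$.

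So $Z$ contains the patch $\{z_1=0.2,\ z_2+z_3<2.32\}$, which has positive density. Test with $h=(z_1-a)_+$, which is convex and nondecreasing. The patch contributes a term of order $0.2-a$; every other part of $Z$ contributes $O((0.2-a)^2)$. For $a=0.19$ one gets $\int_Z h\,d\bar\mu\approx 6\times 10^{-4}>0$.

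Economically, add ``item 1 alone at price $0.19$'' to the bundle menu. This gains about $10^{-3}$ from buyers in $Z$ with $z_1>0.19$. It loses only about $4\times 10^{-4}$ from buyers who switch away from the bundle. So bundling is strictly beaten.

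The same failure occurs whenever the revenue-maximizing cut meets an upper face. Test with $h=(z_i-a)_+$ and $a\uparrow c_i+b_i$: the positive face mass enters at first order, everything else at second order. The statement therefore needs an extra hypothesis: the optimal bundle cut must meet no upper face. In the paper's normalisation, where $p$ is the excess over $c_1+c_2+c_3$, this reads $p\le\min_i b_i$. Under that hypothesis your bound $\int_Z u\,d\bar\mu\le u(c_1,c_2,c_3)\,\bar\mu(Z)=0$ is correct and coincides with the paper's.

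The paper's own proof carries the same unstated assumption. Its integration limits $\int_{c_3}^{c_3+p}\int_{c_2}^{c_2+c_3+p-z_3}\cdots$ and its list of face densities treat $Z$ as the corner simplex, which is never justified.

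Your $W$ plan also needs repair, for two reasons.
\begin{itemize}
\item $\mu=-\nabla\cdot(zf)-f$ contains a single $-f$, not one per coordinate.
\item A slice along $z_i$ that starts on the lower face $\{z_i=c_i\}$ picks up the mass $-c_if$, and $W$ does meet these faces. Under your split such a slice contributes $(\tilde\phi_i(c_i)-c_if_i(c_i))\prod_{j\ne i}f_j=-\prod_{j\ne i}f_j$, not $\tilde\phi_i(c_i)\prod_{j\ne i}f_j$.
\end{itemize}
The paper instead evaluates $\bar\mu$ on boxes $\times_i[c_i+t_i,c_i+b_i]$. There only one coordinate produces a $\tilde\phi$ term. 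The others produce terms $(c_j+t_j)f_j(c_j+t_j)$, which cancel exactly against the lower-face masses when $t_j=0$.
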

\begin{theorem}\label{thm:one-high}
Consider the setting where the valuations of the buyer, $(z_1,z_2,z_3)$, are distributed as $z_1\sim\mbox{Unif}[0,1]$, $z_2\sim\mbox{Unif}[0,1]$, and $z_3\sim\mbox{Unif}[c_3,c_3+1]$, with $c_3\geq3$. Then the optimal mechanism is as follows:
\begin{itemize}
 \item Item $3$ is sold individually for a price of $c_3$.
 \item Items $1$ and $2$ are bundled together and sold as follows.
 \begin{itemize}
  \item Item $1$ is sold for a price of $\frac{2}{3}$.
  \item Item $2$ is sold for a price of $\frac{2}{3}$.
  \item The bundle having items $1$ and $2$ is sold for a price of $\frac{4-\sqrt{2}}{3}$.
 \end{itemize}
\end{itemize}
 The structure of the optimal mechanism is as depicted in Figure \ref{fig:three-item}(b).\end{theorem}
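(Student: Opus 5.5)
The plan is to apply the optimal menu theorem (Theorem \ref{thm:opt_menu}) with $n=3$, taking as given the known two-dimensional optimal mechanism for Unif$[0,1]^2$ (\cite{DDT13}, \cite{GK15}). That mechanism sells each item at $2/3$ and the bundle at $(4-\sqrt2)/3$. Its four constant-allocation regions in $[0,1]^2$ are $Z$ (allocation $(0,0)$), $A_1$ ($(1,0)$), $A_2$ ($(0,1)$) and $W$ ($(1,1)$). The candidate three-item mechanism has the regions $R\times[c_3,c_3+1]$ for $R\in\{Z,A_1,A_2,W\}$, with item $3$ always allocated.

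Here $f\equiv1$ on $D$, so \eqref{eqn:mu} gives $\mu=-4$ on $D$. The boundary density \eqref{eqn:mu-s} equals $z\cdot\hat n$, so it is $0$ on the faces $z_1=0$ and $z_2=0$, $1$ on $z_1=1$ and $z_2=1$, $-c_3$ on $z_3=c_3$, and $c_3+1$ on $z_3=c_3+1$. There is also a unit point mass at $(0,0,c_3)$. For each region I must show $\bar\mu|_{R\times[c_3,c_3+1]}\preceq_{cvx(v_1,v_2,-1)}0$, where $(v_1,v_2)$ is the vector the two-dimensional theorem assigns to $R$.

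\textbf{Step 1: reduction to the two-dimensional problem.} Take a test function $h(z_1,z_2,z_3)$ that is convex and nonincreasing in $z_3$. Integrate the three-dimensional measure in $z_3$ first, and compare the result with $c_3\,\bar\mu^{(2)}$ plus correction terms, where $\bar\mu^{(2)}$ is the two-dimensional measure for Unif$[0,1]^2$ ($\mu^{(2)}=-3$, the same boundary densities, and the point mass).

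\emph{Correction from $\mu$.} The interior contributes $-4\int h$. Writing $z_3=c_3+s$, the $z_3$-faces contribute $-c_3h(\cdot,c_3)+(c_3+1)h(\cdot,c_3+1)$.

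\emph{Correction from the $z_3$-faces.} Expanding $h(\cdot,c_3+1)-h(\cdot,c_3)=\int_0^1\partial_3h\,ds$ and applying Fubini together with integration by parts in $s$, the $z_3$-direction produces $\int_0^1 h\,ds+\int_0^1 (c_3+s)\partial_3 h\,ds$. Combined with $-4\int h$, this leaves $-3\int h$ plus terms containing $\partial_3h\le0$ with coefficient $(c_3+s)>0$.

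\emph{Correction from the point mass.} The point mass and the $z_1,z_2$ boundary terms are weighted by length $1$ in $z_3$, while the point mass sits only on the line $z_3=c_3$, so it needs separate handling. I will decompose $h(z_1,z_2,z_3)=h(z_1,z_2,c_3)+[h(z_1,z_2,z_3)-h(z_1,z_2,c_3)]$.

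The first piece is a two-dimensional convex function $\tilde h(z_1,z_2)$ with the right monotonicity. On it the three-dimensional measure acts as $\bar\mu^{(2)}$ restricted to $R$, up to the factor structure, and the two-dimensional optimality certificate gives $\le0$.

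The second piece $k:=h-h(\cdot,c_3)$ vanishes at $z_3=c_3$, is nonpositive, and is nonincreasing in $z_3$, so the point mass does not see it. For this piece, with the terms $-c_3\cdot0$ and $(c_3+1)k(\cdot,c_3+1)$, I must show the remaining integral is $\le0$. This is where $c_3\ge3$ should enter. The $z_3$-integral of $k$ against $\mu=-4$ together with the top face $(c_3+1)$ is of the form $(c_3+1)k(1)-4\int_0^1k\,ds$. For concave-down profiles this could be positive, so I will need to couple with the $z_1,z_2$ directions: the mass $-4$ in the interior dominates the positive faces $z_1=1$ and $z_2=1$ and the top face.

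\textbf{Step 2: the main obstacle.} Verifying Step 1's second piece is the hard step, particularly on the region $Z$ (allocation $(0,0,1)$, vector $(1,1,-1)$). The approach there is to construct an explicit transport, i.e.\ a shuffling of the positive mass onto the negative mass in directions consistent with the cone for $(1,1,-1)$. The positive mass consists of the top face with density $c_3+1$ and the point mass; the negative mass is the bottom face $-c_3$ and the interior $-4$. Each unit of top-face mass moves straight down in $z_3$, which is allowed since $h$ is nonincreasing in $z_3$. The leftover mismatch of order $1$ is then absorbed by interior mass, and checking that absorption yields the inequality requiring $c_3\ge3$.

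For $W$ and $A_1$, $A_2$ I will instead try to reuse the first-order argument in the style of Lemma \ref{prop:W-region}: showing $\bar\mu|_W\succeq_10$ in three dimensions suffices for the cone with all $-1$ entries. For $A_1$ and $A_2$, I will lift the two-dimensional exchange constructions of \cite{DDT17} fiberwise in $z_3$, adding a vertical transport in $z_3$.

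Finally, I will check that the candidate is IC/IR. This is immediate, since it is a posted-price menu: item $3$ is offered at $c_3$ together with the two-item menu. The conclusion then follows from Theorem \ref{thm:opt_menu}.
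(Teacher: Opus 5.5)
\paragraph{Gap: the second piece is never proved, and the reason given for its difficulty is false.}
Your splitting $h=\tilde h+k$ with $\tilde h=h(\cdot,\cdot,c_3)$ is the right idea. Your treatment of the first piece is correct: integrating out $z_3$ turns $\bar\mu|_{R\times[c_3,c_3+1]}$ into exactly $\bar\mu^{(2)}|_R$ (not $c_3\bar\mu^{(2)}$), so the known two-dimensional certificate applies.

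You claim that $(c_3+1)k(\cdot,\cdot,c_3+1)-4\int_0^1k\,ds$ ``could be positive''. This is false once $c_3\ge3$. Since $k$ is nonincreasing in $z_3$ with $k(\cdot,\cdot,c_3)=0$, you have $\int_0^1k\,ds\ge k(\cdot,\cdot,c_3+1)$. Hence
\[
(c_3+1)\,k(\cdot,\cdot,c_3+1)-4\int_0^1k\,ds\;\le\;(c_3-3)\,k(\cdot,\cdot,c_3+1)\;\le\;0
\]
for every admissible profile. The remaining terms are also harmless. The faces $z_1=1$ and $z_2=1$ carry density $+1$ against $k\le0$, and the bottom face and the point mass see $k=0$. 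So no coupling with the $z_1,z_2$ directions is needed.

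This one-line bound is exactly where $c_3\ge3$ enters, and it is the paper's argument on $Z$. There, \eqref{eqn:Z-3} is your first piece. The left-hand sides of \eqref{eqn:Z-1} and \eqref{eqn:Z-2} are $4\bigl(k(\cdot,\cdot,c_3+1)-\int_0^1k\,ds\bigr)$ and $(c_3-3)k(\cdot,\cdot,c_3+1)$, each integrated over the base. The same bound applies verbatim to your other three regions.

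\paragraph{Step 2 does not supply a proof.}
Because you treat this step as the main obstacle, Step 2 is what you actually offer. The plans for $W$ and for your regions $A_1,A_2$ are only named, not carried out.

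The transport sketch for $Z$ also misplaces the role of $c_3\ge3$. Under the $(1,1,-1)$ order, the unit mass at $(0,0,c_3)$ can be absorbed only by negative mass on the bottom face $z_3=c_3$, never by interior mass. To see this, take the admissible test function $\max(0,c_3+\epsilon-z_3)$. Since the base of $Z$ has area $1/3$, it gives $\int_Z h\,d\bar\mu=\epsilon(1-c_3/3)-2\epsilon^2/3$, which is positive for small $\epsilon$ when $c_3<3$.

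So $c_3\ge3$ is precisely the requirement that the bottom face can absorb the point mass. The top face then fills the interior and the remaining $c_3-3$ of bottom-face density. Absorbing the ``leftover mismatch of order 1'' by interior mass is not a legitimate move.

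\paragraph{Comparison with the paper once the gap is closed.}
With the second piece closed as above, your route is more economical than the paper's. The paper handles $A$, $B$ and $W$ by ad hoc splittings into zero-mass pieces (its own $A_1,A_2,A_3$ and $W_1,W_2,W_3$), which in effect reprove the two-dimensional certificates by hand. Your decomposition imports two-dimensional optimality through the ``only if'' direction of Theorem \ref{thm:opt_menu} and treats all four regions uniformly. In fact, your $z_3$ integration-by-parts computation already settles the three point-mass-free regions slice by slice for any $c_3\ge0$. Only $Z$ genuinely needs $c_3\ge3$.
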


\begin{figure}
\centering
\begin{tabular}{cc}
\subfloat[]{\begin{tikzpicture}[scale=2]
\draw[->] (0,0,0) -- (1.5,0,0) node[anchor=south east]{$\bm{z_1}$};
\draw[->] (0,0,0) -- (0,0,1.5) node[anchor=north]{$\bm{z_2}$};
\draw[->] (0,0,0) -- (0,1.5,0) node[anchor=north west]{$\bm{z_3}$};

\draw[thick] (0,0,0)--(1,0,0)--(1,1,0)--(0,1,0)--cycle;
\draw[thick] (0,0,1)--(1,0,1)--(1,1,1)--(0,1,1)--cycle;
\draw[thick] (0,0,0)--(0,0,1);
\draw[thick] (1,0,0)--(1,0,1);
\draw[thick] (1,1,0)--(1,1,1);
\draw[thick] (0,1,0)--(0,1,1);

\draw[thick] (0,0,0.5)--(0,0.5,0)--(0.5,0,0)--cycle;
\fill[gray!20, opacity=0.5] (0,0,0.5)--(0,0.5,0)--(0.5,0,0)--cycle;

\node[anchor=north west] at (0,0,0) {\tiny$c_1$};
\node[anchor=north east] at (0,0,-0.1) {\tiny$c_2$};
\node[anchor=south west] at (0,0,0) {\tiny$c_3$};
\node[anchor=north] at (0.5,0,0) {\tiny$c_1+p$};
\node[anchor=north east] at (-0.1,0,0.3) {\tiny$c_2+p$};
\node[anchor=west] at (0,0.5,0) {\tiny$c_3+p$};
\node[anchor=north west] at (0.9,0,0) {\tiny$c_1+b_1$};
\node[anchor=north east] at (0,0,0.9) {\tiny$c_2+b_2$};
\node[anchor=south west] at (0,1,0) {\tiny$c_3+b_3$};
\end{tikzpicture}}&

\subfloat[]{\begin{tikzpicture}[scale=2]
\draw[->] (0,0,0) -- (1.5,0,0) node[anchor=south east]{$\bm{z_1}$};
\draw[->] (0,0,0) -- (0,0,1.5) node[anchor=north]{$\bm{z_2}$};
\draw[->] (0,0,0) -- (0,1.5,0) node[anchor=north west]{$\bm{z_3}$};

\draw[thick] (0,0,0)--(1,0,0)--(1,1,0)--(0,1,0)--cycle;
\draw[thick] (0,0,1)--(1,0,1)--(1,1,1)--(0,1,1)--cycle;
\draw[thick] (0,0,0)--(0,0,1);
\draw[thick] (1,0,0)--(1,0,1);
\draw[thick] (1,1,0)--(1,1,1);
\draw[thick] (0,1,0)--(0,1,1);

\draw[thick] (0.67,0,0)--(0.67,0,0.2)--(0.2,0,0.67)--(0,0,0.67);
\draw[thick] (0.67,1,0)--(0.67,1,0.2)--(0.2,1,0.67)--(0,1,0.67);
\draw[thick] (0.67,0,0)--(0.67,1,0);
\draw[thick] (1,0,0.2)--(0.67,0,0.2);
\draw[thick] (0.2,0,1)--(0.2,0,0.67);
\draw[thick] (0.67,0,0.2)--(0.67,1,0.2);
\draw[thick] (1,0,0.2)--(1,1,0.2);
\draw[thick] (0.2,0,0.67)--(0.2,1,0.67);
\draw[thick] (0,0,0.67)--(0,1,0.67);
\draw[thick] (0.2,0,1)--(0.2,1,1);
\draw[thick] (1,1,0.2)--(0.67,1,0.2);
\draw[thick] (0.2,1,1)--(0.2,1,0.67);
\fill[red!20, opacity=0.5] (0.67,0,0)--(0.67,0,0.2)--(0.2,0,0.67)--(0,0,0.67)--(0,0,0)--(0,1,0)--(0,1,0.67)--(0.2,1,0.67)--(0.67,1,0.2)--(0.67,1,0)--cycle;
\fill[red!20, opacity=0.5] (0,1,0)--(0,1,0.67)--(0.2,1,0.67)--(0.67,1,0.2)--(0.67,1,0)--cycle;
\fill[red!20, opacity=0.5] (0,0,0)--(0,0,0.67)--(0,1,0.67)--(0,1,0)--cycle;
\fill[green!20, opacity=0.5] (0.67,0,0)--(0.67,0,0.2)--(1,0,0.2)--(1,0,0)--(1,1,0)--(1,1,0.2)--(0.67,1,0.2)--(0.67,1,0)--cycle;
\fill[green!20, opacity=0.5] (1,1,0)--(1,1,0.2)--(0.67,1,0.2)--(0.67,1,0)--cycle;
\fill[blue!20, opacity=0.5] (0,0,0.67)--(0.2,0,0.67)--(0.2,0,1)--(0,0,1)--(0,1,1)--(0.2,1,1)--(0.2,1,0.67)--(0,1,0.67)--cycle;
\fill[blue!20, opacity=0.5] (0,1,1)--(0.2,1,1)--(0.2,1,0.67)--(0,1,0.67)--cycle;

\node[anchor=north west] at (-0.05,0,-0.05) {\tiny$0$};
\node[anchor=north east] at (-0.05,0,-0.1) {\tiny$0$};
\node[anchor=south west] at (0,0,0) {\tiny$c_3$};
\node[anchor=north] at (0.67,0,0.05) {\tiny$\frac{2}{3}$};
\node[anchor=north east] at (-0.1,0,0.32) {\tiny$\frac{2}{3}$};
\node[anchor=north west] at (0.9,0,-0.05) {\tiny$1$};
\node[anchor=north east] at (-0.1,0,0.9) {\tiny$1$};
\node[anchor=west] at (0.2,0,0.67) {\tiny$(s,t,c_3)$};
\node[anchor=south west] at (0,1,0) {\tiny$c_3+1$};
\end{tikzpicture}}
\end{tabular}
\caption{(a) The structure of the optimal mechanism when $c_i\geq\frac{1}{g_i(0)}$, $i=1,2,3$. The optimal allocation is $q(z)=(0,0,0)$ in the gray region, and $q(z)=(1,1,1)$ in the other region. (b) The structure of the optimal mechanism when $z_1\sim\mbox{Unif}[0,1]$, $z_2\sim\mbox{Unif}[0,1]$, and $z_3\sim\mbox{Unif}[c_3,c_3+1]$, with $c_3\geq3$. The point $(s,t,c_3)=(\frac{2-\sqrt{2}}{3},\frac{2}{3},c_3)$. The optimal allocation is $q(z)=(0,0,1)$ in the red region, $q(z)=(1,0,1)$ in the green region, $q(z)=(0,1,1)$ in the blue region, and $q(z)=(1,1,1)$ in the remaining region.}\label{fig:three-item}
\end{figure}

In other words, it is optimal for the seller to sell item $3$ with a ``high'' minimum valuation at the minimum valuation $c_3$, and items $1$ and $2$ with ``low'' minimum valuations according to the optimal mechanism in the two-item setting. Using this example, I conjecture that when $c_1$ and $c_2$ are low but $c_3$ is high, the property that the optimal mechanism sells item $3$ at the minimum valuation $c_3$ and the other two items according to the optimal mechanism in the two-dimensional setting holds in three-item setting as well.

Theorems \ref{thm:three-high} and \ref{thm:one-high} indicate that some of the results derived in Section \ref{sec:main-results} in the two-item setting extend to the three-item setting as well, whereas the other results do not. Specifically, the property of
\begin{itemize}
 \item bundle sale being optimal when the minimum valuations of all the items are high {\em extends} to the three-item setting,
 \item the problem being reduced to finding the optimal mechanism in a lower dimension when the minimum valuation of one of the items is high and the other two items is low {\em can possibly extend} to the three-item setting, and
 \item optimal mechanisms being deterministic when the minimum valuation of one of the items is high and the other two items is low {\em may not extend} to the three-item setting, given that the optimal mechanism in the two-item setting need not be deterministic.
\end{itemize}

\section{Conclusion}\label{sec:conclusion}
I considered the problem of selling two heterogeneous items to a single buyer whose valuation for the items $(z_1,z_2)\sim f_1f_2$, and the support set of $f_i$ is an interval $[c_i,c_i+b_i]$ in the nonnegative axis of $z_i$. I restricted attention to density functions $f_1$ and $f_2$ that are positive, nondecreasing and continuously differentiable. I analyzed how the optimal mechanism varies when the distribution is fixed but the support set parameters $(c_1,c_2)$ vary.

I showed that the optimal mechanisms are deterministic if the minimum valuations of at least one of the items (i.e., either $c_1$ or $c_2$) is sufficiently high. In particular, I showed that the seller finds it optimal to sell the items individually when $c_1$ is lower than the threshold $c_1^{th}$ and $c_2$ is at least $c_2^*(c_1)$, or when $c_2$ is lower than the threshold $c_2^{th}$ and $c_1$ is at least $c_1^*(c_2)$; and he finds it optimal to sell the items as a bundle when both $c_1$ and $c_2$ are at least $c_1^{th}$ and $c_2^{th}$ respectively. In addition, I provided a method to calculate the threshold function $c_2^*(c_1)$ at each value of $c_1$, and the function $c_1^*(c_2)$ at each value of $c_2$. Furthermore, I also provided verifiable sufficient conditions on the densities $f_1$ and $f_2$ for which the individual sale mechanism is optimal.

I showed that when $c_1$ is low and $c_2$ is high, the seller finds it optimal to sell item $2$ at the minimum valuation $c_2$, thus effectively reducing the problem to finding the optimal mechanism in the one-item setting only for item $1$. I then considered an example in the three-item setting with the items distributed uniformly in the cube $[0,1]\times[0,1]\times[c_3,c_3+1]$, and showed that the seller finds it optimal to sell item $3$ at the minimum valuation $c_3$, and items $1$ and $2$ according to the optimal mechanism in the two-dimensional setting. I conjectured that the reduction to $(n-1)$-dimensional setting can possibly be extended to three-item setting for more general distributions.

One of the key contributions of this paper is in finding the exact optimal mechanism for a wide class of distributions whose support sets do not contain zero. The exact optimal mechanism is known in the literature only for a few such distributions, the key reason being the difficulty in proving the second-order stochastic dominance of measures defined on a multi-dimensional space. To the best of my knowledge, there are no straightforward methods available in the literature to verify second order stochastic dominance of a given measure over another measure in multi-dimensional settings. The key technical contribution of this paper is in showing the second-order dominance for a class of distributions having a positive, nondecreasing and continuously differentiable densities in both two-dimensional and three-dimensional settings.

The result that the optimal mechanism is deterministic when minimum valuations are sufficiently high can possibly be true for more general distributions. But the arguments in this paper suggest that proving such a result might be more technically involved primarily because of the difficulty in establishing the second order stochastic dominance of measures. Characterizing the optimal mechanisms for more general distributions in higher dimensional settings is a direction of work for the future.

\section*{Acknowledgements}
This work was supported by the Science and Engineering Research Board [Sanction Order No. CRG/2022/009169] under the Core Research Grant (CRG) Program; and the Indian Institute of Technology Kanpur [Project No. IITK/ECOS/2020123] under the initiation grant scheme. The author thanks Prof. Debasis Mishra for a very informative discussion.

\bibliographystyle{elsarticle-harv.bst} \bibliography{deterministic_mechanisms}

\section*{APPENDIX}
\section*{A. Proofs in Section \ref{sec:main-results}}
{\bf Proof of Lemma \ref{lem:mu-phi-i}:} (a) We know that $\mu(z)=-z_1f_1'(z_1)f_2(z_2)-z_2f_1(z_1)f_2'(z_2)-3f_1(z_1)f_2(z_2)$. So $\mu(z)<0$ follows since $z_i\geq0$, $f_i'(z_i)\geq0$, and $f_i(z_i)>0$ for $i=1,2$.

(b) Differentiating $\tilde{\phi}_i(\cdot)$, we have
$$
  \tilde{\phi}_i'(z_i)=z_if_i'(z_i)+2f_i(z_i)>0,
$$
since $z_i\geq0$, $f_i'(z_i)\geq0$ and $f_i(z_i)>0$. Thus  $\tilde{\phi}_i$ is strictly increasing.

(c) Note that $\tilde{\phi}_i(c_i)=c_if_i(c_i)-1=c_ig_i(0)-1<0$ if $c_i<\frac{1}{g_i(0)}$. Furthermore, $\tilde{\phi}_i(c_i+b_i)=(c_i+b_i)f_i(c_i+b_i)>0$ since $f_i>0$. The existence and uniqueness of $\tilde{\phi}_i^{-1}(0)$ then follows since $\tilde{\phi}_i(\cdot)$ is continuous and strictly increasing. Furthermore, $\tilde{\phi}_i^{-1}(0)\in(c_i,c_i+b_i)$ since $\tilde{\phi}_i(c_i)<0$ and $\tilde{\phi}_i(c_i+b_i)>0$. \qed

{\bf Proof of Lemma \ref{prop:c2-th-3}:} I first show that $\frac{\int_{c_1}^{c_1+t(c_2)}z_1f_1(z_1)\,dz_1}{F_1(c_1+t(c_2))}$ is decreasing in $c_2$. Differentiating with respect to $t$, we have
\begin{multline*}
  \frac{F_1(c_1+t)(c_1+t)f_1(c_1+t)-f_1(c_1+t)\int_{c_1}^{c_1+t}z_1f_1(z_1)\,dz_1}{(F_1(c_1+t))^2} \\
  =\frac{f_1(c_1+t)\int_{c_1}^{c_1+t}(c_1+t-z_1)f_1(z_1)\,dz_1}{(F_1(c_1+t))^2}>0,
\end{multline*}
where the inequality follows because $f_i(z_i)>0$. So the term $\frac{\int_{c_1}^{c_1+t}z_1f_1(z_1)\,dz_1}{F_1(c_1+t)}$ increases as $t$ increases. Observe that $t$ decreases as $c_2$ increases, since $t(c_2)=\{t:c_2g_2(0)F_1(c_1+t)=1\}$. Thus the term $\frac{\int_{c_1}^{c_1+t(c_2)}z_1f_1(z_1)\,dz_1}{F_1(c_1+t(c_2))}$ decreases as $c_2$ increases.

Now note that $t\left(\frac{1}{g_2(0)}\right)=b_1$, and $\lim_{c_2\rightarrow\infty}t(c_2)=0$. This shows that the range of $t(\cdot)$ is $(0,b_1]$. Now we have
\begin{align*}
  \frac{\int_{c_1}^{c_1+t\left(\frac{1}{g_2(0)}\right)}z_1f_1(z_1)\,dz_1}{F_1\left(c_1+t\left(\frac{1}{g_2(0)}\right)\right)}&=\mathbb{E}_{z_1\sim f_1}[z_1], \\
  \lim_{c_2\rightarrow\infty}\frac{\int_{c_1}^{c_1+t(c_2)}z_1f_1(z_1)\,dz_1}{F_1(c_1+t(c_2))}&\stackrel{(H)}{=}\lim_{c_2\rightarrow\infty}\frac{(c_1+t(c_2))f_1(c_1+t(c_2))}{f_1(c_1+t(c_2))}=c_1,
\end{align*}
where $(H)$ refers to equality based on L'H\^{o}pital's rule. But $\frac{\int_{c_1}^{c_1+t(c_2)}z_1f_1(z_1)\,dz_1}{F_1(c_1+t(c_2))}$ decreases as $c_2$ increases, and thus
$c_1<\frac{\int_{c_1}^{c_1+t(c_2)}z_1f_1(z_1)\,dz_1}{F_1(c_1+t(c_2))}\leq\mathbb{E}_{z_1\sim f_1}[z_1]$. Furthermore, there exists a unique $c_2\in\left[\frac{1}{g_2(0)},\infty\right)$ such that
$$
  \frac{\int_{c_1}^{c_1+t(c_2)}z_1f_1(z_1)\,dz_1}{F_1(c_1+t(c_2))}=c\mbox{ for any }c\in(c_1,\mathbb{E}_{z_1\sim f_1}[z_1]].
$$

The proof is complete if I show that $\max_{z_1\in[c_1,c_1+b_1]}(z_1(1-F_1(z_1)))\in(c_1,\mathbb{E}_{z_1\sim f_1}[z_1]]$. Observe from Theorem \ref{thm:myerson} that $z_1(1-F_1(z_1))$ attains its maximum at $z_1^*=\tilde{\phi}_1^{-1}(0)$ since $\tilde{\phi}_1(\cdot)$ is strictly increasing, and from Lemma \ref{lem:mu-phi-i}(c) that $\tilde{\phi}_1^{-1}(0)>c_1$. So $\tilde{\phi}_1(c_1)<0$. Now we have
\begin{multline*}
  \frac{d}{dz_1}z_1(1-F_1(z_1))|_{z_1=c_1}=-\tilde{\phi}_1(c_1)>0\Rightarrow c_1^+(1-F_1(c_1^+))>c_1 \\
  \Rightarrow\max_{z_1\in[c_1,c_1+b_1]}z_1(1-F_1(z_1))>c_1.
\end{multline*}

I now show that $\max_{z_1\in[c_1,c_1+b_1]}(z_1(1-F_1(z_1)))\leq\mathbb{E}_{z_1\sim f_1}[z_1]$. Observe that
\begin{multline*}
  z_1(1-F_1(z_1))=(1-F_1(z_1))\int_0^{z_1}\,dz\stackrel{(a)}{\leq}\int_0^{z_1}(1-F_1(z))\,dz \\
  \leq\int_0^{\infty}(1-F_1(z))\,dz=\mathbb{E}_{z_1\sim f_1}[z_1],
\end{multline*}
where (a) follows since $(1-F_1(z_1))$ is nonincreasing. Hence the result. \qed

{\bf Proof of Lemma \ref{prop:W-region}:} Observe first that $\tilde{\phi}_1^{-1}(0)$ exists since (i) $\tilde{\phi}_1(c_1)<0$ and $\tilde{\phi}_1(c_1+b_1)=(c_1+b_1)f_1(c_1+b_1)>0$, and (ii) $\tilde{\phi}_1$ is continuous since $f_1$ is continuous. So the set $W=[\tilde{\phi}_1^{-1}(0),c_1+b_1]\times[c_2,c_2+b_2]$ is well-defined.

Now, note that one can prove $\bar{\mu}|_W\succeq_10$ by proving that (i) $\bar{\mu}|_W(X)\geq 0$ for any increasing set\footnote{A set $X\subseteq W$ is an increasing set if $(x_1,x_2)\in X$ implies $(y_1,y_2)\in X$ for all $\{(y_1,y_2)\in W:y_1\geq x_1,y_2\geq x_2\}$.} $X\subseteq W$, and (ii) $\bar{\mu}(W)=0$ \cite[Chap.~6]{SS07}.Consider the sets of the form $[c_1+t_1,c_1+b_1]\times[c_2+t_2,c_2+b_2]$ for some $t_1\in[\tilde{\phi}_1^{-1}(0)-c_1,b_1]$ and $t_2\in(0,b_2]$. Observe that
$$
  \int(-z_1f_1'(z_1)f_2(z_2)-2f_1(z_1)f_2(z_2))\,dz_1=-(z_1f_1(z_1)+F_1(z_1))f_2(z_2),
$$
and
$$
  \int(-z_2f_1(z_1)f_2'(z_2)-f_1(z_1)f_2(z_2))\,dz_2=-z_2f_2(z_2)f_1(z_1).
$$
So we have
\begin{align}
  &\bar{\mu}|_W([c_1+t_1,c_1+b_1]\times[c_2+t_2,c_2+b_2]) \nonumber\\
  &=\int_{c_2+t_2}^{c_2+b_2}\int_{c_1+t_1}^{c_1+b_1}(-z_1f_1'(z_1)f_2(z_2)-z_2f_2'(z_2)f_1(z_1)-3f_1(z_1)f_2(z_2))\,dz_1\,dz_2 \nonumber\\
  &\hspace*{.2in}+(c_1+b_1)f_1(c_1+b_1)\int_{c_2+t_2}^{c_2+b_2}f_2(z_2)\,dz_2+(c_2+b_2)f_2(c_2+b_2)\int_{c_1+t_1}^{c_1+b_1}f_1(z_1)\,dz_1 \nonumber\\
  &=-(1-F_2(c_2+t_2))((c_1+b_1)f_1(c_1+b_1)+1-(c_1+t_1)f_1(c_1+t_1)-F_1(c_1+t_1)) \nonumber\\
  &\hspace*{.2in}-(1-F_1(c_1+t_1))((c_2+b_2)f_2(c_2+b_2)-(c_2+t_2)f_2(c_2+t_2)) \nonumber\\
  &\hspace*{.3in}+(c_1+b_1)f_1(c_1+b_1)(1-F_2(c_2+t_2))+(c_2+b_2)f_2(c_2+b_2)(1-F_1(c_1+t_1)) \nonumber\\
  &=\tilde{\phi}_1(c_1+t_1)(1-F_2(c_2+t_2))+(c_2+t_2)f_2(c_2+t_2)(1-F_1(c_1+t_1)) \label{eqn:mu-bar-W}\\
  &\geq 0,\nonumber
\end{align}
where the last step follows since (i) $c_1+t_1\geq\tilde{\phi}_1^{-1}(0)$ and (ii) $\tilde{\phi}_1$ increasing imply that $\tilde{\phi}_1(c_1+t_1)\geq0$.

I now consider sets of the form $[c_1+t_1,c_1+b_1]\times[c_2,c_2+b_2]$ for some $t_1\in[\tilde{\phi}_1^{-1}(0)-c_1,b_1]$.
\begin{align*}
  &\bar{\mu}|_W([c_1+t_1,c_1+b_1]\times[c_2,c_2+b_2]) \\
  &=\lim_{t_2\rightarrow 0}\left(\bar{\mu}|_W([c_1+t_1,c_1+b_1]\times[c_2+t_2,c_2+b_2])\right)-c_2f_2(c_2)(1-F_1(c_1+t_1) \\
  &\stackrel{(a)}{=}\tilde{\phi}_1(c_1+t_1)(1-F_2(c_2))+c_2f_2(c_2)(1-F_1(c_1+t_1))-c_2f_2(c_2)(1-F_1(c_1+t_1) \\
  &=\tilde{\phi}_1(c_1+t_1) \\
  &\geq 0,
\end{align*}
where (a) follows from \eqref{eqn:mu-bar-W}. Note that
$$
  \bar{\mu}(W)=\bar{\mu}|_W([\tilde{\phi}_1^{-1}(0),c_1+b_1]\times[c_2,c_2+b_2])=\tilde{\phi}_1(\tilde{\phi}_1^{-1}(0))=0.
$$

I have thus shown that $\bar{\mu}|_W(X)\geq 0$ for any increasing rectangle $X\subseteq W$. I now extend this result for increasing sets that are not rectangular. Let $c_1+t_1=\min\{z_1:(z_1,c_2+b_2)\in X\}$, and $c_2+t_2=\min\{z_2:(c_1+b_1,z_2)\in X\}$. Observe that $X\subseteq[c_1+t_1,c_1+b_1]\times[c_2+t_2,c_2+b_2]$, since $X$ is an increasing set. Let $Y=([c_1+t_1,c_1+b_1]\times[c_2+t_2,c_2+b_2])\backslash X$. Now we have
\begin{multline*}
  \bar{\mu}|_W(X)=\bar{\mu}|_W([c_1+t_1,c_1+b_1]\times[c_2+t_2,c_2+b_2])-\int\int_Y\mu(z)\,dz_1\,dz_2 \\
  \geq\bar{\mu}|_W([c_1+t_1,c_1+b_1]\times[c_2+t_2,c_2+b_2])\geq0,
\end{multline*}
where the first inequality occurs because $\mu(z)\leq0$ for all $z\in D$. Thus $\bar{\mu}|_W(X)\geq0$ for any increasing set $X\subseteq W$, and this implies $\bar{\mu}|_W\succeq_10$.

Observe that $\bar{\mu}|_W\succeq_10$ implies $\bar{\mu}|_W\preceq_{cvx(\overrightarrow{-1,-1})}0$, since $\int_Wu\,d\bar{\mu}\geq0$ for all increasing $u$ implies that $\int_Wu\,d\bar{\mu}\leq0$ for all convex, decreasing $u$. \qed

{\bf Proof of Lemma \ref{prop:A-region}:} I prove $\bar{\mu}|_A\preceq_{cvx\overrightarrow{(1,-1)}}0$ by showing that $\int_Au\,d\bar{\mu}\leq0$ for all $u$ that is (i) convex, (ii) nondecreasing in $z_1$, and (iii) nonincreasing in $z_2$. Define $B=[c_1,c_1+t]\times\{c_2\}$. I first show that (i) $\bar{\mu}(B)=0$, (ii) $\bar{\mu}(A\backslash B)=0$, and (iii) $\int_{A\backslash B}(z_1-c_1)\,d\bar{\mu}\leq 0$. I then show that this is equivalent to showing $\bar{\mu}|_A\preceq_{cvx(\overrightarrow{1,-1})}0$.

I begin by proving $\bar{\mu}(B)=0$.
$$
  \bar{\mu}(B)=1-\int_{c_1}^{c_1+t}c_2f_2(c_2)f_1(z_1)\,dz_1=1-c_2f_2(c_2)F_1(c_1+t)=0.
$$

I now prove that $\bar{\mu}(A\backslash B)=0$.
\begin{align}
  &\bar{\mu}(A\backslash B)\nonumber\\
  &=\int_{c_2}^{c_2+b_2}\int_{c_1}^{\tilde{\phi}_1^{-1}(0)}(-z_1f_1'(z_1)f_2(z_2)-z_2f_2'(z_2)f_1(z_1)-3f_1(z_1)f_2(z_2))\,dz_1\,dz_2 \nonumber\\
  &\hspace*{.2in}-c_1f_1(c_1)\int_{c_2}^{c_2+b_2}f_2(z_2)\,dz_2+(c_2+b_2)f_2(c_2+b_2)\int_{c_1}^{\tilde{\phi}_1^{-1}(0)}f_1(z_1)\,dz_1 \nonumber\\
  &\hspace*{2.5in}-c_2f_2(c_2)\int_{c_1+t}^{\tilde{\phi}_1^{-1}(0)}f_1(z_1)\,dz_1 \nonumber\\
  &=-(\tilde{\phi}_1^{-1}(0)f_1(\tilde{\phi}_1^{-1}(0))+F_1(\tilde{\phi}_1^{-1}(0))-c_1f_1(c_1)) \nonumber\\
  &\hspace*{1in}-F_1(\tilde{\phi}_1^{-1}(0))((c_2+b_2)f_2(c_2+b_2)-c_2f_2(c_2)) \nonumber\\
  &\hspace*{.2in}-c_1f_1(c_1)+(c_2+b_2)f_2(c_2+b_2)F_1(\tilde{\phi}_1^{-1}(0))-c_2f_2(c_2)(F_1(\tilde{\phi}_1^{-1}(0))-F_1(c_1+t)) \nonumber\\
  &=-\tilde{\phi}_1^{-1}(0)f_1(\tilde{\phi}_1^{-1}(0))-F_1(\tilde{\phi}_1^{-1}(0))+c_2f_2(c_2)F_1(c_1+t)) \nonumber\\
  &=-\tilde{\phi}_1(\tilde{\phi}_1^{-1}(0))-1+c_2f_2(c_2)F_1(c_1+t) \nonumber\\
  &=0,\nonumber
\end{align}
where the final step follows because (i) $\tilde{\phi}_1(\tilde{\phi}_1^{-1}(0))=0$, and (ii) $c_2f_2(c_2)F_1(c_1+t)=1$.

I now proceed to prove that $\int_{A\backslash B}(z_1-c_1)\,d\bar{\mu}\leq0$.
\begin{align*}
  &\int_{A\backslash B}(z_1-c_1)\,d\bar{\mu} \\
  &=\int_{A\backslash B}z_1\,d\bar{\mu}-c_1\bar{\mu}(A\backslash B) \\
  &\stackrel{(a)}{=}\int_{c_2}^{c_2+b_2}\int_{c_1}^{\tilde{\phi}_1^{-1}(0)}z_1(-z_1f_1'(z_1)f_2(z_2)-z_2f_2'(z_2)f_1(z_1)-3f_1(z_1)f_2(z_2))\,dz_1\,dz_2 \\
  &\hspace*{.2in}-c_1^2f_1(c_1)\int_{c_2}^{c_2+b_2}f_2(z_2)\,dz_2+(c_2+b_2)f_2(c_2+b_2)\int_{c_1}^{\tilde{\phi}_1^{-1}(0)}z_1f_1(z_1)\,dz_1 \\
  &\hspace*{2.5in}-c_2f_2(c_2)\int_{c_1+t}^{\tilde{\phi}_1^{-1}(0)}z_1f_1(z_1)\,dz_1 \\
  &\stackrel{(b)}{=}c_1^2f_1(c_1)-(\tilde{\phi}_1^{-1}(0))^2f_1(\tilde{\phi}_1^{-1}(0)) \\
  &\hspace*{.5in}+(c_2f_2(c_2)-(c_2+b_2)f_2(c_2+b_2))\int_{c_1}^{\tilde{\phi}_1^{-1}(0)}z_1f_1(z_1)\,dz_1 \\
  &\hspace*{.75in}-c_1^2f_1(c_1)+(c_2+b_2)f_2(c_2+b_2)\int_{c_1}^{\tilde{\phi}_1^{-1}(0)}z_1f_1(z_1)\,dz_1 \\
  &\hspace*{2.5in}-c_2f_2(c_2)\int_{c_1+t}^{\tilde{\phi}_1^{-1}(0)}z_1f_1(z_1)\,dz_1 \\
  &=c_2f_2(c_2)\left(\int_{c_1}^{c_1+t}z_1f_1(z_1)\,dz_1\right)-(\tilde{\phi}_1^{-1}(0))^2f_1(\tilde{\phi}_1^{-1}(0)) \\
  &\stackrel{(c)}{\leq}0,
\end{align*}
where (a) follows since $\bar{\mu}(A\backslash B)=0$, (b) follows since
$$
  \int(-z_1^2f_1'(z_1)f_2(z_2)-2z_1f_1(z_1)f_2(z_2))\,dz_1=-z_1^2f_1(z_1)f_2(z_2),
$$
and (c) follows since $c_2f_2(c_2)\left(\int_{c_1}^{c_1+t}z_1f_1(z_1)\,dz_1\right)\leq(\tilde{\phi}_1^{-1}(0))^2f_1(\tilde{\phi}_1^{-1}(0))$.

Now I prove the claim that $\bar{\mu}(B)=0$, $\bar{\mu}(A\backslash B)=0$, and $\int_{A\backslash B}(z_1-c_1)\,d\bar{\mu}\leq 0$ is equivalent to showing $\bar{\mu}|_A\preceq_{cvx(\overrightarrow{1,-1})}0$. Let $u$ be a convex function that is nondecreasing in $z_1$ and nonincreasing in $z_2$. I now define $\tilde{u}:\left[c_1,\tilde{\phi}_1^{-1}(0)\right]\rightarrow\mathbb{R}$ to be an affine shift of the convex function obtained by truncating $u$ on the line segment $\left(\left[c_1,\tilde{\phi}_1^{-1}(0)\right)\times\{c_2+b_2\}\right)$. Specifically, I define $\tilde{u}(x)=\beta_1u(x,c_2+b_2)+\beta_2$ for some $\beta_1>0$ and $\beta_2\in\mathbb{R}$, such that $\tilde{u}(c_1)=0$ and $\tilde{u}(c_1+t)=t$. Observe that such a truncation of any convex function gives rise to a convex function because if $u$ is convex, then $u(\cdot,z_2)$ is also convex for any fixed $z_2$. Furthermore, we have $\tilde{u}(x)\leq x-c_1$ when $x\in[c_1,c_1+t]$ and $\tilde{u}(x)\geq x-c_1$ when $x\in\left[c_1+t,\tilde{\phi}_1^{-1}(0)\right)$. Now note that when $u$ is nondecreasing in $z_1$, we have
\begin{align*}
  &\int_Bu(z)\,d\bar{\mu}=u(c_1,c_2)-c_2f_2(c_2)\int_{c_1}^{c_1+t}u(z_1,c_2)f_1(z_1)\,dz_1 \\
  &\hspace*{.2in}\leq u(c_1,c_2)-u(c_1,c_2)c_2f_2(c_2)\int_{c_1}^{c_1+t}f_1(z_1)\,dz_1=u(c_1,c_2)\bar{\mu}(B)=0,
\end{align*}
where the inequality follows because $u(z_1,c_2)\geq u(c_1,c_2)$ for all $z_1\in[c_1,c_1+t]$. I now verify if $\int_{A\backslash B}u\,d\bar{\mu}\leq0$.
\begin{align*}
  &\int_{A\backslash B}u\,d\bar{\mu} \\
  &=\int_{c_2}^{c_2+b_2}u(c_1,z_2)[-c_1f_1(c_1)f_2(z_2)]\,dz_2 \\
  &\hspace*{.2in}+\int_{c_1}^{c_1+t}u(z_1,c_2+b_2)[(c_2+b_2)f_2(c_2+b_2)f_1(z_1)]\,dz_1 \\
  &\hspace*{0.5in}+\int_{c_1}^{c_1+t}\int_{c_2}^{c_2+b_2}u(z_1,z_2)[-z\cdot\nabla f(z)-3f(z)]\,dz_2\,dz_1 \\
  &\hspace*{.2in}+\int_{c_1+t}^{\tilde{\phi}_1^{-1}(0)}(u(z_1,c_2)[-c_2f_2(c_2)f_1(z_1)] \\
  &\hspace{.5in}+u(z_1,c_2+b_2)[(c_2+b_2)f_2(c_2+b_2)f_1(z_1)])\,dz_1 \\
  &\hspace*{0.5in}+\int_{c_1+t}^{\tilde{\phi}_1^{-1}(0)}\int_{c_2}^{c_2+b_2}u(z_1,z_2)[-z\cdot\nabla f(z)-3f(z)]\,dz_2\,dz_1 \\
  &\stackrel{(a)}{\leq}u(c_1,c_2+b_2)\int_{c_2}^{c_2+b_2}[-c_1f_1(c_1)f_2(z_2)]\,dz_2 \\
  &\hspace*{.2in}+\int_{c_1}^{c_1+t}u(z_1,c_2+b_2)[(c_2+b_2)f_2(c_2+b_2)f_1(z_1)]\,dz_1 \\
  &\hspace*{0.5in}+\int_{c_1}^{c_1+t}\int_{c_2}^{c_2+b_2}u(z_1,c_2+b_2)([-z\cdot\nabla f(z)-3f(z)]\,dz_2\,dz_1 \\
  &\hspace*{.2in}+\int_{c_1+t}^{\tilde{\phi}_1^{-1}(0)}u(z_1,c_2+b_2)f_1(z_1)[(c_2+b_2)f_2(c_2+b_2)-c_2f_2(c_2)]\,dz_1 \\
  &\hspace*{0.5in}+\int_{c_1+t}^{\tilde{\phi}_1^{-1}(0)}\int_{c_2}^{c_2+b_2}u(z_1,c_2+b_2)[-z\cdot\nabla f(z)-3f(z)]\,dz_2\,dz_1 \\
  &\stackrel{(b)}{=}\frac{1}{\beta_1}\tilde{u}(c_1)(-c_1f_1(c_1))+\frac{1}{\beta_1}\int_{c_1}^{c_1+t}\tilde{u}(z_1)[c_2f_2(c_2)f_1(z_1)-z_1f_1'(z_1)-2f_1(z_1)]\,dz_1 \\
  &\hspace*{1in}+\frac{1}{\beta_1}\int_{c_1+t}^{\tilde{\phi}_1^{-1}(0)}\tilde{u}(z_1)[-z_1f_1'(z_1)-2f_1(z_1)]\,dz_1-\frac{\beta_2}{\beta_1}\bar{\mu}(A\backslash B) \\
  &\stackrel{(c)}{=}\frac{1}{\beta_1}(\tilde{u}(c_1)-(c_1-c_1))(-c_1f_1(c_1)) \\
  &\hspace*{.2in}+\frac{1}{\beta_1}\int_{c_1}^{c_1+t}(\tilde{u}(z_1)-(z_1-c_1))[c_2f_2(c_2)f_1(z_1)-z_1f_1'(z_1)-2f_1(z_1)]\,dz_1 \\
  &\hspace*{.5in}+\frac{1}{\beta_1}\int_{c_1+t}^{\tilde{\phi}_1^{-1}(0)}(\tilde{u}(z_1)-(z_1-c_1))[-z_1f_1'(z_1)-2f_1(z_1)]\,dz_1 \\
  &\hspace*{2.5in}+\frac{1}{\beta_1}\int_{A\backslash B}(z_1-c_1)\,d\bar{\mu}-\frac{\beta_2}{\beta_1}\bar{\mu}(A\backslash B) \\
  &\stackrel{(d)}{\leq} 0
\end{align*}
where
\begin{enumerate}
 \item[(a)] follows since (i) $u(c_1,z_2)\geq u(c_1,c_2+b_2)$, (ii) $u(z_1,z_2)\geq u(z_1,c_2+b_2)$ for all $z_2\in[c_2,c_2+b_2]$, and (iii) $\mu(z)=-z\cdot\nabla f(z)-3f(z)\leq0$ since $f_1$ and $f_2$ are nondecreasing,
 \item[(b)] follows since $\tilde{u}(x)=\beta_1u(x,c_2+b_2)+\beta_2$,
 \item[(c)] follows by adding and subtracting $\frac{1}{\beta_1}\int_{A\backslash B}(z_1-c_1)\,d\bar{\mu}$, and
 \item[(d)] follows from (i) $\tilde{u}(c_1)=0$; (ii) $\tilde{u}(z)\leq(z-c_1)$ when $z\in[c_1,c_1+t]$ and $\tilde{u}(z)\geq(z-c_1)$ when $z\in\left[c_1+t,\tilde{\phi}_1^{-1}(0)\right]$; (iii) $c_2f_2(c_2)f_1(z_1)\geq z_1f_1'(z_1)+2f_1(z_1)$; (iv) $z_1f_1'(z_1)+2f_1(z_1)\geq0$ since $\tilde{\phi}_1$ is increasing; and (v) $\frac{1}{\beta_1}\int_{A\backslash B}(z_1-c_1)\,d\bar{\mu}\leq 0$ and $\bar{\mu}(A\backslash B)=0$.
\end{enumerate}

We finally have
$$
  \int_Au\,d\bar{\mu}=\int_Bu\,d\bar{\mu}+\int_{A\backslash B}u\,d\bar{\mu}\leq0.
$$

I have thus proved the claim that under the assumptions given in the statement of the lemma, showing $\bar{\mu}(B)=0$, $\bar{\mu}(A\backslash B)=0$ and $\int_{A\backslash B}(z_1-c_1)\,d\bar{\mu}\leq0$ is equivalent to showing $\bar{\mu}|_A\preceq_{cvx\overrightarrow{(1,-1)}}0$. \qed

{\bf Proof of Theorem \ref{thm:individual-sale}:} By Theorem \ref{thm:opt_menu}, Lemma \ref{prop:W-region} and Lemma \ref{prop:A-region}, the structure of the optimal mechanism is as in Figure \ref{fig:e1} if $f_1$ and $f_2$ positive and continuously differentiable in their domains, and in addition, if the following hold:
\begin{enumerate}
 \item[(i)] $\mu(z)\leq0$ for all $z\in D$,
 \item[(ii)] $\tilde{\phi}_1(\cdot)$ strictly increasing, $\tilde{\phi}_1(c_1)<0$. 
 \item[(iii)] $c_2f_2(c_2)f_1(z_1)\geq z_1f_1'(z_1)+2f_1(z_1),\,\forall z_1\in[c_1,\tilde{\phi}_1^{-1}(0)]$.
 \item[(iv)] There exists a $t\leq\tilde{\phi}_1^{-1}(0)-c_1$ such that $c_2f_2(c_2)F_1(c_1+t)=1$ and $c_2f_2(c_2)\left(\int_{c_1}^{c_1+t}z_1f_1(z_1)\,dz_1\right)\leq(\tilde{\phi}_1^{-1}(0))^2f_1(\tilde{\phi}_1^{-1}(0))$ hold.
\end{enumerate}
Thus part (a) of the theorem is proved if I show that the conditions (i)-(iv) are satisfied for all $f$ mentioned in the statement of the theorem, with its support set such that $c_1<\frac{1}{g_1(0)}$ and $c_2\geq c_2^*(c_1)$.

From Lemma \ref{lem:mu-phi-i}, conditions (i) and (ii) are satisfied when $f_1$ and $f_2$ are nondecreasing.

Condition (iii) is satisfied if $c_2\geq\frac{1}{g_2(0)}\left(\frac{z_1f_1'(z_1)}{f_1(z_1)}+2\right)$ for all $z_1\in[c_1,\tilde{\phi}_1^{-1}(0)]$, or if $c_2\geq\frac{1}{g_2(0)}\left(\max_{z_1\in[c_1,\tilde{\phi}_1^{-1}(0)]}\left(2+\frac{z_1f_1'(z_1)}{f_1(z_1)}\right)\right)=c_2^{th_2}(c_1)$. But $c_2^*(c_1)\geq c_2^{th_2}(c_1)$. So condition (iii) is satisfied for all $c_2\geq c_2^*(c_1)$.

Regarding condition (iv), I claim that there exists a $t\leq\tilde{\phi}_1^{-1}(0)-c_1$ such that $c_2f_2(c_2)F_1(c_1+t)=1$ whenever $c_2\geq c_2^{th_1}(c_1)$. It follows by the following series of arguments.
\begin{itemize}
 \item Let $c_2\geq c_2^{th_1}(c_1)$. Then $c_2f_2(c_2)F_1(\tilde{\phi}_1^{-1}(0))\geq 1$ holds by the definition of $c_2^{th_1}(c_1)$.
 \item Also, $c_2f_2(c_2)F_1(c_1)=0$.
 \item Now observe that $c_2f_2(c_2)F_1(c_1+t)$ is a continuous and increasing function of $t$. So if $c_2f_2(c_2)F_1(\tilde{\phi}_1^{-1}(0))\geq 1$, then there exists a $t\leq\tilde{\phi}_1^{-1}(0)-c_1$ such that $c_2f_2(c_2)F_1(c_1+t)=1$.
\end{itemize}

Now I claim that there exists a $\{t:c_2f_2(c_2)F_1(c_1+t)=1\}$ such that $c_2f_2(c_2)\left(\int_{c_1}^{c_1+t}z_1f_1(z_1)\,dz_1\right)\leq(\tilde{\phi}_1^{-1}(0))^2f_1(\tilde{\phi}_1^{-1}(0))$ whenever $c_2\geq c_2^{th_3}(c_1)$. Note that $t=t(c_2)$ as defined in Lemma \ref{prop:c2-th-3}, and that the left hand side in condition (ii) equals $\frac{\int_{c_1}^{c_1+t}z_1f_1(z_1)\,dz_1}{F_1(c_1+t)}$. Furthermore, if $z_1^*=\arg\max_{z_1\in[c_1,c_1+b_1]}(z_1(1-F_1(z_1)))$, then $z_1^*=\tilde{\phi}_1^{-1}(0)$ from Theorem \ref{thm:myerson}. Thus $\max_{z_1\in[c_1,c_1+b_1]}(z_1(1-F_1(z_1)))=\tilde{\phi}_1^{-1}(0)(1-F_1(\tilde{\phi}_1^{-1}(0)))$. But $\tilde{\phi}_1(z_1^*)=0$ implies that $z_1^*f_1(z_1^*)=(1-F_1(z_1^*))$. So we have
\begin{equation*}
  \max_{z_1\in[c_1,c_1+b_1]}(z_1(1-F_1(z_1)))=z_1^*(1-F_1(z_1^*))=(\tilde{\phi}_1^{-1}(0))^2f_1(\tilde{\phi}_1^{-1}(0)).
\end{equation*}
So condition (ii) can be rewritten as $\frac{\int_{c_1}^{c_1+t(c_2)}z_1f_1(z_1)\,dz_1}{F_1(c_1+t)}\leq\max_{z_1\in[c_1,c_1+b_1]}(z_1(1-F_1(z_1)))$, where $t(c_2)$ is as defined in Lemma \ref{prop:c2-th-3}. But the left hand side of this expression is decreasing in $c_2$ from Lemma \ref{prop:c2-th-3}. So the condition is satisfied for all $c_2\geq c_2^{th_3}(c_1)$. It is also satisfied for $c_2\geq c_2^*(c_1)$ since $c_2^*(c_1)\geq \max(c_2^{th_1}(c_1),c_2^{th_3}(c_1))$.

This proves part (a) of the theorem. The proof of part (b) is symmetric.

I now proceed to prove part (c) of the theorem. Given that $c_i\geq\frac{1}{g_i(0)}$, we have $\tilde{\phi}_i(c_i)=c_if_i(c_i)-1=c_ig_i(0)-1\geq0$. So if $\tilde{\phi}_i(c_i)\geq0$ and if $\tilde{\phi}_i(\cdot)$ is increasing, then $\tilde{\phi}_i(z_i)\geq0$ for all $z_i\in[c_i,c_i+b_i]$. Also, $\mu(z)\leq0$ for all $z\in D$. Furthermore, $f_1$ and $f_2$ are continuously differentiable on a compact interval and thus its derivatives are bounded. Part (c) of the theorem thus follows from Theorem \ref{thm:menicucci}.
\qed

{\bf Proof Theorem \ref{thm:threshold}:} I begin with the proof that $c_2^*(c_1)>\frac{1}{g_2(0)}$. Observe that
$$
  c_2^*(c_1)\geq c_2^{th_1}(c_1)=\frac{1}{g_2(0)F_1(\tilde{\phi}_1^{-1}(0))}.
$$
The proof is complete if I show that $F_1(\tilde{\phi}_1^{-1}(0))<1$. Note that $\tilde{\phi}_1^{-1}(0)<(c_1+b_1)$ from Lemma \ref{lem:mu-phi-i}(c). So $F_1(\tilde{\phi}_1^{-1}(0))<F_1(c_1+b_1)=1$, where the strict inequality holds because $F_1'(z_1)=f_1(z_1)>0$ for all $z_1$. This proves that $c_2^*(c_1)>\frac{1}{g_2(0)}$.

Now I move to proving that $c_2^*(c_1)\rightarrow\infty$ as $c_1\rightarrow\frac{1}{g_1(0)}$. Observe that the following happens as $c_1\rightarrow\frac{1}{g_1(0)}$:
 \begin{itemize}
  \item $\tilde{\phi}_1(c_1)=c_1f_1(c_1)-1=c_1g_1(0)-1$ tends to zero.
  \item So $\tilde{\phi}_1^{-1}(0)$ tends to $c_1$.
  \item Therefore $F_1(\tilde{\phi}_1^{-1}(0))\rightarrow0$.
 \end{itemize}
So $c_2^*(c_1)\geq c_2^{th_1}(c_1)=\frac{1}{g_2(0)F_1(\tilde{\phi}_1^{-1}(0))}\rightarrow\infty$ as $c_1\rightarrow\frac{1}{g_1(0)}$.

The proof for the threshold $c_1^*(c_2)$ is similar.\qed

{\bf Thresholds for Example \ref{eg:lin}:} Consider
$$
  g_1(z_1)=\frac{1}{4}(1+z_1),\,\forall z_1\in[0,2],\quad g_2(z_2)=\frac{2}{5}(2+z_2),\,\forall z_2\in[0,1].
$$
We thus have $f_1(z_1)=\frac{1}{4}(1+z_1-c_1),\,\forall z_1\in[c_1,c_1+2]$, and $f_2(z_2)=\frac{2}{5}(2+z_2-c_2),\,\forall z_2\in[c_2,c_2+1]$. Also, $F_1(z_1)=\frac{1}{8}((1+z_1-c_1)^2-1)$ and $F_2(z_2)=\frac{1}{5}((2+z_2-c_2)^2-4)$ in the relevant intervals. Now,
\begin{gather*}
  \tilde{\phi}_1(z_1)=z_1f_1(z_1)-(1-F_1(z_1))=\frac{3z_1^2+4z_1(1-c_1)+c_1^2-2c_1-8}{8}, \\
  \tilde{\phi}_2(z_2)=z_2f_2(z_2)-(1-F_2(z_2))=\frac{3z_2^2+4z_2(2-c_2)+c_2^2-4c_2-5}{5}.
\end{gather*}
Also observe that
\begin{gather*}
  \tilde{\phi}_1^{-1}(0)=\frac{2(c_1-1)+\sqrt{c_1^2-2c_1+28}}{3}, \\
  \tilde{\phi}_2^{-1}(0)=\frac{2(c_2-2)+\sqrt{c_2^2-4c_2+31}}{3}.
\end{gather*}

I now derive the thresholds $c_2^{th_k}(c_1)$, $k=1,2,3$. Observe that $c_1<\frac{1}{g_1(0)}$ when $c_1<4$. I now derive $c_2^{th_1}(c_1)$ and $c_2^{th_2}(c_1)$.
$$
  c_2^{th_1}(c_1)=\frac{1}{g_2(0)F_1(\tilde{\phi}_1^{-1}(0))}=\frac{45}{c_1^2-2c_1+10-(c_1-1)\sqrt{c_1^2-2c_1+28}}.
$$
\begin{align*}
  c_2^{th_2}(c_1)&=\frac{1}{g_2(0)}\left(\max_{z_1\in[c_1,\tilde{\phi}_1^{-1}(0)]}\left(2+\frac{z_1f_1'(z_1)}{f_1(z_1)}\right)\right)\\
  &=\frac{5}{4}\left(\max_{z_1\in[c_1,\tilde{\phi}_1^{-1}(0)]}\left(2+\frac{z_1}{z_1+(1-c_1)}\right)\right)\\
  &=\begin{cases}\frac{5}{4}\left(2+\frac{\tilde{\phi}_1^{-1}(0)}{\tilde{\phi}_1^{-1}(0)+(1-c_1)}\right)&\mbox{if }c_1\leq 1,\\\frac{5}{4}(2+c_1)&\mbox{if }c_1\in(1,4).\end{cases}
\end{align*}
I now derive $c_2^{th_3}(c_1)$. Observe that
$$
  c_2g_2(0)F_1(c_1+t)=1\Rightarrow t^2+2t-\frac{8}{c_2g_2(0)}=0\Rightarrow t(c_2)=\sqrt{1+\frac{8}{c_2g_2(0)}}-1.
$$
Furthermore, $z_1(1-F_1(z_1))$ is maximized at $z_1^*=\tilde{\phi}_1^{-1}(0)$. So I now solve for $c_2$ for which $c_2g_2(0)\left(\int_{c_1}^{c_1+\sqrt{1+\frac{8}{c_2g_2(0)}}-1}z_1f_1(z_1)\,dz_1\right)=(\tilde{\phi}_1^{-1}(0))^2f_1(\tilde{\phi}_1^{-1}(0))$.
\begin{multline*}
  \frac{c_2}{30}\left(2\left(c_1+\sqrt{1+\frac{10}{c_2}}-1\right)^3-2c_1^3+3(1-c_1)\left(\left(c_1+\sqrt{1+\frac{10}{c_2}}-1\right)^2-c_1^2\right)\right) \\
  =\frac{1}{108}\left(\left(2(c_1-1)+\sqrt{c_1^2-2c_1+28}\right)^2\left((1-c_1)+\sqrt{c_1^2-2c_1+28}\right)\right).
\end{multline*}
The value of $c_2^{th_3}(c_1)$ is computed by solving the above equation.

Now we have $c_2<\frac{1}{g_2(0)}$ when $c_2<\frac{5}{4}$. I now derive $c_1^{th_1}(c_2)$ and $c_1^{th_2}(c_2)$.
$$
  c_1^{th_1}(c_2)=\frac{1}{g_1(0)F_2(\tilde{\phi}_2^{-1}(0))}=\frac{180}{2c_2^2-8c_2-1-2(c_2-2)\sqrt{c_2^2-4c_2+31}}.
$$
\begin{align*}
  c_1^{th_2}(c_2)&=\frac{1}{g_1(0)}\left(\max_{z_2\in[c_2,\tilde{\phi}_2^{-1}(0)]}\left(2+\frac{z_2f_2'(z_2)}{f_2(z_2)}\right)\right)\\
  &=4\left(\max_{z_2\in[c_2,\tilde{\phi}_2^{-1}(0)]}\left(2+\frac{z_2}{z_2+(2-c_2)}\right)\right)\\
  &=4\left(2+\frac{\tilde{\phi}_2^{-1}(0)}{\tilde{\phi}_2^{-1}(0)+(2-c_2)}\right)
\end{align*}
I now derive $c_1^{th_3}(c_2)$. Observe that
$$
  c_1g_1(0)F_2(c_2+t)=1\Rightarrow t^2+2t-\frac{5}{c_1g_1(0)}=0\Rightarrow t(c_1)=\sqrt{4+\frac{5}{c_1g_1(0)}}-2.
$$
Furthermore, $z_2(1-F_2(z_2))$ is maximized at $z_2^*=\tilde{\phi}_2^{-1}(0)$. So I now solve for $c_1$ for which $c_1g_1(0)\left(\int_{c_2}^{c_2+\sqrt{4+\frac{5}{c_1g_1(0)}}-2}z_2f_2(z_2)\,dz_2\right)=(\tilde{\phi}_2^{-1}(0))^2f_2(\tilde{\phi}_2^{-1}(0))$.
\begin{multline*}
  \frac{c_1}{60}\left(2\left(c_2+\sqrt{4+\frac{20}{c_1}}-2\right)^3-2c_2^3+3(2-c_2)\left(\left(c_2+\sqrt{4+\frac{20}{c_1}}-2\right)^2-c_2^2\right)\right) \\
  =\frac{2}{135}\left(\left(2(c_2-2)+\sqrt{c_2^2-4c_2+31}\right)^2\left((2-c_2)+\sqrt{c_2^2-4c_2+31}\right)\right).
\end{multline*}
The value of $c_1^{th_3}(c_2)$ is computed by solving the above equation.

\section*{B. Missing Proofs in Section \ref{sec:ext}}
{\bf Proof of Theorem \ref{thm:three-high}:} Define the sets $Z$ and $W$ as follows:
\begin{itemize}
 \item $Z=\{(z_1,z_2,z_3)\in D:z_1+z_2+z_3\leq c_1+c_2+c_3+p\}$,
 \item $W=\{(z_1,z_2,z_3)\in D:z_1+z_2+z_3>c_1+c_2+c_3+p\}$,
\end{itemize}
where $p$ is chosen such that $\bar{\mu}(Z)=0$. We have $\bar{\mu}(W)=\bar{\mu}(D)-\bar{\mu}(Z)=0-0=0$. The optimal mechanism is to sell the items as a bundle if the allocation  function, $q(z)$, is given as
$$
  q(z)=\begin{cases}(0,0,0)&\mbox{if }z\in Z,\\(1,1,1)&\mbox{if }z\in W.\end{cases}
$$

So according to Theorem \ref{thm:opt_menu}, the allocation function of the optimal mechanism is as mentioned above if (i) $\bar{\mu}|_W\preceq_{\overrightarrow{(-1,-1,-1)}}0$, and (ii) $\bar{\mu}|_Z\preceq_{\overrightarrow{(1,1,1)}}0$. I now show that the measure $\bar{\mu}$ satisfies both of these conditions when $c_i\geq\frac{1}{g_i(0)}$, $i=1,2,3$.

I begin by showing that $\bar{\mu}|_Z\preceq_{\overrightarrow{(1,1,1)}}0$. Note that this is equivalent to showing that $\int_Zu\,d\bar{\mu}\leq0$ for all convex, nondecreasing $u$. We thus have
\begin{align*}
  &\int_Zu\,d\bar{\mu} \\
  &=\int_{c_3}^{c_3+p}\int_{c_2}^{c_2+c_3+p-z_3}\int_{c_1}^{c_1+c_2+c_3+p-z_2-z_3}u(z)\mu(z)\,dz_1\,dz_2\,dz_3 \\
  &\hspace*{.1in}+\int_{c_3}^{c_3+p}\int_{c_2}^{c_2+c_3+p-z_3}u(c_1,z_2,z_3)\mu_s(c_1,z_2,z_3)\,dz_2\,dz_3 \\
  &\hspace*{.1in}+\int_{c_2}^{c_2+p}\int_{c_1}^{c_1+c_2+p-z_2}u(z_1,z_2,c_3)\mu_s(z_1,z_2,c_3)\,dz_1\,dz_2 \\
  &\hspace*{.1in}+\int_{c_1}^{c_1+p}\int_{c_3}^{c_3+c_1+p-z_1}u(z_1,c_2,z_3)\mu_s(z_1,c_2,z_3)\,dz_3\,dz_1+u(c_1,c_2,c_3)\mu_p(c_1,c_2,c_3) \\
  &\leq u(c_1,c_2,c_3)\left(\int_{c_3}^{c_3+p}\int_{c_2}^{c_2+c_3+p-z_3}\int_{c_1}^{c_1+c_2+c_3+p-z_2-z_3}\mu(z)\,dz_1\,dz_2\,dz_3\right. \\
  &\hspace*{1.1in}+\int_{c_3}^{c_3+p}\int_{c_2}^{c_2+c_3+p-z_3}\mu_s(c_1,z_2,z_3)\,dz_2\,dz_3 \\
  &\hspace*{1.1in}+\int_{c_2}^{c_2+p}\int_{c_1}^{c_1+c_2+p-z_2}\mu_s(z_1,z_2,c_3)\,dz_1\,dz_2 \\
  &\hspace*{1.1in}\left.+\int_{c_1}^{c_1+p}\int_{c_3}^{c_3+c_1+p-z_1}\mu_s(z_1,c_2,z_3)\,dz_3\,dz_1+\mu_p(c_1,c_2,c_3)\right) \\
  &=u(c_1,c_2,c_3)\bar{\mu}(Z)=0,
\end{align*}
where the first inequality holds because (i) $u(z_1,z_2,z_3)\geq u(c_1,c_2,c_3)$, (ii) $\mu(z)<0$ for all $z\in D$ from Lemma \ref{lem:mu-phi-i}(a), and (iii) $\mu_s(z)<0$ for all $z$ when either $z_1=c_1$, $z_2=c_2$ or $z_3=c_3$. I have thus shown that $\bar{\mu}|_Z\preceq_{\overrightarrow{(1,1,1)}}0$.

I now proceed to show that $\bar{\mu}|_W\preceq_{\overrightarrow{(-1,-1,-1)}}0$. This is equivalent to showing that $\int_Wu\,d\bar{\mu}\geq0$ for all concave, nondecreasing $u$. I will instead prove a stronger result: $\int_Wu\,d\bar{\mu}\geq0$ for all nondecreasing $u$, which is equivalent to proving that $\bar{\mu}|_W\succeq_10$. One can prove $\bar{\mu}|_W\succeq_10$ by proving that (i) $\bar{\mu}|_W(X)\geq 0$ for any increasing set\footnote{A set $X\subseteq W$ is an increasing set if $(x_1,x_2,x_3)\in X$ implies $(y_1,y_2,y_3)\in X$ for all $\{(y_1,y_2,y_3)\in W:y_i\geq x_i,i=1,2,3\}$.} $X\subseteq W$, and (ii) $\bar{\mu}(W)=0$ \cite[Chap.~6]{SS07}. Now observe that
$$
  \int(-z_if_i'(z_i)-2f_i(z_i))\,dz_i=-(z_if_i(z_i)+F_i(z_i)).
$$
So for some $t_1,t_2,t_3>0$ such that $t_1+t_2+t_3\geq p$, we have
\begin{align}
  &\bar{\mu}|_W(\times_{i=1}^3[c_i+t_i,c_i+b_i]) \nonumber\\
  &=\int_{c_3+t_3}^{c_3+b_3}\int_{c_2+t_2}^{c_2+b_2}\int_{c_1+t_1}^{c_1+b_1}(-z_1f_1'(z_1)f_2(z_2)f_3(z_3)-z_2f_2'(z_2)f_1(z_1)f_3(z_3) \nonumber\\
  &\hspace*{1in}-z_3f_3'(z_3)f_1(z_1)f_2(z_2)-4f_1(z_1)f_2(z_2)f_3(z_3))\,dz_1\,dz_2\,dz_3 \nonumber\\
  &\hspace*{.2in}+(c_1+b_1)f_1(c_1+b_1)\int_{c_3+t_3}^{c_3+b_3}\int_{c_2+t_2}^{c_2+b_2}f_2(z_2)f_3(z_3)\,dz_2\,dz_3 \nonumber\\
  &\hspace*{.2in}+(c_2+b_2)f_2(c_2+b_2)\int_{c_1+t_1}^{c_1+b_1}\int_{c_3+t_3}^{c_3+b_3}f_3(z_3)f_1(z_1)\,dz_3\,dz_1 \nonumber\\
  &\hspace*{.2in}+(c_3+b_3)f_2(c_3+b_3)\int_{c_2+t_2}^{c_2+b_2}\int_{c_1+t_1}^{c_1+b_1}f_1(z_1)f_2(z_2)\,dz_1\,dz_2 \nonumber\\
  &=-(1-F_2(c_2+t_2))(1-F_3(c_3+t_3))((c_1+b_1)f_1(c_1+b_1)+1-(c_1+t_1)f_1(c_1+t_1)-F_1(c_1+t_1)) \nonumber\\
  &\hspace*{.2in}-(1-F_3(c_3+t_3))(1-F_1(c_1+t_1))((c_2+b_2)f_2(c_2+b_2)-(c_2+t_2)f_2(c_2+t_2)) \nonumber\\
  &\hspace*{.2in}-(1-F_1(c_1+t_1))(1-F_2(c_2+t_2))((c_3+b_3)f_3(c_3+b_3)-(c_3+t_3)f_3(c_3+t_3)) \nonumber\\
  &\hspace*{.2in}+(c_1+b_1)f_1(c_1+b_1)(1-F_2(c_2+t_2))(1-F_3(c_3+t_3)) \nonumber\\
  &\hspace*{.2in}+(c_2+b_2)f_2(c_2+b_2)(1-F_3(c_3+t_3))(1-F_1(c_1+t_1)) \nonumber\\
  &\hspace*{.2in}+(c_3+t_3)f_3(c_3+t_3))(1-F_1(c_1+t_1))(1-F_2(c_2+t_2)) \nonumber\\
  &=\tilde{\phi}_1(c_1+t_1)(1-F_2(c_2+t_2))(1-F_3(c_3+t_3)) \nonumber\\
  &\hspace*{.2in}+(c_2+t_2)f_2(c_2+t_2)(1-F_3(c_3+t_3))(1-F_1(c_1+t_1)) \nonumber\\
  &\hspace*{.3in}+(c_3+t_3)f_3(c_3+t_3)(1-F_1(c_1+t_1))(1-F_2(c_2+t_2)) \label{eqn:mu-bar-W-three}\\
  &\geq 0,\nonumber
\end{align}
where the last step follows since (i) $c_1\geq\frac{1}{g_1(0)}$ implies that $c_1f_1(c_1)-1\geq0$ which in turn implies that $\tilde{\phi}_1(c_1)=c_1f_1(c_1)-(1-F_1(c_1))\geq0$, and (ii) $\tilde{\phi}_1(\cdot)$ increasing from Lemma \ref{lem:mu-phi-i}(b) implies that $\tilde{\phi}_1(z_1)\geq0$ for all $z_1\in[c_1,c_1+b_1]$.

Now consider $t_1,t_2,t_3>0$ but $t_1+t_2+t_3<p$. Define the set $\hat{W}$ as $\hat{W}=\times_{i=1}^3[c_i+t_i,c_i+b_i]\backslash W$. Then we have
\begin{multline*}
  \bar{\mu}|_W(\times_{i=1}^3[c_i+t_i,c_i+b_i]) \\
  =\bar{\mu}(\times_{i=1}^3[c_i+t_i,c_i+b_i])-\int\int\int_{\hat{W}}\mu(z)\,dz_1\,dz_2\,dz_3\geq0,
\end{multline*}
where the inequality follows since (i) $\bar{\mu}(\times_{i=1}^3[c_i+t_i,c_i+b_i])\geq0$ can be established by a similar series of steps used to establish $\bar{\mu}|_W(\times_{i=1}^3[c_i+t_i,c_i+b_i])\geq0$ when $t_i>0$ and $\sum_{i=1}^3t_i\geq p$, and (ii) $\mu(z)<0$ for all $z\in D$ from Lemma \ref{lem:mu-phi-i}(a).

Now consider $t_3=0$. In other words, consider sets of the form $[c_1+t_1,c_1+b_1]\times[c_2+t_2,c_2+b_2]\times[c_3,c_3+b_3]$ for some $t_1,t_2>0$. Then we have
\begin{align*}
  &\bar{\mu}|_W([c_1+t_1,c_1+b_1]\times[c_2+t_2,c_2+b_2]\times[c_3,c_3+b_3]) \\
  &=\lim_{t_3\rightarrow 0}\left(\bar{\mu}|_W(\times_{i=1}^3[c_i+t_i,c_i+b_i])\right)-c_3f_3(c_3)(1-F_1(c_1+t_1))(1-F_2(c_2+t_2)) \\
  &\stackrel{(a)}{=}\tilde{\phi}_1(c_1+t_1)(1-F_2(c_2+t_2))(1-F_3(c_3+t_3)) \\
  &\hspace*{.2in}+(c_2+t_2)f_2(c_2+t_2)(1-F_3(c_3+t_3))(1-F_1(c_1+t_1)) \\
  &\geq 0,
\end{align*}
where (a) follows from \eqref{eqn:mu-bar-W-three}. Using a similar series of steps, we can show that $\bar{\mu}|_W(\times_{i=1}^3[c_i+t_i,c_i+b_i])\geq0$ when at least one of the terms $t_1$, $t_2$, or $t_3$ equals zero.

I have thus shown that $\bar{\mu}|_W(X)\geq 0$ for any increasing rectangle $X\subseteq W$. I now extend this result for increasing sets that are not rectangular. Let $c_1+t_1=\min\{z_1:(z_1,c_2+b_2,c_3+b_3)\in X\}$, $c_2+t_2=\min\{z_2:(c_1+b_1,z_2,c_3+b_3)\in X\}$, and $c_3+t_3=\min\{z_3:(c_1+b_1,c_2+b_2,z_3)\in X\}$. Observe that $X\subseteq\times_{i=1}^3[c_i+t_i,c_i+b_i]$, since $X$ is an increasing set. Let $Y=(\times_{i=1}^3[c_i+t_i,c_i+b_i])\backslash X$. Now we have
\begin{multline*}
  \bar{\mu}|_W(X)=\bar{\mu}|_W(\times_{i=1}^3[c_i+t_i,c_i+b_i])-\int\int\int_Y\mu(z)\,dz_1\,dz_2\,dz_3 \\
  \geq\bar{\mu}|_W(\times_{i=1}^3[c_i+t_i,c_i+b_i])\geq0,
\end{multline*}
where the first inequality occurs because $\mu(z)<0$ for all $z\in D$ from Lemma \ref{lem:mu-phi-i}(a). Thus $\bar{\mu}|_W(X)\geq0$ for any increasing set $X\subseteq W$, and this implies $\bar{\mu}|_W\succeq_10$.

This proves the theorem.\qed

{\bf Proof of Theorem \ref{thm:one-high}:} Note that $D=[0,1]\times[0,1]\times[c_3,c_3+1]$ for some $c_3\geq3$. The components of $\bar{\mu}$-measure can be computed from \eqref{eqn:mu}, \eqref{eqn:mu-s}, and \eqref{eqn:mu-p} as
\begin{align*}
  &\mbox{(Volume density) }\mu(z)=-4,\,\forall z\in D, \\
  &\mbox{(Area density) }\mu_s(z_1,z_2,c_3)=-c_3,\,\forall (z_1,z_2)\in[0,1]^2, \\
  &\mbox{(Area density) }\mu_s(z_1,z_2,c_3+1)=c_3+1,\,\forall (z_1,z_2)\in[0,1]^2, \\
  &\mbox{(Area density) }\mu_s(1,z_2,z_3)=1,\,\forall (z_2,z_3)\in[0,1]\times[c_3,c_3+1], \\
  &\mbox{(Area density) }\mu_s(z_1,1,z_3)=1,\,\forall (z_1,z_3)\in[0,1]\times[c_3,c_3+1], \\
  &\mbox{(Point density) }\mu_p(c_1,c_2,c_3)=1.
\end{align*}
Define the sets $Z$, $W$, $A$, and $B$ as follows:
\begin{itemize}
 \item $Z=\{(z_1,z_2,z_3)\in D:z_1\leq\frac{2}{3},z_2\leq\frac{2}{3},z_1+z_2\leq\frac{4-\sqrt{2}}{3}\}$.
 \item $A=\{(z_1,z_2,z_3)\in D:z_1\geq\frac{2}{3},z_2\leq\frac{2-\sqrt{2}}{3}\}$.
 \item $B=\{(z_1,z_2,z_3)\in D:z_2\geq\frac{2}{3},z_1\leq\frac{2-\sqrt{2}}{3}\}$.
 \item $W=D\backslash(A\cup B\cup Z)$.
\end{itemize}
The optimal mechanism is as given in the theorem statement if $q(z)$ is given as
$$
  q(z)=\begin{cases}(0,0,1)&\mbox{if }z\in Z,\\(1,0,1)&\mbox{if }z\in A,\\(0,1,1)&\mbox{if }z\in B,\\(1,1,1)&\mbox{if }z\in W.\end{cases}
$$

So according to Theorem \ref{thm:opt_menu}, the allocation function of the optimal mechanism is as mentioned above if (i) $\bar{\mu}|_Z\preceq_{\overrightarrow{(1,1,-1)}}0$, (ii) $\bar{\mu}|_A\preceq_{\overrightarrow{(-1,1,-1)}}0$, (iii) $\bar{\mu}|_B\preceq_{\overrightarrow{(1,-1,-1)}}0$, and (iv) $\bar{\mu}|_W\preceq_{\overrightarrow{(-1,-1,-1)}}0$. I now show that the measure $\bar{\mu}$ satisfies all of these conditions.

I begin by showing that $\bar{\mu}|_Z\preceq_{\overrightarrow{(1,1,-1)}}0$. This is equivalent to showing that $\int_Zu\,d\bar{\mu}\leq0$ for all $u(z_1,z_2,z_3)$ that is (i) convex, (ii) nondecreasing in $z_1,z_2$, and (iii) nonincreasing in $z_3$. So we have
\begin{align}
  &\int_Zu\,d\bar{\mu} \nonumber\\
  &=\int_{c_3}^{c_3+1}\left(\int_{0}^{\frac{2-\sqrt{2}}{3}}\int_{0}^{\frac{2}{3}}+\int_{\frac{2-\sqrt{2}}{3}}^{\frac{2}{3}}\int_{0}^{\frac{4-\sqrt{2}}{3}-z_2}\right)u(z)\mu(z)\,dz_1\,dz_2\,dz_3 \nonumber\\
  &\hspace*{.1in}+\left(\int_{0}^{\frac{2-\sqrt{2}}{3}}\int_{0}^{\frac{2}{3}}+\int_{\frac{2-\sqrt{2}}{3}}^{\frac{2}{3}}\int_{0}^{\frac{4-\sqrt{2}}{3}-z_2}\right)u(z_1,z_2,c_3+1)\mu_s(z_1,z_2,c_3+1)\,dz_1\,dz_2 \nonumber\\
  &\hspace*{.1in}+\left(\int_{0}^{\frac{2-\sqrt{2}}{3}}\int_{0}^{\frac{2}{3}}+\int_{\frac{2-\sqrt{2}}{3}}^{\frac{2}{3}}\int_{0}^{\frac{4-\sqrt{2}}{3}-z_2}\right)u(z_1,z_2,c_3)\mu_s(z_1,z_2,c_3)\,dz_1\,dz_2 \nonumber\\
  &\hspace*{.1in}+u(c_1,c_2,c_3)\mu_p(c_1,c_2,c_3) \nonumber\\
  &=(-4)\int_{c_3}^{c_3+1}\left(\int_{0}^{\frac{2-\sqrt{2}}{3}}\int_{0}^{\frac{2}{3}}+\int_{\frac{2-\sqrt{2}}{3}}^{\frac{2}{3}}\int_{0}^{\frac{4-\sqrt{2}}{3}-z_2}\right)u(z)\,dz_1\,dz_2\,dz_3 \nonumber\\
  &\hspace*{.1in}+(c_3+1+3-3)\left(\int_{0}^{\frac{2-\sqrt{2}}{3}}\int_{0}^{\frac{2}{3}}+\int_{\frac{2-\sqrt{2}}{3}}^{\frac{2}{3}}\int_{0}^{\frac{4-\sqrt{2}}{3}-z_2}\right)u(z_1,z_2,c_3+1)\,dz_1\,dz_2 \nonumber\\
  &\hspace*{.1in}-(c_3-3+3)\left(\int_{0}^{\frac{2-\sqrt{2}}{3}}\int_{0}^{\frac{2}{3}}+\int_{\frac{2-\sqrt{2}}{3}}^{\frac{2}{3}}\int_{0}^{\frac{4-\sqrt{2}}{3}-z_2}\right)u(z_1,z_2,c_3)\,dz_1\,dz_2+u(c_1,c_2,c_3).\label{eqn:Z-main}
\end{align}
Now note that
\begin{equation}\label{eqn:Z-1}
  \left(\int_{0}^{\frac{2-\sqrt{2}}{3}}\int_{0}^{\frac{2}{3}}+\int_{\frac{2-\sqrt{2}}{3}}^{\frac{2}{3}}\int_{0}^{\frac{4-\sqrt{2}}{3}-z_2}\right)\left(-4\int_{c_3}^{c_3+1}u(z)\,dz_3+4u(z_1,z_2,c_3+1)\right)\,dz_1\,dz_2\leq0
\end{equation}
since $u$ nonincreasing in $z_3$ implies that $u(z_1,z_2,c_3+1)\leq u(z_1,z_2,z_3)$ for all $z_3\in[c_3,c_3+1]$. Also note that
\begin{equation}\label{eqn:Z-2}
  (c_3-3)\left(\int_{0}^{\frac{2-\sqrt{2}}{3}}\int_{0}^{\frac{2}{3}}+\int_{\frac{2-\sqrt{2}}{3}}^{\frac{2}{3}}\int_{0}^{\frac{4-\sqrt{2}}{3}-z_2}\right)(u(z_1,z_2,c_3+1)-u(z_1,z_2,c_3))\,dz_1\,dz_2\leq0
\end{equation}
since (i) $u(z_1,z_2,c_3+1)\leq u(z_1,z_2,c_3)$, and (ii) $(c_3-3)\geq0$. Finally, note that
\begin{equation}\label{eqn:Z-3}
  u(c_1,c_2,c_3)-3\left(\int_{0}^{\frac{2-\sqrt{2}}{3}}\int_{0}^{\frac{2}{3}}+\int_{\frac{2-\sqrt{2}}{3}}^{\frac{2}{3}}\int_{0}^{\frac{4-\sqrt{2}}{3}-z_2}\right)u(z_1,z_2,c_3)\,dz_1\,dz_2\leq0
\end{equation}
since (i) $u$ nondecreasing in $(z_1,z_2)$ implies that $u(z_1,z_2,z_3)\geq u(c_1,c_2,c_3)$ for all $(z_1,z_2)\in[0,1]^2$, and (ii) $\left(\int_{0}^{\frac{2-\sqrt{2}}{3}}\int_{0}^{\frac{2}{3}}+\int_{\frac{2-\sqrt{2}}{3}}^{\frac{2}{3}}\int_{0}^{\frac{4-\sqrt{2}}{3}-z_2}\right)\,dz_1\,dz_2=\frac{1}{3}$. Given that the left hand sides of \eqref{eqn:Z-1}, \eqref{eqn:Z-2}, and \eqref{eqn:Z-3} sum to the expression in \eqref{eqn:Z-main}, we have $\int_Zu\,d\bar{\mu}\leq0$ for all the relevant $u$.

I now proceed to prove that $\bar{\mu}|_A\preceq_{\overrightarrow{(-1,1,-1)}}0$. This is equivalent to showing that $\int_Au\,d\bar{\mu}\leq0$ for all $u(z_1,z_2,z_3)$ that is (i) convex, (ii) nondecreasing in $z_2$, and (iii) nonincreasing in $z_1,z_3$. Define the sets $A_1$, $A_2$, and $A_3$ as
\begin{itemize}
 \item $A_1=[\frac{2}{3},1)\times[0,\frac{2-\sqrt{2}}{3}]\times(\{c_3\}\cup[c_3+\frac{3}{4},c_3+1])$.
 \item $A_2=[\frac{3}{4},1]\times[0,\frac{2-\sqrt{2}}{3}]\times(c_3,c_3+\frac{3}{4}]$.
 \item $A_3=(\{1\}\times[0,\frac{2-\sqrt{2}}{3}]\times(\{c_3\}\cup[c_3+\frac{3}{4},c_3+1]))\cup([\frac{2}{3},\frac{3}{4}]\times[0,\frac{2-\sqrt{2}}{3}]\times(c_3,c_3+\frac{3}{4}])$.
\end{itemize}
It is easy to check that $\bar{\mu}(A_1)=\bar{\mu}(A_2)=\bar{\mu}(A_3)=0$. We now have
\begin{align*}
  &\int_{A_1}u\,d\bar{\mu}\\
  &=\int_{0}^{\frac{2-\sqrt{2}}{3}}\int_{\frac{2}{3}}^1\left(\int_{c_3+\frac{3}{4}}^{c_3+1}u(z)\mu(z)\,dz_3\right. \\
  &\hspace*{.25in}\left.+u(z_1,z_2,c_3+1)\mu_s(z_1,z_2,c_3+1)+u(z_1,z_2,c_3)\mu_s(z_1,z_2,c_3)\right)\,dz_1\,dz_2 \\
  &\leq\int_{0}^{\frac{2-\sqrt{2}}{3}}\int_{\frac{2}{3}}^1u(z_1,z_2,c_3+1)\left(\int_{c_3+\frac{3}{4}}^{c_3+1}(-4)\,dz_3+(c_3+1)-c_3\right)\,dz_1\,dz_2 \\
  &=0,
\end{align*}
where the inequality follows since $u$ nonincreasing in $z_3$ implies that $u(z_1,z_2,c_3+1)\leq u(z_1,z_2,z_3)$ for all $z_3\in[c_3,c_3+1]$. Also, we have
\begin{align*}
  &\int_{A_2}u\,d\bar{\mu}\\
  &=\int_{0}^{\frac{2-\sqrt{2}}{3}}\int_{c_3}^{c_3+\frac{3}{4}}\left(\int_{\frac{3}{4}}^1u(z)\mu(z)\,dz_1+u(1,z_2,z_3)\mu_s(1,z_2,z_3)\right)\,dz_3\,dz_2 \\
  &\leq\int_{0}^{\frac{2-\sqrt{2}}{3}}\int_{c_3}^{c_3+\frac{3}{4}}u(1,z_2,z_3)\left(\int_{\frac{3}{4}}^1(-4)\,dz_1+1\right)\,dz_3\,dz_2 \\
  &=0,
\end{align*}
where the inequality follows since $u$ nonincreasing in $z_1$ implies that $u(1,z_2,z_3)\leq u(z_1,z_2,z_3)$ for all $z_1\in[0,1]$. Finally, we have
\begin{align*}
  &\int_{A_3}u\,d\bar{\mu}\\
  &=\int_{0}^{\frac{2-\sqrt{2}}{3}}\left(\int_{c_3+\frac{3}{4}}^{c_3+1}u(1,z_2,z_3)\mu_s(1,z_2,z_3)\,dz_3+\int_{c_3}^{c_3+\frac{3}{4}}\int_{\frac{2}{3}}^{\frac{3}{4}}u(z)\mu(z)\,dz_1\,dz_3\right)\,dz_2 \\
  &\leq\int_{0}^{\frac{2-\sqrt{2}}{3}}u\left(1,z_2,c_3+\frac{3}{4}\right)\left(\int_{c_3+\frac{3}{4}}^{c_3+1}(1)\,dz_3+\int_{c_3}^{c_3+\frac{3}{4}}\int_{\frac{2}{3}}^{\frac{3}{4}}(-4)\,dz_1\,dz_3\right)\,dz_2 \\
  &=0,
\end{align*}
where the inequality follows since (i) $u$ nonincreasing in $z_1$ implies that $u(1,z_2,z_3)\leq u(z_1,z_2,z_3)$ for all $z_1\in[0,1]$, and (ii) $u$ nonincreasing in $z_3$ implies that $u(1,z_2,z_3)\leq u(1,z_2,c_3+\frac{3}{4})\leq u(1,z_2,z_3')$ for all $z_3\in[c_3+\frac{3}{4},c_3+1], z_3'\in[c_3,c_3+\frac{3}{4}]$. Now we have
$$
  \int_Au\,d\bar{\mu}=\int_{A_1}u\,d\bar{\mu}+\int_{A_2}u\,d\bar{\mu}+\int_{A_3}u\,d\bar{\mu}\leq0.
$$
for all relevant $u$.

The proof for $\bar{\mu}|_B\preceq_{\overrightarrow{(1,-1,-1)}}0$ is exactly similar to the proof of $\bar{\mu}|_A\preceq_{\overrightarrow{(-1,1,-1)}}0$, and thus I skip the proof.

I now proceed to show that $\bar{\mu}|_W\preceq_{\overrightarrow{(-1,-1,-1)}}0$. This is equivalent to showing that $\int_Wu\,d\bar{\mu}\leq0$ for all convex, nonincreasing $u$. Define the sets $W_1$, $W_2$, and $W_3$ as
\begin{itemize}
 \item $W_1=[\frac{2}{3},1]\times[\frac{2-\sqrt{2}}{3},1)\times[c_3,c_3+1]$.
 \item $W_2=[\frac{2-\sqrt{2}}{3},\frac{2}{3}]\times[\frac{2}{3},1]\times[c_3,c_3+1]$.
 \item $W_3=([\frac{2}{3},1]\times\{1\}\times[c_3,c_3+1])\cup W_{3}'$,
\end{itemize}
where the set $W_{3}'$ is defined as
\begin{multline*}
  W_{3}'=\left(\mbox{Triangle with vertices }\left(\frac{2-\sqrt{2}}{3},\frac{2}{3},c_3\right), \left(\frac{2}{3},\frac{2}{3},c_3\right), \left(\frac{2}{3},\frac{2-\sqrt{2}}{3},c_3\right)\right) \\\times[c_3,c_3+1].
\end{multline*}
It is easy to check that $\bar{\mu}(W_1)=\bar{\mu}(W_2)=\bar{\mu}(W_3)=0$. The proof of $\int_{W_1}u\,d\bar{\mu}\leq0$ is exactly similar to the proof of $\int_Au\,d\bar{\mu}\leq0$ given that the similarity of the structure of the sets $A$ and $W_1$. Also, The proof of $\int_{W_2}u\,d\bar{\mu}\leq0$ is exactly similar to the proof of $\int_Bu\,d\bar{\mu}\leq0$ given that the similarity of the structure of the sets $B$ and $W_2$. I thus skip those proofs, and proceed to prove that $\int_{W_3}u\,d\bar{\mu}\leq0$.

\begin{align}
  &\int_{W_3}u\,d\bar{\mu} \nonumber\\
  &=\int_{c_3}^{c_3+1}\left(\int_{\frac{2}{3}}^1u(z_1,1,z_3)\mu_s(z_1,1,z_3)\,dz_1+\int_{\frac{2-\sqrt{2}}{3}}^{\frac{2}{3}}\int_{\frac{4-\sqrt{2}}{3}-z_2}^{\frac{2}{3}}u(z)\mu(z)\,dz_1\,dz_2\right)\,dz_3 \nonumber\\
  &\hspace*{.1in}+\int_{\frac{2-\sqrt{2}}{3}}^{\frac{2}{3}}\int_{\frac{4-\sqrt{2}}{3}-z_2}^{\frac{2}{3}}(u(z_1,z_2,c_3+1)\mu_s(z_1,z_2,c_3+1)+u(z_1,z_2,c_3)\mu_s(z_1,z_2,c_3))\,dz_1,dz_2 \nonumber\\
  &\leq\int_{c_3}^{c_3+1}u\left(\frac{2}{3},1,z_3\right)\left(\int_{\frac{2}{3}}^1(1)\,dz_1+\int_{\frac{2-\sqrt{2}}{3}}^{\frac{2}{3}}\int_{\frac{4-\sqrt{2}}{3}-z_2}^{\frac{2}{3}}(-3)\,dz_1\,dz_2\right)\,dz_3 \nonumber\\
  &\hspace*{.1in}+\int_{c_3}^{c_3+1}\int_{\frac{2-\sqrt{2}}{3}}^{\frac{2}{3}}\int_{\frac{4-\sqrt{2}}{3}-z_2}^{\frac{2}{3}}u(z_1,z_2,z_3)(-1)\,dz_1\,dz_2,\,dz_3 \nonumber\\
  &\hspace*{.1in}+u(z_1,z_2,c_3+1)\int_{\frac{2-\sqrt{2}}{3}}^{\frac{2}{3}}\int_{\frac{4-\sqrt{2}}{3}-z_2}^{\frac{2}{3}}(c_3+1-c_3)\,dz_1,dz_2,\label{eqn:W-main}
\end{align}
where the inequality follows since (i) $u$ nonincreasing implies $u(z_1',1,z_3)\leq u(\frac{2}{3},1,z_3)\leq u(z_1,z_2,z_3)$ for all $z_1'\in[\frac{2}{3},1]$, $(z_1,z_2)\in W_3'$, and (ii) $u$ nonincreasing implies $u(z_1,z_2,c_3+1)\leq u(z_1,z_2,z_3)$ for all $z_3\in[c_3,c_3+1]$. Now note that
\begin{equation}\label{eqn:W-1}
  \int_{\frac{2}{3}}^1(1)\,dz_1+\int_{\frac{2-\sqrt{2}}{3}}^{\frac{2}{3}}\int_{\frac{4-\sqrt{2}}{3}-z_2}^{\frac{2}{3}}(-3)\,dz_1\,dz_2=0,
\end{equation}
and also that
\begin{multline}\label{eqn:W-2}
  \int_{\frac{2-\sqrt{2}}{3}}^{\frac{2}{3}}\int_{\frac{4-\sqrt{2}}{3}-z_2}^{\frac{2}{3}}\left(\int_{c_3}^{c_3+1}(-1)u(z_1,z_2,z_3)\,dz_3+u(z_1,z_2,c_3+1)(1)\right)\,dz_1\,dz_2 \\
  \leq\int_{\frac{2-\sqrt{2}}{3}}^{\frac{2}{3}}\int_{\frac{4-\sqrt{2}}{3}-z_2}^{\frac{2}{3}}u(z_1,z_2,c_3+1)\left(1-\int_{c_3}^{c_3+1}\,dz_3\right)\,dz_1\,dz_2=0,
\end{multline}
where the inequality holds since $u$ nondecreasing implies that $u(z_1,z_2,z_3)\geq u(c_1,c_2,c_3+1)$ for all $z_3\in[c_3,c_3+1]$. Given that the left hand sides of \eqref{eqn:W-1} and \eqref{eqn:W-2} sum to the expression in \eqref{eqn:W-main}, we have $\int_{W_3}u\,d\bar{\mu}\leq0$ for all convex, nonincreasing $u$. Now we have
$$
  \int_Wu\,d\bar{\mu}=\int_{W_1}u\,d\bar{\mu}+\int_{W_2}u\,d\bar{\mu}+\int_{W_3}u\,d\bar{\mu}\leq0.
$$
for all convex, nonincreasing $u$. This completes the proof of the theorem.\qed

\section*{C. Tightness of the Threshold}
I now provide an example to show that the threshold $c_2^*(c_1)$ mentioned in Theorem \ref{thm:individual-sale} is not tight. The densities in the example violate condition (iii) in Theorem \ref{thm:suff-individual}, and thus I also show using this example that the conditions in the theorem are not necessary for the individual sale mechanism to be optimal.

I consider the same densities as in Example \ref{eg:lin}. Let $f_1(z_1)=\frac{1}{4}(1+z_1-c_1)$ when $z_1\in[c_1,c_1+2]$, and $f_2(z_2)=\frac{2}{5}(2+z_2-c_2)$ when $z_2\in[c_2,c_2+1]$. When $c_1=0$, the thresholds are given by
$$
  c_2^{th_1}(0)\approx2.9428,\,c_2^{th_2}(0)\approx3.154,\,c_2^{th_3}(0)\approx2.838,
$$
and thus $c_2^*(0)\approx3.154$. So according to Theorem \ref{thm:individual-sale}, the optimal mechanism is to sell the items individually whenever $c_1=0$ and $c_2\geq3.154$. Furthermore, note that $\tilde{\phi}_1^{-1}(0)=\frac{\sqrt{28}-2}{3}>1$, and so when $c_1=0$, $c_2=3$, and $z_1=1\in[c_1,\tilde{\phi}_1^{-1}(0)]$, we have
$$
  1.2=c_2f_2(c_2)f_1(z_1)<z_1f_1'(z_1)+2f_1(z_1)=1.25.
$$
Thus the condition (iii) of Theorem \ref{thm:suff-individual} is violated when $c_1=0$, $c_2=3$, and $z_1=1\in[c_1,\tilde{\phi}_1^{-1}(0)]$. I now show that the optimal mechanism is the individual sale when $c_2\geq3$. I would have thus shown that (a) the threshold $c_2^*(c_1)=3.154$ is not tight, and (b) the conditions in Theorem \ref{thm:suff-individual} are only sufficient but not necessary.

The densities $f_1$ and $f_2$ are positive, nondecreasing, and continuously differentiable, and thus from Lemma \ref{lem:mu-phi-i}, we have (i) $\mu(z)\leq0$ for all $z\in D$, and (ii) $\tilde{\phi}_1(\cdot)$ increasing, $\tilde{\phi}_1(c_1)<0$. Furthermore, when $c_2\geq3$, we have $c_2\geq c_2^{th_1}(0)$ and $c_2\geq c_2^{th_3}(0)$. Thus the condition that there exists $t\leq\tilde{\phi}_1^{-1}(0)-1$ such that $c_2f_2(c_2)F_1(c_1+t)=1$ and $c_2f_2(c_2)\int_{c_1}^{c_1+t}z_1f_1(z_1)\,dz_1\leq(\tilde{\phi}_1^{-1}(0))^2f_1(\tilde{\phi}_1^{-1}(0))$ is also satisfied. The only condition in Lemma \ref{prop:A-region} that is not satisfied when $c_2\in[3,c_2^{th_2}(0)]$, is the condition that $c_2f_2(c_2)f_1(z_1)\geq z_1f_1'(z_1)+2f_1(z_1)$ for all $z_1\in[c_1,\tilde{\phi}_1^{-1}(0)]$. I now show that the proof of Lemma \ref{prop:A-region} goes through with minimal modifications even when $c_2\geq3$.

We now have
$$
  \frac{1}{f_2(c_2)}\left(2+\frac{z_1f_1'(z_1)}{f_1(z_1)}\right)=\frac{5}{4}\left(2+\frac{z_1}{z_1+1}\right).
$$
The term $\frac{z_1}{z_1+1}$ increases in $z_1$ when $z_1\geq0$. The minimum thus occurs at $z_1=0$, and the maximum at $z_1=\tilde{\phi}_1^{-1}(0)=\frac{\sqrt{28}-2}{3}$. So when $c_2\leq c_2^{th_2}(0)$, we have
$$
  c_2\begin{cases}\geq\frac{1}{f_2(c_2)}\left(2+\frac{z_1f_1'(z_1)}{f_1(z_1)}\right)&\mbox{if }z_1\in[0,s(c_2)],\\<\frac{1}{f_2(c_2)}\left(2+\frac{z_1f_1'(z_1)}{f_1(z_1)}\right)&\mbox{if }z_1\in[s(c_2),\tilde{\phi}_1^{-1}(0)],\end{cases}
$$
where $s(c_2)\in[0,\tilde{\phi}_1^{-1}(0)]$ is the value of $z_1$ when $c_2=\frac{1}{f_2(c_2)}\left(2+\frac{z_1f_1'(z_1)}{f_1(z_1)}\right)$.

Recall from the proof of Lemma \ref{prop:A-region} that we first show that $\bar{\mu}(B)=\bar{\mu}(A\backslash B)=0$ and $\int_{A\backslash B}(z_1-c_1)\,d\bar{\mu}\leq0$, but neither of these proofs use the fact that $c_2f_2(c_2)f_1(z_1)\geq z_1f_1'(z_1)+2f_1(z_1)$ for all $z_1\in[c_1,\tilde{\phi}_1^{-1}(0)]$. So those results hold when $c_2\in[3,c_2^{th_2}(0)]$. I now proceed to prove that $\int_Au\,d\bar{\mu}\leq0$ for all $u$ that is (i) convex, (ii) nondecreasing in $z_1$, and (iii) nonincreasing in $z_2$.

Let $u$ be a convex function that is nondecreasing in $z_1$ and nonincreasing in $z_2$. I now define $\tilde{u}:\left[0,\tilde{\phi}_1^{-1}(0)\right]\rightarrow\mathbb{R}$ to be an affine shift of the convex function obtained by truncating $u$ on the line segment $\left(\left[0,\tilde{\phi}_1^{-1}(0)\right)\times\{c_2+1\}\right)$. Specifically, I define $\tilde{u}(x)=\beta_1u(x,c_2+1)+\beta_2$ for some $\beta_1>0$ and $\beta_2\in\mathbb{R}$, such that $\tilde{u}(0)=0$ and $\tilde{u}(s(c_2))=s(c_2)$. We now have
\begin{align*}
  &\int_{A\backslash B}u\,d\bar{\mu} \\
  &\stackrel{(a)}{\leq}\frac{1}{\beta_1}(\tilde{u}(0)-(0-0))(-0f_1(0)) \\
  &\hspace*{.2in}+\frac{1}{\beta_1}\int_{0}^{s(z_2)}(\tilde{u}(z_1)-(z_1))[c_2f_2(c_2)f_1(z_1)-z_1f_1'(z_1)-2f_1(z_1)]\,dz_1 \\
  &\hspace*{.2in}+\frac{1}{\beta_1}\int_{s(z_2)}^{t}(\tilde{u}(z_1)-(z_1))[c_2f_2(c_2)f_1(z_1)-z_1f_1'(z_1)-2f_1(z_1)]\,dz_1 \\
  &\hspace*{.5in}+\frac{1}{\beta_1}\int_{t}^{\tilde{\phi}_1^{-1}(0)}(\tilde{u}(z_1)-(z_1))[-z_1f_1'(z_1)-2f_1(z_1)]\,dz_1 \\
  &\hspace*{2.5in}+\frac{1}{\beta_1}\int_{A\backslash B}(z_1-0)\,d\bar{\mu}-\frac{\beta_2}{\beta_1}\bar{\mu}(A\backslash B) \\
  &\stackrel{(b)}{\leq} 0
\end{align*}
where
\begin{enumerate}
 \item[(a)] follows from the proof of Lemma \ref{prop:A-region}, and
 \item[(b)] follows from (i) $\tilde{u}(0)=0$; (ii) $\tilde{u}(z)\leq z$ when $z\in[0,s(z_2)]$ and $\tilde{u}(z)\geq z$ when $z\in\left[s(z_2),\tilde{\phi}_1^{-1}(0)\right]$; (iii) $c_2f_2(c_2)f_1(z_1)\geq z_1f_1'(z_1)+2f_1(z_1)$ when $z\in[0,s(z_2)]$ and $c_2f_2(c_2)f_1(z_1)\leq z_1f_1'(z_1)+2f_1(z_1)$ when $z\in[s(z_2),\tilde{\phi}_1^{-1}(0)]$; (iv) $z_1f_1'(z_1)+2f_1(z_1)\geq0$ since $\tilde{\phi}_1$ is increasing; and (v) $\frac{1}{\beta_1}\int_{A\backslash B}z_1\,d\bar{\mu}\leq 0$ and $\bar{\mu}(A\backslash B)=0$.
\end{enumerate}

This proves the result. \qed

\end{document}